\documentclass{article}
\usepackage[a4paper,margin=1in]{geometry}
\usepackage{amsmath,amssymb,amsfonts}
\usepackage{graphicx}
\usepackage{hyperref}
\usepackage{MnSymbol}
\usepackage{algorithm}
\usepackage{algorithmic}
\usepackage{stmaryrd}
\usepackage{tabularx}
\usepackage{longtable}
\usepackage{amsthm}

\newtheorem{thm}{Theorem}
\newtheorem{cor}[thm]{Corollary}

\newtheorem{lm}[thm]{Lemma}
\newtheorem{ex}{Example}
\hypersetup{colorlinks=true, linkcolor=blue, citecolor=blue}

\title{A Truthful Multiunit Profit-Optimal Mechanism for Synthesizing Social Laws\thanks{This is the full and extended version of the conference paper appearing in Proceedings of the 35th International Joint Conference on Artificial Intelligence (IJCAI-26), 2026.}
	}
\author{
	Jun Wu\\
	Nanjing University of Finance and Economics, China \\
	\href{mailto:jwucie@nufe.edu.cn}{jwucie@nufe.edu.cn}
	\and
	Jian Huang\thanks{Corresponding author.}\\
	Nanjing University of Finance and Economics, China \\
	\href{mailto:jianhuangvictor@gmail.com}{jianhuangvictor@gmail.com}\\
    Chongjun Wang\\
	Nanjing University, China \\
	\href{mailto:chjwang@nju.edu.cn}{chjwang@nju.edu.cn}\\
}
\date{} % 不要日期

\begin{document}
	\maketitle
	
	\begin{abstract}
    This paper studies Social Law Synthesis (SLS) in strategic multi-agent environments as a new multi-unit mechanism design problem. We model SLS as a Bayesian single-parameter procurement auction based on Alternating-time Temporal Logic (ATL) and aim to design a truthful, individually rational, and profit-optimal mechanism. We first prove a representation lemma showing that any valuation respecting alternating bisimulation can be compactly expressed as a feature set of ATL formulae. We then reduce payment determination to allocation determination in polynomial time, resolving the irregular payment issue inherent in multi-unit settings. We further show that allocation determination is \(FP^{NP}\)-complete and encode ATL semantics into integer linear programming (ILP) constraints to make the problem tractable with standard solvers. Based on these results, we present the $\mathcal{PO\text{-}ASL}$ mechanism, which is incentive-compatible, individually rational, and maximizes expected profit. Theoretical guarantees and examples confirm that our approach provides an effective and computationally feasible solution for synthesizing optimal social laws under strategic agent behavior.
	\end{abstract}
	
	\section{Introduction}
	Social laws have proven to be a powerful and theoretically elegant approach to coordinating multi-agent systems. Based on modal logic, we can model a multi-agent system as a semantic structure, specify the system’s properties as logical formulas, and then verify these formulas via model checking. A social law is a set of restrictions on the available actions of agents, altering the system’s underlying structure in the hope that desirable properties (objectives) will emerge in the new structure~\cite{Agotnes:AAMAS-07b}. By enabling the expression of valuations for potential social laws and accounting for the cost of implementing each action restriction, SLS can be modeled as an optimization problem ~\cite{Agotnes:AAMAS-10}. However, we find that information incompleteness and the self-interested nature of agents, common in typical strategic environments, can introduce unavoidable obstacles to solving this problem. For example, the costs of action restrictions, which are key optimization parameters, are known only privately to each agent; if we attempt to elicit these costs by “asking” the agents, they may strategically misreport them to maximize their own utility.
	%, or even simply refuse to provide any answer and disobey the selected social laws in the future.  
	
	This challenge naturally falls into the domain of algorithmic mechanism design~\cite{Nisan07}. We adopt the powerful Alternating-time Temporal Logic (ATL)~\cite{Alur:JACM-02} as our foundation, enabling us to model SLS as an auction design problem characterized by the following features:
	%\vspace*{-1mm}
	\begin{itemize}
		\item [1)] \emph{Multi-unit Procurement}: Each agent has multiple available actions, and the social law designer seeks to selectively restrict a subset of actions for each agent.
		\vspace*{-1mm}
		\item [2)] \emph{Bayesian single-parameter setting} : Each agent holds private information about the unit cost of restricting each of its actions, while the prior probability distribution of these unit costs is common knowledge.
		\vspace*{-0.5mm}
		\item [3)] \emph{Profit-Optimality}: The objective is to reliably output the social law that maximizes profit (defined as value minus payment), even when agents behave strategically.
	\end{itemize}
	%\vspace*{-0.5mm}
	The primary challenge stems from the “multi-unit” nature of the problem, which prevents us from deriving a clean solution directly from Myerson’s framework~\cite{Myerson:MOR-81,Elkind:SODA-04}. Indeed, the interplay between the “multi-unit” setting and the sophisticated semantics of ATL renders both the modeling and computation of the mechanism beyond the reach of current state-of-the-art techniques. We aim to develop a solution by carefully analyzing and leveraging ATL’s syntax and semantics.
	
	Our main contributions are summarized as follows:
	\vspace*{-1mm}
	\begin{itemize}
		\item[1)] We formally prove a representation lemma for ATL-based social laws: any valuation function that does not distinguish alternating bisimulation-equivalent structures~\cite{Alur:ICCT-98} can be succinctly represented as a feature set, allowing us to formalize SLS as an auction design problem.
		
		\item[2)] We identify that the critical obstacle from “multi-unit” settings is irregular payment calculation. We address this by designing an efficient polynomial-time reduction of payment determination to allocation determination, making allocation the key problem to solve.
		
		\item[3)] We finally prove that allocation determination is FP$^{NP}$-complete, and then reduce it to ILP in polynomial time by successfully encoding ATL semantics into a set of ILP constraints. This offloads the computational intractability to highly mature ILP solvers, which are widely used for large-scale industrial problems. Since payment determination reduces to allocation determination in polynomial time, we obtain a mechanism that is truthful, individually rational, profit-maximizing, and efficiently computable with ILP solvers.
		
	\end{itemize}
	
	Building on the above work, we propose a self-enforcing social law synthesis method that reliably optimizes expected profit in strategic environments.
	
	This paper can be seen as an attempt to introduce the methodology of algorithmic mechanism design into the traditional logic-based approach to artificial intelligence. We have obtained a framework that not only greatly improves the reliability and robustness of social laws, but also enriches the study of algorithmic mechanism design.
	
	The remainder of this paper is structured as follows: We start with some backgrounds on social laws and algorithmic mechanism design as well as the formal framework of our work. Next, we present an optimal social law auction and prove it maximizes the expected profit within all Bayesian-Nash incentive compatible and individual rational mechanisms, and try to figure out an efficient approach to compute the payment. Then, we prove that the proposed mechanism is computationally intractable and propose a ILP based algorithm for computing the proposed mechanism.  Finally, we present some conclusions and introduce some interesting open problems for future study.
	
	\section{Related Work}
	Our work in this paper can be seen as to solve optimal social law synthesizing problem in the strategic case, where the agents are rational in the sense of game theory and hold some private parameters, based on the methodology of Bayesian mechanism design~\cite{chen:AIJ-22,Elkind:SODA-04,Myerson:MOR-81}, where prior distributions of the private parameters are public information. So, our work mainly relates to social laws and algorithmic mechanism design.
	
	Since different logic systems can provide different semantic structures or symbolic languages, which can respectively model different kinds of mutiagent systems or specify different coordination objectives, trying to use different logics for multiagent systems and corresponding model checking technologies to implement social laws is an emphasized research subject in this area. For example, using simpler CTL language to specify coordination objectives to simplify the theoretical framework of van der Hoek~\cite{van_der_Hoek:Synthese-07} can obtain some further theoretical results~\cite{Agotnes:AAMAS-07b,Agotnes:IJCAI-07b,Agotnes:AAMAS-08a,Agotnes:AAMAS-09}. Using Alternating-time Temporal Epistemic Logic (ATEL) to specify coordination objectives can enable the agents to coordinate on ``knowledge"~\cite{van_der_Hoek:AAMAS-05b}; By specifically designing a new logic named Co-ATL, which put restrictions on the joint actions of agent coalitions instead of the actions of individual agents, can implement a social law which can modify the system more flexibly~\cite{Wu:AAMAS-11,Wang:TSMCB-11}.
	
	How to make autonomous agents consciously obey the constraints of social laws is also a basic issue that has been intensively discussed in this field. Since Shoham and Tennenholtz~\cite{Shoham:AAAI-92,shoham:AIJ-95} proposed social laws, ``all agents will unconditionally obey the selected social laws " became a basic assumption in this field, and how to relax this assumption has been an urgent problem to be solved. ``Why a rational and self-interested agent chooses to obey a social law when knowing that it will obtain lower returns" has been one of the paradoxes involved in the study of social laws. Therefore, ``What is the driving mechanism for the agent to comply with a social law?", ``What will happen if some agents do not comply with the social law?", and ``How robust is the system against agents that do not comply with the social law?" are basic questions that should be answered ~\cite{van_der_Hoek:Synthese-07}. In response to these problems, Binmore~\cite{Binmore-94,Binmore-98} studied social laws from the perspective of game theory, trying to explain why social laws can exist in a multi-agent system composed of self-interested agents; After introducing logics to re-formalize social laws, a series of studies on ``compliance" conducted by T. \AA gotnes et al. \cite{Agotnes:AAMAS-07b,Agotnes:AAMAS-08a,Agotnes:AAMAS-09} are important work in this area. 
	
	The cost of implementing a social law was first studied by ~\cite{Fitoussi:AIJ-00}.  ~\cite{Agotnes:AAMAS-10} later proposed the {\sc Optimal Social Law} problem, which took into account both the cost and benefit of implementing a social law and models computational tree logic (CTL)~\cite{Emerson90,Clarke00}-based social law synthesizing as a combinatorial optimization problem.  \cite{Wu:AAMAS-17a,Wu:JAAMAS-21} extended the {\sc Optimal Social Law} problem to the strategic case, developed solutions based on the framework of optimal auctions~\cite{Myerson:MOR-81,Elkind:SODA-04}. The problem considered in this paper is mainly motivated from this series of work. Compared with the previous work, we consider more general ATL-based social laws, and it turns out to be a brand new \emph{multiunit auction} design problem~\cite{Maskin:EMIG-89,Bhattacharya:TCS-20,Simina:AIJ-23}. Note that, rational behavior of the agents in the sense of game theory actually has already been considered in some work on social laws, e.g., \cite{Agotnes:AAMAS-07b} and so on, but as far as we know they mainly focus on game theoretical analysis instead of on synthesizing which reduces to mechanism design, as we do in this paper. Bulling and Dastani~\cite{Bulling:IJCAI-11,Bulling:AIJ-16,Dell:BNAIC-17} introduced the insightful methodology of normative mechanism design, which is at first glance very similar to ours. But in fact there are fundamental differences especially in the motivation. We mainly aim to solve the optimal social law synthesizing problem where the agents are rational and have some private parameters under the framework of algorithmic mechanism design, while the concept of mechanism design in Bulling and Dastani's framework is different from the traditional situation, \emph{i.e.},  it assumes that the preferences of each agent are public information, and there is actually no concept of private parameters. Another difference is that this paper have derived an algorithm based on integer programming via analyzing and utilizing the characteristics of ATL semantics, and strictly proved its correctness, while most of the existing related work ends with the discussion of computational complexity~\cite{van_der_Hoek:Synthese-07}. 
	
	Game Theory and Mechanism Design are traditional research directions of Economics. In short, game is a mathematical model for strategical interactions of rational individuals; Game Theory provides a set of method for analyzing games; mechanism design provides a set of methods for designing games. Under the effect of some fundamental theorems such as the revelation principle~\cite{Myerson:MOR-81}, mechanism design is often equivalent to ``Truthful Auction Design", that is, to design auction rules that incentivize bidders to truthfully report their private information. The original motivation of the algorithmic mechanism design research is to integrate the concepts and methodologies of mechanism design into the theoretical framework of traditional algorithms (thereby obtaining an ``algorithmic mechanism"), so that it can solve the problem and at the same time provide correct incentives, and therefore applicable to the related optimization problems of multi-agent systems composed of rational individuals~\cite{Nisan:GEB-01,Nisan:STOC-99,Nisan07}. The work of this paper is an attempt to put the above research ideas into practice in the problem of social laws synthesizing, and further expands the application field of algorithmic mechanism design. Optimization objective is an important dimension to distinguish the problem of mechanism design. ``Efficiency mechanism design problems" with the goal of maximizing social welfare, such as path auctions~\cite{Archer:SODA-02,Talwar:STACS-03,Archer:TOA-04,Elkind:SODA-04,Calinescu:TCS-15,zhang:ECAI-16}, minimum spanning tree auctions~\cite{Bikhchandani-02,Garg:TR-02,Talwar:STACS-03}, and combinatorial auctions which derived from the spectrum auctions of FCC since the 1990s~\cite{Sandholm:AIJ-02}, etc., can all be solved by the well-known Vickrey-Clarke-Groves (VCG) mechanism~\cite{Vickrey-61,Clarke-71,Groves-73}. Although this mechanism does not have polynomial time implementation~\cite{Nisan:JAIR-07}, sometimes it is possible to design computationally feasible approximation mechanisms to obtain social welfare close to the global optimum~\cite{Nisan:JAIR-07,Dobzinski:STOC-07}. ``Optimal mechanism design" in the narrow sense refers to mechanism design problems with the optimization goal of revenue maximization~\cite{Myerson:MOR-81}. In a broad sense, it can refer to any mechanism design problems with optimization goals or constraints related to the auctioneer’s payment~\cite{Hartline06,Wu:AAAI-20,Wu:AAMAS-17a,Wu:AAMAS-19}.
	
	On the side of algorithm mechanism design, our work especially relates to the class of work on graphs, e.g., path auctions, spanning-tree auctions, auction design on social networks~\cite{Li:AIJ:2022,Meng:GEB-2022,Li:AAAI-17}, and so on, since social law synthesis is a problem on Kripke structures, which is also intrinsically graphs. Among this class of work, our work is most close to Elkind et al.'s work~\cite{Elkind:SODA-04} on path auctions, since it proposes a methodology that solves the payment minimization problem in path auctions in the Bayesian case based on the framework of Myerson's optimal auctions~\cite{Myerson:MOR-81}. But our work is different from~\cite{Elkind:SODA-04} on at least two aspects: firstly, we study a different optimization objective, \emph{i.e.}, $(value-payment)$ vs. $payment$; secondly, we study a multiunit setting instead of a singe-unit setting. Actually, profit-maximization has been studied in \emph{competitive auctions}~\cite{Goldberg:SODA-01,Goldberg:GEB-06,Chen:STOC-14,Cary:SODA-08}, but their settings are totally different, \emph{i.e.}, they focus on buying (or selling) multiple identical items in the prior-free case where no bidder's private parameter distribution is available. Our problem also belongs to the class of multi-unit mechanism design problems~\cite{Maskin:EMIG-89,Bhattacharya:TCS-20,Simina:AIJ-23,Hagen2023Collusion,Gautier2023Multi,Potfer2024Improved,Bougt2025Revenue,Aberg2025Quantifying}, especially multi-unit procurement auction design problem~\cite{chan2014truthful,Wu:AAMAS-19,Wu:AAAI-20}, where the bidders possess multiple units of an item and the auctioneer buy a certain amount of items from each agent to optimize a given objective. It is well known that the case of multi-unit can usually create some major challenges to both allocation and payment determination.
	
	\section{Preliminaries}
	
	In this section, we introduce our problem setting. An illustrative example is provided in the appendix (\emph{cf.} Example 1).
	
	\subsection{Optimal social law synthesizing problem}
	Social laws can be formalized based on ATL, which adopts Concurrent Game Structures ({\sc cgs}) as the semantic structure. 
	
	\vspace*{1mm}
	\noindent\textbf{Cost-aware concurrent game structure, CCGS:} % Concurrent game structures, which is evolved from Kripke structures~\cite{Emerson90}, is a universal model for open systems. 
	We define {\sc ccgs} as an extended {\sc cgs} where a marginal cost function which specifies each agent's unit cost is added. Formally, a {\sc ccgs} is a tuple $S=\langle k, Q,  q_s, \Pi, \pi, \varepsilon, c, \delta \rangle$ with:
	\begin{itemize}
		
		\item A set $Ag=\{1,...,k\}$ of agents, a finite set $Q$ of states, an initial state $q_s\in Q$, a finite set $\Pi$ of propositions, a labeling function $\pi$ specifying for each state $q \in Q$, a set $\pi(q) \subseteq \Pi$ of true propositions, and an action function $\varepsilon$. For each agent $a\in Ag$ and each state $q\in Q$, a non-empty set $\varepsilon_a(q)$ of actions is available to agent $a$ at state $q$. A \emph{joint action} of all the agents at state $q$ is a tuple $\langle j_1,...,j_k \rangle$ such that $j_a \in \varepsilon_a(q)$ for each agent $a$. We write $D(q)$ for the joint action space $\varepsilon_1(q)\times ... \times \varepsilon_k(q)$.
		
		\item A unit cost function $c$. For each agent $i\in Ag$, if $n\in \mathbb{N}$ of her actions are restricted, then the resulted cost to her is $n\cdot c(i)$. We assume $c(i)$ to be agent $i$'s private information, but as public information $c(1),...,c(k)$ are independent continuous random variables $x_1,...,x_k$ respectively, where each $x_i$ is drawn from the interval $X_i = [0,\omega_i]$ subject to a probability density function $f_i$.
		
		\item A transition function $\delta$. For each state $q\in Q$ and each joint action $\langle j_1,...,j_k \rangle\in D(q)$, $\delta(q, j_1,...,j_k)\in Q$ is the next state if every agent $a\in Ag$ chooses action $j_a$. 
		
	\end{itemize}
	
	A {\sc ccgs} intrinsically specifies the running of a multiagent system: each agent chooses an action simultaneously, and the actions chosen by all the agents determine the next state.   
	%So, for a coalition formed by part of the agents, it can only partially determine the system transition path, due to the uncertainty of strategy selection of other agents. %The logic {\sc atl} is designed to specify and verify this kind of strategic reliability of coalitions.
	
	Note that, our cost model contains an assumption of \emph{constant marginal cost per action} restriction. It actually tries to generalize the “\emph{minimality}” objective in SLS proposed by~\cite{Fitoussi:AIJ-00}, which requires
	the social law to restrict as few actions as possible, and thus include least “side effects”. Our cost model makes sense especially in cases when it is hard for the agents to determine the actual cost for restricting each individual action. Actually, it is easier for them to specify a rough or average cost per action restriction, and in this way we can well capture the agents’ preference on fewer restrictions. We will study more general cost models in our future work.
	
	%Frankly, allowing the agents to specify different costs on different actions must be a more reasonable choice. However, it will result in a multi-parameter optimal mechanism design problem, which is still an open problem in economics. We hope that we can further study these in the future.
	
	\vspace*{1mm}
	\noindent\textbf{Alternating-time Temporal logic, ATL:} We can adopt {\sc atl} for specifying and verifying CCGSs. 
	The language of {\sc atl}, denoted as $\mathcal{L}_{ATL}$, is generated by the
	following grammar:$$\varphi ::= p|\neg
	\varphi|\varphi_1\vee\varphi_2|\llangle
	A\rrangle\bigcirc\varphi|\llangle A\rrangle\Box\varphi|\llangle
	A\rrangle\varphi_1\mathcal{U}\varphi_2,$$ where $p\in\Pi$ is a
	proposition, and $A\subseteq Ag$ is a coalition. 
	%We assume that we are studying a fixed set $Ag$ of agents and a fixed set $\Pi$ of propositions.

	The symbols $\bigcirc, \Box, \mathcal{U}$ are the conventional linear-time temporal operators ``\textit{next-time}", ``\textit{always}", ``\textit{until}", respectively. For a set $A\subseteq Ag$, the operator $\llangle A\rrangle$ is a path selection operator, with  $\llangle A\rrangle\psi$ aims to denote ``\textit{coalition $A$ can reliably enforce property $\psi$"}. Given a {\sc atl} formula $\varphi$, we use the notation $S, q\vDash \varphi$ to mean ``\emph{$\varphi$ is satisfied in the state $q$ of $S$}". 
	
	\vspace*{1mm}
	\noindent\textbf{Social laws:} A social law is an action constraint $\eta$  for a {\sc ccgs} $S=\langle k, Q,  q_s, \Pi, \pi, \varepsilon, c, \delta \rangle$ defined as a function $\forall i\in Ag, q\in Q: \eta_i(q)\subset \varepsilon_i(q)$.
	The new structure obtained by \emph{implementing} social law $\eta$ on $S$, denoted $S\dag\eta$, is the structure $S' = \langle k, Q, q_s, \Pi, \pi, \varepsilon', c, \delta'\rangle$ where $\forall i\in Ag, q\in Q: \varepsilon'_i(q) = \varepsilon_i(q)\setminus \eta_i(q)$, and $\delta'$ is obtained from $\delta$ by restricting the domain of definition according to $\varepsilon'$. Intuitively, the implementing of a social law on a {\sc cgs} is deleting from it all the  actions restricted by this social law. 
	%We require the actions restricted for every agent to be a proper subset to guarantee nonempty available action sets after implementing social laws. 
	We let $\mathcal{SL}_S$ denote the set of all the possible social laws for $S$, and let $\mathcal{W}_S$ denote the set of possible {\sc ccgs}s that can be obtained from $S$ by implementing a social law, that is, $\mathcal{W}_S = \{S'~|~\exists \eta\in\mathcal{SL}_S: S\dag \eta = S'\}. $
	%So, actually $\mathcal{W}_S$ is all the possible results (structures) that can be obtained via implementing social laws, and 
	
	%Example 1 (in the appendix) shows that managing a system consisting of two agents, who share a data storage, can be modeled as a Social Law Synthesizing (SLS) problem.
	
	\vspace*{1mm}
	\noindent\textbf{Demand valuation function, DVF:} Each of the structures in $\mathcal{W}_S$ has a possibly different value, which can be specified as a {\sc dvf} $v: \mathcal{W}_S\rightarrow \mathbb{R}^+.$ %In fact, the {\sc dvf} defines a preference of the designer on the structures that can be obtained.
	From the perspective of each agent $i$, implementing a social law $\eta$ results in a cost $C_i=c(i)\cdot\sum_{q\in Q} |\eta_i(q)|$. If the value of $c(i)$ was known, we can pay exactly $C_i$ units of money to agent $i$ to compensate its cost. Therefore the overall payment to all the agent should be $\sum_{i\in Ag} C_i$, and the profit of implementing social law $\eta$ will be
	\begin{equation}
		g(\eta) = v(S\dag\eta) - \sum_{i\in Ag} (c(i)\cdot\sum_{q\in Q} |\eta_i(q)|)
	\end{equation}
	Finally, profit optimal social law synthesizing can be modeled as the optimization problem of finding a social law that maximizes the function $g(\eta)$. 
	%But now notice that, $c(i)$ is only known by agent $i$ herself, what we can obtain is only $c(i)$'s probability density function $f_i$. In such a setting, it is obviously impossible for us to directly optimize $g(\eta)$. How to select the social law and pay the agents to obtain the best profit and at the same time keep the agents well-motivated becomes a challenge. We will try to find out a solution based on algorithmic mechanism design.
	
	\subsection{Compactly represent the DVF}
	Based on the work in\cite{Alur:ICCT-98} and~\cite{Agotnes:AAMAS-10}, we can define a bisimulation relation between the state sets of two arbitrary structures in $\mathcal{W}_S$.
	
	\vspace*{1mm}
	\noindent\textbf{Alternating bisimulation relation:} 
	Given two cost-aware concurrent game structures (CCGSs) $S=\langle k, Q,  q_s, \Pi, \pi, \varepsilon, c,
	\delta \rangle$ and $S'=\langle k', Q',  q'_s, \Pi', \pi', \varepsilon', c',
	\delta' \rangle$ based on the same agent set $Ag = \{1,...,k\}$, any binary relation $Z\subseteq Q\times Q'$ is called an alternating bisimulation relation if $q_s Z q'_s$ and for any $q Z q'$, we have
	
	\begin{itemize}
		\item[1)] $\pi(q) = \pi'(q')$;
		
		\item[2)] For each $A\subseteq Ag$ and every $\vec{m}_A \in D_A(q)$, there is a $\vec{m}'_A \in D'_A(q')$ satisfying for all $q'_x\in out(q', \vec{m}'_A)$, there is a $q_x\in out(q,\vec{m}_A)$ satisfying $q_x Z q'_x$; and
		
		\item[3)] For each $A\subseteq Ag$ and every $\vec{m}'_A \in D'_A(q')$, there is a $\vec{m}_A \in D_A(q)$ satisfying for all $q_x\in out(q, \vec{m}_A)$, there is a $q'_x\in out(q',\vec{m}'_A)$ satisfying $q_x Z q'_x$.
	\end{itemize}

	%Intuitively, bisimulation relations represent a very nature equivalence concept. 
	
	Note that, according to ATL semantics, in any state $q\in Q$, $D_A(q)$ denotes the joint action space of coalition $A$,  and $out(q, \vec{m}_A)$ denotes the set of possible next states when coalition $A$ chooses the joint action $\vec{m}_A$. The readers can consult the appendix or literature~\cite{Alur:JACM-02} for more details about ATL syntax and semantics. We say two structures $S$ and $S'$ are \emph{alternating bisimulation equivalent}, denoted as $S \leftrightarroweq S'$, if there is a bisimulation relation between $S$ and $S'$. 
	
	\vspace*{1mm}
	\noindent\textbf{The representation lemma:} Based on the following result, we can formally prove the subsequent representation lemma.
	
	\begin{thm}\cite{Alur:ICCT-98} For two concurrent game structures $S$ and $S'$, if $S \leftrightarroweq S'$ and $q Z q'$, then for an arbitrary ATL function $\varphi$, we have $S, q\vDash \varphi~~\text{iff}~~S', q'\vDash \varphi.$
	\end{thm}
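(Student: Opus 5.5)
The proof goes by structural induction on the ATL formula $\varphi$, simultaneously for all pairs $q Z q'$. Throughout I use the standard ATL semantics from \cite{Alur:JACM-02}: $\llangle A\rrangle\psi$ holds at $q$ iff coalition $A$ has a collective strategy $F_A$ such that every computation $\lambda\in out(q,F_A)$ satisfies $\psi$.

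\textbf{Propositional cases.}
\begin{itemize}
\item If $\varphi=p$, the claim follows from clause 1) of the bisimulation, since $\pi(q)=\pi'(q')$.
\item Negation and disjunction follow directly from the induction hypothesis. The hypothesis is stated for all related pairs, so the equivalence is preserved under $\neg$ and $\vee$.
\end{itemize}

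\textbf{Next.} Let $\varphi=\llangle A\rrangle\bigcirc\psi$ and suppose $S,q\vDash\varphi$.
\begin{itemize}
\item Then some $\vec m_A\in D_A(q)$ has every successor $q_x\in out(q,\vec m_A)$ satisfying $\psi$.
\item Clause 2) gives $\vec m'_A\in D'_A(q')$ such that every $q'_x\in out(q',\vec m'_A)$ is $Z$-related to some such $q_x$.
\item By the induction hypothesis each such $q'_x$ satisfies $\psi$, so $S',q'\vDash\varphi$.
\item The converse direction is symmetric, using clause 3).
\end{itemize}

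\textbf{Always and until.} Here one-step matching is not enough: a whole strategy must be transferred. Suppose $F_A$ witnesses $\llangle A\rrangle\Box\psi$ (or $\psi_1\mathcal U\psi_2$) at $q$. I build a strategy $F'_A$ in $S'$ inductively along histories. Alongside each finite history $h'=q'_0\cdots q'_n$ in $S'$ consistent with $F'_A$, I maintain a \emph{shadow} history $h=q_0\cdots q_n$ in $S$ such that:
\begin{itemize}
\item $h$ is consistent with $F_A$, and
\item $q_jZq'_j$ holds pointwise.
\end{itemize}

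The construction proceeds step by step:
\begin{itemize}
\item \emph{Choosing the move.} At $h'$, let $\vec m_A=F_A(h)$ and use clause 2) to pick $\vec m'_A$; set $F'_A(h')=\vec m'_A$.
\item \emph{Extending the shadow.} For any successor $q'_{n+1}\in out(q'_n,\vec m'_A)$, clause 2) supplies some $q_{n+1}\in out(q_n,\vec m_A)$ with $q_{n+1}Zq'_{n+1}$. Fix one such choice, using the axiom of choice or a fixed enumeration of $Q$, and extend $h$ to $hq_{n+1}$.
\end{itemize}

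Every infinite computation $\lambda'$ in $out(q',F'_A)$ then has a shadow computation $\lambda\in out(q,F_A)$ that is pointwise related to it. Since $\lambda$ satisfies $\Box\psi$ (respectively $\psi_1\mathcal U\psi_2$), the induction hypothesis applied positionwise to $\psi$ (respectively $\psi_1,\psi_2$) shows that $\lambda'$ satisfies the same path formula. The converse direction uses clause 3) symmetrically.

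The main obstacle is making this shadow construction well-defined. Strategies depend on whole histories, and distinct $S'$-histories may map to shadows that must themselves be consistent with $F_A$. Defining the shadow map deterministically by recursion on history length resolves this. It relies on $Z$ being total only on reachable related pairs, which is guaranteed because each step starts from a related pair. One must also check that the coalition semantics $out(q,\vec m_A)$ quantify over all completions by $Ag\setminus A$, which is exactly what clauses 2) and 3) handle.
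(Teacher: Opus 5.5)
Your proof is correct. The paper gives no proof of this statement: it is imported from Alur et al.\ (1998) and restated in the appendix without proof, so there is no in-paper argument to compare against.

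Your route is structural induction over all $Z$-related pairs. The strategic modalities are handled by a history-by-history strategy transfer that keeps a pointwise $Z$-related shadow history consistent with the given strategy, using clause 2) in one direction and clause 3) in the other. This is essentially the standard strategy-transfer argument for alternating (bi)simulation. It uses exactly the clauses of the paper's CGS-style definition, stated via $out(q,\vec m_A)$ rather than the original alternating-transition-system formulation. So your argument also confirms that the paper's adapted definition suffices for the theorem, which the bare citation does not.

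Three small remarks.
\begin{itemize}
\item The shadow construction already covers the $\llangle A\rrangle\bigcirc$ case: compare $\lambda[1]$ with $\lambda'[1]$. Your separate one-step argument is therefore redundant. It also tacitly uses the equivalence between the strategy semantics of $\llangle A\rrangle\bigcirc\psi$ and the existence of a single good $A$-action. That equivalence holds because action sets are nonempty, so every successor extends to a full outcome.
\item Because your construction produces pointwise-related outcome paths, it actually proves preservation of all of ATL$^*$, not just ATL.
\item Your remark about $Z$ being ``total only on reachable related pairs'' is misleading. No totality is used anywhere; only the invariant $q_jZq'_j$ along the shadow matters.
\end{itemize}

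A more local alternative uses the fixpoint characterizations $\llangle A\rrangle\Box\psi\equiv\nu X.(\psi\wedge\llangle A\rrangle\bigcirc X)$ and $\llangle A\rrangle\psi\mathcal{U}\chi\equiv\mu X.(\chi\vee(\psi\wedge\llangle A\rrangle\bigcirc X))$. For $\Box$, the set of $S'$-states $Z$-related to some $\Box$-winning state of $S$ is a post-fixed point in $S'$, so coinduction applies. The until case goes by induction on approximant stages. Both reduce to your one-step clause argument. That route avoids building infinite strategies, but it presupposes the fixpoint characterization of the perfect-recall semantics, which your direct construction does not need.
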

	
	\begin{lm}[Representation Lemma]\label{lm-rl} The following two statements are equivalent:
		\begin{itemize}
			\item[1)] $\forall S_1, S_2$: $S \leftrightarroweq S'$ implies $v(S) = v(S')$;
			\item[2)] $\exists$ feature set $\mathcal{F} = \{(\varphi_1,c_1),...,(\varphi_k,c_k)\}$, where $k\in\mathbb{N},\forall j\in Ag: \varphi_j\in\mathcal{L}_{ATL}, c_j\in\mathbb{R}^+$, and for all $S\in\mathcal{W}_S$:
			$$v(S)= \sigma(\mathcal{F},S) = \sum_{(\varphi_j,c_j)\in\mathcal{F}; S,q_s\vDash\varphi_j}c_j.$$
		\end{itemize}
	\end{lm}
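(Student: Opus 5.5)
The direction $2)\Rightarrow 1)$ follows directly from the bisimulation-invariance theorem of \cite{Alur:ICCT-98} stated above. Take $S,S'\in\mathcal{W}_S$ with $S\leftrightarroweq S'$, witnessed by a relation $Z$. By definition $q_s Z q'_s$, and both structures share the initial state $q_s$. So for every feature $(\varphi_j,c_j)\in\mathcal{F}$ the theorem gives $S,q_s\vDash\varphi_j$ iff $S',q_s\vDash\varphi_j$. The two sums defining $\sigma(\mathcal{F},S)$ and $\sigma(\mathcal{F},S')$ therefore range over exactly the same features, and $v(S)=v(S')$.

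For $1)\Rightarrow 2)$ I will exploit the finiteness of $\mathcal{W}_S$. The structure $S$ is finite, and each social law is chosen from finitely many subsets of $\varepsilon_i(q)$, so $\mathcal{W}_S$ is finite. Alternating bisimulation equivalence is an equivalence relation: the identity is a bisimulation, the converse of a bisimulation is one, and so is a relational composition. Hence it partitions $\mathcal{W}_S$ into finitely many classes $E_1,\dots,E_m$, and by statement 1) the function $v$ is constant on each class, taking value $v_\ell$ on $E_\ell$.

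The plan is to find, for each class $E_\ell$, an ATL formula $\chi_\ell$ that is true at $q_s$ exactly in the structures of $E_\ell$. The features are then $(\chi_\ell,v_\ell)$ for $\ell=1,\dots,m$. Since each $S$ satisfies exactly one $\chi_\ell$, we get $\sigma(\mathcal{F},S)=v(S)$. Positivity of the $c_j$ holds because $v$ maps into $\mathbb{R}^+$; a class with $v_\ell=0$ can simply be dropped if $\mathbb{R}^+$ excludes $0$.

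The main obstacle is constructing the characteristic formulas $\chi_\ell$. This needs the converse of the cited theorem, a Hennessy--Milner style property: if two finite structures are not alternating bisimilar, some ATL formula distinguishes their initial states. I would first try to derive it by the standard argument on finite image-finite structures. Define $Z^*$ as the relation of satisfying the same ATL formulas, and show it is itself an alternating bisimulation. Clause 1) of the definition comes from the propositions. For clauses 2) and 3), suppose clause 2) fails for some coalition $A$ and move $\vec m_A$. Then for every move $\vec m'_A$ there is a successor $q'_x\in out(q',\vec m'_A)$ that some formula separates from every element of $out(q,\vec m_A)$. Taking finite conjunctions and disjunctions over the finitely many successors and moves builds a $\psi$ such that $\llangle A\rrangle\bigcirc\psi$ holds at $q$ but fails at $q'$, a contradiction. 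This next-time fragment of ATL should suffice.

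With the distinguishing formulas in hand, pick a representative $S_\ell$ of each class. For each pair $\ell\neq\ell'$ choose a formula $\theta_{\ell\ell'}$ true at $(S_\ell,q_s)$ and false at $(S_{\ell'},q_s)$, and set $\chi_\ell=\bigwedge_{\ell'\neq\ell}\theta_{\ell\ell'}$. By the invariance theorem, $\chi_\ell$ holds throughout $E_\ell$ and fails in every other class, which completes the construction.
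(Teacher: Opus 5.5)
Your proposal is correct and follows essentially the same route as the paper: 2)$\Rightarrow$1) via the bisimulation-invariance theorem, and 1)$\Rightarrow$2) by partitioning the finite set $\mathcal{W}_S$ into bisimulation classes and taking, for each class, the conjunction of pairwise distinguishing formulas as its feature, weighted by the class's value. The one real difference is that you prove the existence of the distinguishing formulas, via a Hennessy--Milner argument showing that ATL-equivalence over the next-time fragment is itself an alternating bisimulation on finite structures; the paper simply asserts this step without justification.
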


	The above results imply that any {\sc dvf}s that do not distinguish between alternating bisimulation-equivalent structures can be equivalently and compactly represented as feature sets. %The requirement for the indiscrimination on alternating bisimulation equivalent structures is essentially rooted in the limitation of the expressive power of the ATL language. However, 
	We will focus on this type of {\sc dvf}s since it well reflects mathematical equivalence and is suitable for most scenarios. Note that, the proofs for our results can be found in the appendix.

	%Social laws are always synthesized to fulfill a predefined objective. The most basic objective can be denoted by the tuple $\langle q,\varphi\rangle$, meaning to make the ATL formula $\varphi$ satisfied in state $q$ of the new structure. In the naive non-strategic case, social laws are realized as hard constraints imposed on the agents, so we can simply find out an effective $\eta$, which satisfy $S\dag\eta, q\vDash\varphi$, from its space by searching. van der Hoek et al.~\cite{van_der_Hoek:Synthese-07} has systematically studied this case, and has established that verifying the effectiveness of a social law is equivalent to the ATL model checking problem and so is PTime-complete, and checking whether there is an effective social law is NP-complete. Moreover, the results in ~\cite{Wu:AAMAS-11} show the problem of synthesizing an effective social law is FP$^{NP}$-complete. 
	
\section{Mechanism Design}
\subsection{Social law Synthesis via auctions}

We select the social law by running a procurement auction: firstly we announce a mechanism consisting of an \emph{allocation function} $\hat{R}:X\rightarrow \mathcal{SL}_S$ and, for each agent $i$, a \emph{payment function} $P_i:X\rightarrow \mathbb{R}^+$. We then collect the bids (\emph{i.e.}, cost reports) from the agents to obtain the bid profile $x$, and finally select the social law $\eta=\hat{R}(x)$ and pay  $P_i(x)$ to each agent $i$. Note that, the function  $\hat{R}$ can be equivalently specified as $\forall i,q,a: R_{i:a}^q(x)= 1$ (indicating that agent $i$'s action $a$ in state $q$ is selected) if $a\in\eta(i,q)$,  $R_{i:a}^q(x) = 0$ otherwise. Now, we let $R_i(x)$ be the total number of agent $i$'s restricted actions when the bid profile is $x$, i.e.,
\begin{equation}
	R_i(x) = \sum_{q\in Q}\sum_{a\in\epsilon_i(q)}R^q_{i:a}(x)
\end{equation}

%\vspace*{1mm}
\noindent\textbf{Bayesian games:} After a mechanism $\langle R,P\rangle$ is determined, it applies to any instances of agent set $A$ where real costs are randomly drawn from  $X$, the system intrinsically becomes a Bayesian game $\langle A, (X_i, f_i, B_i, u_i)_{i\in A}\rangle$ where 
\begin{itemize}
	\item $X_i$ now denotes not only the cost space but also the action space for agent $i$, \emph{i.e.}, each $x_i\in X_i$ also denote the action ``bidding/reporting $x_i$";
	\item $B_i$ is the strategy space of agent $i$, consisting of all functions of the form $b:X_i\rightarrow X_i$, which allows agent i to report its cost strategically.
	\item $u_i:X_i\times X\rightarrow \mathbb{R}$ is agent $i$'s utility function. When agent $i$'s true cost is $x_i$,  a bid profile $x$ yields the utility: \begin{equation}\label{eq-uid}
		u_i(x_i,x) = P_i(x) - x_i\cdot R_i(x)
	\end{equation}
\end{itemize}
The expected utility achieved by agent $i$ whose unit cost is $x_i\in X_i$ when submitting a bid $x'_i\in X_i$ is
\begin{equation}\label{eq-baruid}
	\bar{u}_i(x_i,x'_i) = \mathbb{E}_{x_{-i}\in X_{-i}}[u_i(x_i,(x'_i,x_{-i}))]
\end{equation} 
%\vspace*{1mm}
\noindent\textbf{Bayesian-Nash Equilibrium (BNE):} A strategy profile $(b_1,\cdots, b_n)$ is a BNE if and only if
\begin{equation}
	\forall i, x_i, b'_{i}\neq b_{i}: \bar{u}_i(x_i,b(x_i))\geq \bar{u}_i(x_i,b'(x_i))
\end{equation}
A mechanism is called \emph{Bayesian-Nash Incentive Compatible (BNIC)} if and only if truthful bidding(\emph{i.e.}, $b_i(x_i) = x_i$ for all $i$ and $x_i$) constitutes a BNE. According to the famous \emph{Revelation Principle}~\cite{Nisan:GEB-01}, we can restrict the search space to BNIC mechanisms without loss of generality. 

When agent $i$ bids $x'_i$, we let $r^q_{i:a}(x'_i)$, $r_i(x'_i)$ and $p_i(x'_i)$ be the \emph{probability of agent $i$'s action $a$ in state $q$ being selected}, the \emph{expected number of agent $i$'s selected actions} and the \emph{expected amount of payment to agent $i$}, respectively, then 
\begin{equation}\label{eq-hi}
	r^q_{i:a}(x'_i) = \int_{X_{-i}}R^q_{i:a}(x'_i,x_{-i})f_{-i}(x_{-i})dx_{-i}
\end{equation}
\vspace{-2mm}
\begin{equation}\label{eq-hix}
	r_i(x'_i)\! =\!\!\! \sum_{q\in Q}\!\sum_{a\in\epsilon_i(q)}\!\!\!r^q_{i:a}(x'_i)\! =\!\!\! \int_{X_{-i}}\!\!\!\!\!R_i(x'_i,x_{-i})f_{-i}(x_{-i})dx_{-i}
\end{equation}
\vspace{-2mm}
\begin{equation}\label{eq-pi}
	p_i(x'_i) = \int_{X_{-i}}P_i(x'_i,x_{-i})f_{-i}(x_{-i})dx_{-i}
\end{equation}
By equations (\ref{eq-baruid})(\ref{eq-uid})(\ref{eq-hi}) and (\ref{eq-pi}) we can further obtain
\begin{eqnarray}\label{eq-barui2}
	%	\bar{u}_i(x_i,x'_i) = p_i(x'_i) -\sum_{q\in Q}\sum_{a\in\epsilon_i(q)}r^q_{i:a}(x'_i) x_i\\
	\bar{u}_i(x_i,x'_i) = p_i(x'_i) - r_i(x'_i)x_i\label{eq-barui3}
\end{eqnarray}
Thus, the expected utility of agent $i$ when bidding truthfully is:
\begin{equation}\label{eq-Ui}
	\hat{u}_i(x_i) = \bar{u}_i(x_i,x_i)  = p_i(x_i) -r_i(x_i)x_i
\end{equation}
%Implementing a social law by making every agent voluntarily obey the restrictions rather than by ``hard constraints" is one of the challenging issues in social law research. 
Ensuring agents voluntarily comply with the restrictions (rather than via “hard constraints”) is naturally captured by the game-theoretic concept of \emph{Individual Rationality} (IR), which requires that each agent receives a non-negative expected utility after the social law is implemented. For BNIC mechanisms, IR is equivalent to:
%nonnegative expect truthful-bidding  utilities, \emph{i.e.},
\begin{equation}\label{eq-ir}
	\forall i\in A, x_i\in X_i : \hat{u}_i(x_i)\geq 0
\end{equation}

%\vspace*{1mm}
\noindent\textbf{Profit:} For any cost profile  $x\in X$, the selected social law and the payment to agent $i$ will be $\hat{R}(x)$ and $P_i(x)$ respectively.  For a designer with feature set $\mathcal{F}$, the \emph{profit} is
%the value of the structure obtained by implementing the selected social law  is $v_{\mathcal{F}}(S\dag\hat{R}(x))$,the total payment is $\sum_{i\in A}P_i(x)$ and therefore
\begin{equation}
	\sigma(x) = v_{\mathcal{F}}(S\dag\hat{R}(x)) - \sum_{i\in A}P_i(x)
\end{equation}
The expected profit of the designer over the cost profile space $X$ is thus:
\begin{equation}\label{eq-Ep}
	\mathbb{E}_{x\in X}[\sigma(x)]\! =\! \int_X\Big(v_{\mathcal{F}}(S\dag\hat{R}(x))-\sum_{i\in A}P_i(x)\Big)f(x)dx
\end{equation}
The aim of this paper is to find a BNIC and IR mechanism $\langle R^*,P^*\rangle$ that maximize the expected profit.

\subsection{Truthful multiunit profit-optimal mechanisms}
First, we prove that the allocation and payment of any BNIC mechanism can be characterized as follows:

%\begin{lm}\label{lm-U1}
%	A mechanism $\langle H,P\rangle$ is BNIC iff $\forall$ $i\in A$, $x_i, \gamma_i\in X_i:$ $\hat{u}_i(x_i)-\hat{u}_i(\gamma_i)\geq r_i(\gamma_i)(\gamma_i-x_i)$ .
%\end{lm}
\begin{lm}\label{lm-U2}
	A mechanism $\langle R,P\rangle$ is BNIC iff $\forall x_i\in X_i$:
	\begin{itemize}
		\item[1)]   $r_i(x_i)$ is monotone nonincreasing; and
		\item[2)] 	the expected payment to each agent satisfies:
		\vspace{-1mm}
		\begin{equation}
			\label{eq-pixipi0} p_i(x_i) = p_i(0) + x_i r_i(x_i) - \int_{0}^{x_i} r_i(t_i)dt_i
		\end{equation}.
	\end{itemize}	
\end{lm}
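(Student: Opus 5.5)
This is the standard Myerson-type characterization, adapted to procurement with a per-unit cost. The interim quantities are $r_i$ and $p_i$, and the expected utility (\ref{eq-barui3}) is affine in the true cost $x_i$ with slope $-r_i(x'_i)$. The plan is to work throughout with $\hat{u}_i$ from (\ref{eq-Ui}) and prove the two directions separately.

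\emph{Necessity.} Assume $\langle R,P\rangle$ is BNIC and fix $x_i,\gamma_i\in X_i$. Truthfulness at type $x_i$ against the deviation $\gamma_i$ gives $\hat{u}_i(x_i)\geq \bar{u}_i(x_i,\gamma_i)=p_i(\gamma_i)-r_i(\gamma_i)x_i$. Rewriting the right-hand side as $\hat{u}_i(\gamma_i)+r_i(\gamma_i)(\gamma_i-x_i)$ yields
\[
\hat{u}_i(x_i)-\hat{u}_i(\gamma_i)\geq r_i(\gamma_i)(\gamma_i-x_i).
\]
Exchanging the roles of $x_i$ and $\gamma_i$ gives
\[
r_i(x_i)(\gamma_i-x_i)\geq \hat{u}_i(x_i)-\hat{u}_i(\gamma_i)\geq r_i(\gamma_i)(\gamma_i-x_i).
\]
Taking $\gamma_i>x_i$ and dividing by $\gamma_i-x_i$ shows $r_i(x_i)\geq r_i(\gamma_i)$, which is condition 1).

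The same sandwich shows that $\hat{u}_i$ is Lipschitz on $[0,\omega_i]$, since $r_i$ is bounded by the total number of agent $i$'s actions. Hence $\hat{u}_i$ is absolutely continuous and differentiable almost everywhere with $\hat{u}_i'(x_i)=-r_i(x_i)$. Alternatively, $\hat{u}_i(x_i)=\max_{x'_i}\bar{u}_i(x_i,x'_i)$ is a supremum of affine functions, hence convex, with subgradient $-r_i(x_i)$. Integrating from $0$ gives
\[
\hat{u}_i(x_i)=\hat{u}_i(0)-\int_0^{x_i}r_i(t_i)\,dt_i .
\]
Substituting $\hat{u}_i(x_i)=p_i(x_i)-x_i r_i(x_i)$ and $\hat{u}_i(0)=p_i(0)$ yields (\ref{eq-pixipi0}).

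\emph{Sufficiency.} Assume 1) and 2). Then (\ref{eq-pixipi0}) is equivalent to $\hat{u}_i(x)=p_i(0)-\int_0^{x}r_i$. A direct computation gives
\[
\hat{u}_i(x_i)-\bar{u}_i(x_i,\gamma_i)=\hat{u}_i(x_i)-\hat{u}_i(\gamma_i)-r_i(\gamma_i)(\gamma_i-x_i)=\int_{x_i}^{\gamma_i}\bigl(r_i(t)-r_i(\gamma_i)\bigr)\,dt .
\]
When $\gamma_i>x_i$, every $t$ in the range satisfies $t\le\gamma_i$, so monotonicity makes the integrand nonnegative. When $\gamma_i<x_i$, the integral runs backwards over $t\ge\gamma_i$, where the integrand is nonpositive, so the value is again nonnegative. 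Thus truthful bidding is a best response for every type, and since this holds for each $i$, truth-telling is a BNE.

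\emph{Main obstacle.} The only delicate step is the integration in the necessity direction, i.e.\ justifying $\hat{u}_i'=-r_i$ almost everywhere and the fundamental theorem of calculus. I would handle it through the Lipschitz/absolute-continuity argument from the sandwich inequality, noting that monotonicity of $r_i$ also guarantees its Riemann integrability. Apart from using the interim (expected) quantities in place of per-profile ones, the multi-unit aspect does not affect this argument. The reason is that agent $i$'s cost $x_iR_i(x)$ depends linearly on the single parameter $x_i$, so $r_i$ plays exactly the role of Myerson's allocation probability.
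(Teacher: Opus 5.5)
Your proposal is correct and takes essentially the same route as the paper. Both use the sandwich $r_i(x_i)(\gamma_i-x_i)\geq \hat{u}_i(x_i)-\hat{u}_i(\gamma_i)\geq r_i(\gamma_i)(\gamma_i-x_i)$ obtained by swapping $x_i$ and $\gamma_i$, derive monotonicity from it, integrate to get the payment identity, and bound $\int_{x_i}^{\gamma_i} r_i$ by monotonicity for sufficiency. The differences are minor: the paper handles the integration step by summing the sandwich over a uniform partition of $[0,x_i]$ and squeezing the telescoped $\hat{u}_i(0)-\hat{u}_i(x_i)$ between Riemann sums, which is slightly more elementary than your Lipschitz/absolute-continuity argument, and your sufficiency step spells out the case $\gamma_i<x_i$, which the paper leaves implicit.
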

\vspace{-2mm}
We denote $\lambda_i(x_i)=x_i + \frac{F_i(x_i)}{f_i(x_i)}$ and refer to it as the \emph{virtual unit cost of agent} $i$. We focus on the \emph{regularity} case of the space $X$, where $\lambda_i$ is non-decreasing function of $x_i$ for every $i$. The above result leads to the following lemmas.

\begin{lm}\label{lm-ep0}
	A mechanism $\langle R,P \rangle$ is BNIC only if 
	%	\vspace{-3mm}
	\begin{equation}\label{eq-Ep2}
		\mathbb{E}_{x\in X}[\sigma(x)]\! =\!\int_X\!\Big(v_{\mathcal{F}}(S\dag\hat{R}(x))- \sum_{i\in A} \lambda_i(x_i)R_i(x)\Big)f(x)dx
	\end{equation}
	%	\vspace{-6mm}
	$$- \sum_{i\in A}\Big(p_i(0)-\int_{0}^{+\infty} r_i(t_i)dt_i\Big)$$
\end{lm}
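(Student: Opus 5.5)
The plan is the standard Myerson-style computation: write the expected profit as expected value minus expected payments, substitute the payment characterization from Lemma~\ref{lm-U2} for each agent, and then use Fubini/integration by parts to turn each payment term into a virtual-cost term plus a constant.

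\emph{Step 1: reduce to one integral per agent.} Starting from (\ref{eq-Ep}), the value term $\int_X v_{\mathcal{F}}(S\dag\hat{R}(x))f(x)\,dx$ is left untouched. For the payment terms, independence of the $x_i$ lets me write $f(x)=f_i(x_i)f_{-i}(x_{-i})$. By the definition (\ref{eq-pi}) of $p_i$,
$$\int_X P_i(x)f(x)\,dx=\int_0^{\omega_i}p_i(x_i)f_i(x_i)\,dx_i .$$
Since the mechanism is BNIC, Lemma~\ref{lm-U2} gives (\ref{eq-pixipi0}), which I substitute here.

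\emph{Step 2: integrate each piece.} The constant $p_i(0)$ integrates to $p_i(0)$. The term $x_ir_i(x_i)$ stays as $\int_0^{\omega_i} x_ir_i(x_i)f_i(x_i)\,dx_i$. The double-integral term is where the real work is, and I would handle it by exchanging the order of integration:
$$\int_0^{\omega_i} f_i(x_i)\int_0^{x_i}r_i(t)\,dt\,dx_i=\int_0^{\omega_i}r_i(t)\bigl(1-F_i(t)\bigr)\,dt .$$
This is justified because $r_i$ is bounded (at most the finite total number of actions) and nonnegative. Collecting the pieces gives
$$\int_0^{\omega_i}p_i f_i\,dx_i=p_i(0)-\int_0^{\omega_i}r_i(t)\,dt+\int_0^{\omega_i}r_i(x_i)\Bigl(x_i+\tfrac{F_i(x_i)}{f_i(x_i)}\Bigr)f_i(x_i)\,dx_i .$$
The bracketed factor is exactly $\lambda_i(x_i)$. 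The upper limit $+\infty$ in the statement is read with $r_i\equiv 0$ outside $X_i$, equivalently as $\omega_i$.

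\emph{Step 3: return to the full profile.} Using (\ref{eq-hix}), $r_i(x_i)f_i(x_i)=\int_{X_{-i}}R_i(x_i,x_{-i})f(x)\,dx_{-i}$. Hence the last integral equals $\int_X\lambda_i(x_i)R_i(x)f(x)\,dx$. Summing over $i$ and subtracting from the value term yields (\ref{eq-Ep2}), with the residual $-\sum_i\bigl(p_i(0)-\int r_i\bigr)$.

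\emph{Main obstacle.} The main care point is the Fubini exchange and the sign bookkeeping in Step 2; the rest is substitution. Note that no regularity assumption on $\lambda_i$ is needed for this lemma, only BNIC via Lemma~\ref{lm-U2}.
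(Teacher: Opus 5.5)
Your proposal is correct and follows essentially the same route as the paper: substitute the payment characterization of Lemma~\ref{lm-U2}, integrate against $f_i$, swap the order of integration to obtain $\int r_i(t)\bigl(1-F_i(t)\bigr)\,dt$, and regroup into $\int_X \lambda_i(x_i)R_i(x)f(x)\,dx$ plus the constant $p_i(0)-\int r_i$. Your handling of the upper limit ($\omega_i$ versus $+\infty$) and of the Fubini exchange is slightly more careful than the paper's, and you are right that regularity of $\lambda_i$ plays no role in this lemma.
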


\begin{lm}\label{lm-g0}
	A mechanism $\langle H,P\rangle$ is BNIC and IR only if	
	\begin{equation}\label{ieq-ph}
		\forall i\in A: p_i(0) - \int_{0}^{+\infty}r_i(t_i)dt_i \geq 0
	\end{equation}
\end{lm}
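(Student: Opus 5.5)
The plan is to combine the payment characterization of Lemma~\ref{lm-U2} with the individual rationality condition (\ref{eq-ir}), evaluated at the worst type. First, substitute the expression (\ref{eq-pixipi0}) for $p_i(x_i)$ from Lemma~\ref{lm-U2} into the truthful expected utility (\ref{eq-Ui}). The terms $x_i r_i(x_i)$ cancel, leaving the envelope form
\begin{equation*}
\hat{u}_i(x_i) = p_i(0) - \int_0^{x_i} r_i(t_i)\,dt_i \qquad \text{for all } x_i\in X_i=[0,\omega_i].
\end{equation*}

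Second, observe that $r_i(t_i)\ge 0$ for all $t_i$, since it is an expected number of restricted actions (by (\ref{eq-hix}) it is an integral of a nonnegative count). Hence $\hat{u}_i$ is nonincreasing in $x_i$ and attains its minimum over $X_i$ at the highest cost $x_i=\omega_i$. Individual rationality (\ref{eq-ir}) at that single point then gives $p_i(0) - \int_0^{\omega_i} r_i(t_i)\,dt_i \ge 0$. This is the only place IR is needed; the inequality at $\omega_i$ already implies it at every other type.

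Third, reconcile the upper limit $\omega_i$ with the $+\infty$ in (\ref{ieq-ph}), which I expect to be the only (mild) obstacle. Costs are supported on $[0,\omega_i]$ with density $f_i$, so bids above $\omega_i$ lie outside the type space. The natural convention, consistent with how Lemma~\ref{lm-ep0} is written, is that $R_i(x)=0$ whenever $x_i>\omega_i$, equivalently that $r_i$ is extended by zero outside $X_i$. Under this convention $\int_0^{+\infty} r_i = \int_0^{\omega_i} r_i$, and the claimed inequality follows for every $i\in A$. I will state this convention explicitly in the proof. If instead one wants to avoid conventions, (\ref{ieq-ph}) should be read with the integral taken over $X_i$, and the same argument applies verbatim.
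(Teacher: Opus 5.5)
Your proposal is correct and matches the paper's argument: Lemma~\ref{lm-U2} turns BNIC into the envelope identity $\hat{u}_i(x_i)=p_i(0)-\int_0^{x_i}r_i(t_i)\,dt_i$, and IR at the highest type then gives the inequality. The paper plugs in $x_i=+\infty$ directly, even though that point lies outside $X_i=[0,\omega_i]$. Your evaluation at $\omega_i$, with $r_i$ extended by zero beyond $\omega_i$, is a more careful version of the same step. The monotonicity remark is true but not needed for this direction.
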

Moreover, we further have: 
\begin{eqnarray}\label{eq-obj}
	%		g(\eta,x) = v_{\mathcal{F}}(S\dag\eta) - \sum_{i\in A} \big((x_i + \frac{F_i(x_i)}{f_i(x_i)})\cdot\sum_{q\in Q} |\eta_i(q)|\big)\\
	g(\eta,x)	= v_{\mathcal{F}}(S\dag\eta) - \sum_{i\in A} \big(\lambda_i(x_i)\cdot\sum_{q\in Q} |\eta_i(q)|\big)
\end{eqnarray}
where $x\in X$ is the current bid profile and $\eta$ is a selected social law.  Based on the above results,  we present the  $\mathcal{PO}$-$\mathcal{ASL}$ mechanism, as outlined in Algorithm 1. 

\begin{algorithm}[t]
	\caption{$\mathcal{PO}\text{-}\mathcal{ASL}$ Mechanism}
	\label{alg:po-asl}
	\begin{algorithmic}
		%		\STATE \textbf{Mechanism:} $\mathcal{PO}\text{-}\mathcal{ASL}$
		\STATE \textbf{Input:} structure $S = \langle k, Q, q_s, \Pi, \pi, \varepsilon, c, \delta \rangle$, feature set $\mathcal{F}$, p.d.f. $f_i$ of each agent $i$'s cost, and bid profile $x$
		\STATE \textbf{Output:} the allocation and payment $(R^{*}(x), P^{*}(x))$
		
		\STATE 1. Find social law $\eta = \hat{R}^{*}(x) \in \mathcal{SL}_S$ maximizing
		\[
		g(\eta, x) = v_{\mathcal{F}}(S \dagger \eta) - \sum_{i \in A} \left( \lambda_i(x_i) \cdot \sum_{q \in Q} |\eta_i(q)| \right);
		\]
		
		\STATE 2. Compute the payment to each agent $i$ as follows:
		\[
		P_i^{*}(x) = R_i^{*}(x) x_i + \int_{x_i}^{+\infty} R_i^{*}(t_i, x_{-i}) dt_i;
		\]
	\end{algorithmic}
\end{algorithm}

\vspace*{2mm}
\noindent\textbf{Dominant Social Law:} 
An $(i,n)$-social law $\eta^*$ that maximizes $g(\eta,x)$ over all $(i,n)$-social laws under bid profile $x$ is called a dominant $(i,n)$-social law under  $x$. The DOMINANT $(i,n)$-SOCIAL LAW problem refers to, given a structure $S$, a feature set $\mathcal{F}$, a probability density function $f_i$ for each agent $i$’s cost, and a bid profile $x$, finding the dominant $(i,n)$-social law.

The allocation step of the mechanism $\mathcal{PO}$-$\mathcal{ASL}$ is to find the dominant social law under the bid profile $x$. 
%The intuition behind this mechanism is the allocation function $H^*(x)$ maximizes $g$ for every $x$ and fortunately is monotone non-increasing, so it maximizes the integration $\int_X g(\hat{R}(x), x) f(x)dx$;  the payment function satisfies the constraints imposed by BNIC and IR, \emph{i.e.}, equations (\ref{eq-pixipi0}) and (\ref{ieq-ph}), and makes $\sum_{i\in A}\Big(p_i(0)-\int_{0}^{+\infty} r_i(t_i)dt_i\Big)$ get its minimal value $0$, and so the expected profit is maximized within all BNIC and IR mechanisms. 

\begin{lm}\label{lm-qni}
	The allocation function of $\mathcal{PO}$-$\mathcal{ASL}$ is monotone nonincreasing.
\end{lm}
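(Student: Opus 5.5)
We will show the stronger pointwise statement: for every agent $i$ and every fixed $x_{-i}$, the number of restricted actions $R_i^*(x_i,x_{-i})$ is nonincreasing in $x_i$. Integrating against $f_{-i}(x_{-i})$ in (\ref{eq-hix}) then gives that $r_i(x_i)$ is nonincreasing, which is condition 1) of Lemma \ref{lm-U2}. The tool is the standard exchange (revealed-preference) argument for maximizers of an objective that is linear in the agent's own parameter.

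Fix $i$ and $x_{-i}$, and take $x_i < x_i'$. Let $\eta=\hat{R}^*(x_i,x_{-i})$ and $\eta'=\hat{R}^*(x_i',x_{-i})$ be the dominant social laws chosen in step 1. Write $n=\sum_q|\eta_i(q)|$ and $n'=\sum_q|\eta'_i(q)|$, so that $n=R_i^*(x_i,x_{-i})$ and $n'=R_i^*(x_i',x_{-i})$. By (\ref{eq-obj}), for any $\eta''$,
\[
g(\eta'',(y,x_{-i})) = K(\eta'') - \lambda_i(y)\,\sum_{q}|\eta''_i(q)|,
\]
where $K(\eta'')=v_{\mathcal{F}}(S\dag\eta'')-\sum_{j\neq i}\lambda_j(x_j)\sum_q|\eta''_j(q)|$ does not depend on $y$. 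Optimality of $\eta$ at $x_i$ and of $\eta'$ at $x_i'$ gives
\[
K(\eta)-\lambda_i(x_i)n \ge K(\eta')-\lambda_i(x_i)n', \qquad K(\eta')-\lambda_i(x_i')n' \ge K(\eta)-\lambda_i(x_i')n .
\]
Adding the two inequalities yields $(\lambda_i(x_i')-\lambda_i(x_i))(n-n')\ge 0$.

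The only delicate step is passing from this inequality to $n\ge n'$. If $\lambda_i$ is strictly increasing, then $\lambda_i(x_i')>\lambda_i(x_i)$ and we get $n\ge n'$ directly. Under the paper's regularity assumption, $\lambda_i$ is only non-decreasing, so there may be ties, either in $\lambda_i$ itself or among several maximizers of $g$. We handle both with a consistent tie-breaking rule in step 1: among all maximizers of $g$, pick one with the largest value of $\sum_q|\eta_i(q)|$, breaking any remaining ties by a fixed total order on $\mathcal{SL}_S$.

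With this rule, suppose $\lambda_i(x_i)=\lambda_i(x_i')$. Then the objective is literally the same function at $x_i$ and at $x_i'$, so the same $\eta$ is selected and $n=n'$. Suppose instead $\lambda_i(x_i)<\lambda_i(x_i')$. Then the inequality above forces $n\ge n'$.

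In either case $R_i^*$ is nonincreasing in $x_i$ for every fixed $x_{-i}$. Taking expectations over $x_{-i}$ gives the claim for $r_i$. Note that the argument only uses that $\mathcal{SL}_S$ is a finite feasible set that does not depend on the bids, and that $i$'s bid enters $g$ only through the term $-\lambda_i(x_i)\cdot(\text{count of } i\text{'s restricted actions})$.
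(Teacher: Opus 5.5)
Your proof is correct and is essentially the paper's argument. Both use that agent $i$'s bid enters $g$ only through the term $-\lambda_i(x_i)\sum_q|\eta_i(q)|$, together with monotonicity of $\lambda_i$; the paper argues by contradiction, while you add the two optimality inequalities. Your explicit treatment of ties is actually more careful than the paper's: its contradiction only yields the weak inequality $g(\hat{R}'(x),x)\geq g(\hat{R}^*(x),x)$, so it silently assumes a unique maximizer. One small repair: drop the agent-specific clause ``largest $\sum_q|\eta_i(q)|$'' from your tie-breaking rule, since a single mechanism cannot favour every agent $i$ at once. Your argument never uses that clause; it only needs the selected maximizer to depend on $x$ through the objective $\eta\mapsto g(\eta,x)$, and a fixed total order on $\mathcal{SL}_S$ already guarantees this for all agents simultaneously.
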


%	The intuition behind the above lemma is actually very obvious. Since as an agent $i$ gradually raises her bid, her virtual cost will increase also, and social laws which restrict more of agent $i$'s actions have an objective function value that decrease faster, and will be possibly replaced by another social law which restricts fewer of agent $i$'s actions. 
Combining these lemmas, we can further prove:

\begin{thm}\label{thm-bio}
	$\mathcal{PO}$-$\mathcal{ASL}$ is BNIC, IR, and maximizes the expected profit among  all BNIC and IR mechanisms.
\end{thm}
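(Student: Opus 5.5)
The proof follows Myerson's template: check the two conditions of Lemma~\ref{lm-U2} for BNIC, compute the constant $p_i(0)-\int_0^{+\infty} r_i(t_i)dt_i$ for IR, and then show that pointwise maximization of $g(\eta,x)$ maximizes the expression of Lemma~\ref{lm-ep0} subject to the constraint of Lemma~\ref{lm-g0}.

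\emph{BNIC.} By Lemma~\ref{lm-qni}, $R_i^*(t_i,x_{-i})$ is nonincreasing in $t_i$ for every fixed $x_{-i}$. Integrating against $f_{-i}$ then shows that $r_i$ is nonincreasing, which is condition 1) of Lemma~\ref{lm-U2}. For condition 2), take expectations over $x_{-i}$ of the payment rule in Algorithm~\ref{alg:po-asl} and apply Fubini:
\[
p_i(x_i)=x_i r_i(x_i)+\int_{x_i}^{+\infty} r_i(t_i)dt_i .
\]
Since $r_i(t_i)=0$ is not automatic, I would read the upper limit $+\infty$ as $\omega_i$, or extend $R^*_i$ beyond $\omega_i$ using $\lambda_i$ continued increasingly. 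Either way, the integral is finite because $R_i^*\le\sum_q|\varepsilon_i(q)|$ and, for $t_i$ large enough, restricting any action of $i$ is unprofitable, so $R_i^*$ vanishes. Setting $x_i=0$ gives $p_i(0)=\int_0^{+\infty} r_i$. Substituting this back yields exactly $p_i(x_i)=p_i(0)+x_ir_i(x_i)-\int_0^{x_i}r_i$, so Lemma~\ref{lm-U2} gives BNIC.

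\emph{IR.} From the same formula,
\[
\hat u_i(x_i)=p_i(x_i)-x_ir_i(x_i)=\int_{x_i}^{+\infty} r_i(t_i)dt_i\ge 0 .
\]

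\emph{Optimality.} Let $\langle R,P\rangle$ be any BNIC and IR mechanism. By Lemma~\ref{lm-ep0}, its expected profit is
\[
\int_X g(\hat R(x),x)f(x)dx-\sum_i\Big(p_i(0)-\int_0^{+\infty}r_i\Big).
\]
By Lemma~\ref{lm-g0}, the subtracted sum is nonnegative. Since $\hat R^*(x)$ maximizes $g(\cdot,x)$ for every $x$, we have $\int_X g(\hat R(x),x)f\le\int_X g(\hat R^*(x),x)f$. For $\mathcal{PO}$-$\mathcal{ASL}$ the subtracted term equals $0$ by the computation above, and Lemma~\ref{lm-ep0} applies to it because it is BNIC. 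Hence its expected profit is at least that of $\langle R,P\rangle$.

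\emph{Main obstacle.} The delicate step is making the payment integral and the identity $p_i(0)=\int_0^{+\infty} r_i$ rigorous, specifically the convention for the upper limit and the interchange of integrals, in a way that matches the limits used in Lemmas~\ref{lm-ep0} and~\ref{lm-g0}. A secondary issue is measurability of $\hat R^*$ under tie-breaking. I would fix a consistent tie-breaking rule, namely lexicographic among maximizers; this keeps Lemma~\ref{lm-qni} valid and makes $R_i^*$ piecewise constant in $t_i$ with finitely many breakpoints, which settles integrability.
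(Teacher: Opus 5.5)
Your proof is correct and follows the paper's route. BNIC comes from the monotonicity in Lemma~\ref{lm-qni} plus the expected-payment identity via Lemma~\ref{lm-U2}, IR from $\hat u_i(x_i)=\int_{x_i}^{+\infty}r_i(t_i)\,dt_i\ge 0$, and optimality from Lemmas~\ref{lm-ep0} and~\ref{lm-g0} together with pointwise maximization of $g(\cdot,x)$; the paper merely phrases that last step as a contradiction.

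One caution about your ``either way'': only the reading with upper limit $\omega_i$, used consistently in the payment rule and in Lemmas~\ref{lm-ep0} and~\ref{lm-g0}, makes the optimality step go through. With types in $[0,\omega_i]$, the correct residual in Lemma~\ref{lm-ep0} is $p_i(0)-\int_0^{\omega_i}r_i$. For the extended rule this equals $\int_{\omega_i}^{+\infty}r_i^*(t_i)\,dt_i$, which is strictly positive whenever $R_i^*$ stays positive beyond $\omega_i$ with positive probability, so the extended mechanism is not optimal.
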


It is well-known that for single-parameter procurement settings, truthful mechanisms (where truthful bidding is a dominant strategy) can be characterized as follows:

\begin{thm}~\cite{Archer:FOCS-01,Hartline06}
	A single-parameter procurement auction is truthful if and only if for any agent $i$ and bids of other
	agents $x_{-i}$ fixed, 
	\begin{itemize}
		\item $R_i(x_i,x_{-i})$ is monotone non-increasing.
		\item $P_i(x_i,x_{-i}) = R_i(x_i,x_{-i})x_i+\int_{x_i}^{+\infty}R_i(t_i,x_{-i})dt_i$
	\end{itemize}
\end{thm}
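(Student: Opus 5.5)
We prove the characterization of dominant-strategy truthful single-parameter procurement auctions (Archer--Tardos, Hartline) directly. Fix agent $i$ and the others' bids $x_{-i}$, and abbreviate $R(t)=R_i(t,x_{-i})$ and $P(t)=P_i(t,x_{-i})$. An agent with true unit cost $x_i$ who bids $t$ receives utility $u(x_i,t)=P(t)-x_iR(t)$. Truthfulness means $u(x_i,x_i)\ge u(x_i,t)$ for all $x_i,t$. Throughout we use the normalization implicit in the payment formula: an agent bidding arbitrarily high (above $\omega_i$) gets zero allocation and zero payment. The proof mirrors Lemma~\ref{lm-U2}, but now pointwise in $x_{-i}$ rather than in expectation.

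\emph{Necessity.} Take two types $x<y$ and write the two truthfulness inequalities:
\[
P(x)-xR(x)\ge P(y)-xR(y), \qquad P(y)-yR(y)\ge P(x)-yR(x).
\]
Adding them gives $(y-x)(R(x)-R(y))\ge 0$, so $R$ is monotone non-increasing.

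Now set $U(x)=P(x)-xR(x)$, the truthful utility. Then $U(x)=\max_t\,[P(t)-xR(t)]$ is a supremum of affine functions of $x$, hence convex. The two inequalities above sandwich its difference quotient:
\[
-R(x)\le \frac{U(y)-U(x)}{y-x}\le -R(y).
\]
So $U$ is absolutely continuous with $U'(x)=-R(x)$ almost everywhere, and therefore $U(x)=U(\infty)+\int_x^{\infty}R(t)\,dt$. By the normalization at high bids, $U(\infty)=0$. Substituting back into $P(x)=U(x)+xR(x)$ yields exactly the stated formula.

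\emph{Sufficiency.} Assume $R$ is non-increasing and $P$ is given by the formula. Then
\[
u(x,t)=(t-x)R(t)+\int_t^{\infty}R(s)\,ds, \qquad u(x,x)=\int_x^{\infty}R(s)\,ds.
\]
Their difference is
\[
u(x,x)-u(x,t)=\int_x^{t}\bigl(R(s)-R(t)\bigr)\,ds,
\]
where the integral is read with orientation. If $t>x$, each $s\in[x,t]$ satisfies $s\le t$, so $R(s)\ge R(t)$ and the integrand is nonnegative. If $t<x$, the integral equals $\int_t^x(R(t)-R(s))\,ds$, and for $s\ge t$ we have $R(t)\ge R(s)$, so it is again nonnegative. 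Hence truthful bidding is a dominant strategy.

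\emph{Main obstacle.} The delicate step is the integration in the necessity direction. One must justify $U'(x)=-R(x)$ almost everywhere when $R$ takes integer values and jumps; convexity, or Lipschitz continuity of $U$ on bounded intervals, handles this. One must also fix the boundary constant $U(\infty)=0$, equivalently $\int_{x_i}^{\infty}R\,dt$ finite with zero utility at high bids. For our setting this is automatic: $R$ is bounded by the total number of actions, and we take $R$ to vanish beyond $\omega_i$.
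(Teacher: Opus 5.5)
The paper gives no proof of this theorem: it is cited from Archer--Tardos and Hartline and restated in the appendix without argument. The natural comparison is therefore the paper's proof of Lemma~\ref{lm-U2}, and your argument is its dominant-strategy, pointwise-in-$x_{-i}$ analogue. Your proof is correct.

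\emph{Monotonicity.} You get it by adding the two incentive inequalities, exactly as in Lemma~\ref{lm-U2}.

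\emph{Integrating the envelope condition.} You note that $U$ is a supremum of affine functions, hence convex and locally Lipschitz with $U'=-R$ a.e. Lemma~\ref{lm-U2} instead sums the same sandwich inequality over a partition and takes the limit of Riemann sums. Both work, since a monotone $R$ is Riemann integrable.

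\emph{Anchoring the constant.} This is the more substantive difference. Lemma~\ref{lm-U2} integrates from $0$ and leaves $p_i(0)$ free. You integrate down from $+\infty$ and impose zero utility at high bids. That normalization is genuinely required, not a convenience. As literally stated, the ``only if'' direction is false: adding any $h_i(x_{-i})$ to $P_i$ preserves truthfulness. Your hypothesis is what makes necessity true, and it matches the minimal-payment, voluntary-participation choice in Archer--Tardos.

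\emph{A weaker assumption suffices.} You do not need $R$ to vanish beyond $\omega_i$. It is enough that $\int_{x_i}^{\infty}R\,dt<\infty$ and $P(t)\to 0$. A non-increasing nonnegative $R$ with finite integral satisfies $tR(t)\le 2\int_{t/2}^{t}R(s)\,ds\to 0$, hence $U(t)\to 0$.

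\emph{What the paper actually uses.} The paper only uses the sufficiency direction, to conclude from Lemma~\ref{lm-qni} and the payment rule that $\mathcal{PO}$-$\mathcal{ASL}$ is truthful. Your computation $u(x,x)-u(x,t)=\int_x^t\bigl(R(s)-R(t)\bigr)\,ds\ge 0$ establishes that direction with no normalization beyond finiteness of the integral.
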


%	Clearly, truthfulness is an even stronger incentive than BNIC that motivates honest bidding. 
We prove that the mechanism $\mathcal{PO}$-$\mathcal{ASL}$ satisfies exactly the above characterizations and is therefore truthful.

\begin{cor}
	$\mathcal{PO}$-$\mathcal{ASL}$ is truthful.
\end{cor}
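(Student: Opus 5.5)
We will verify the two conditions of the Archer--Tardos/Hartline characterization (the theorem immediately preceding the corollary) directly for $\mathcal{PO}$-$\mathcal{ASL}$. The payment condition holds by construction: Step 2 of Algorithm~\ref{alg:po-asl} defines $P_i^{*}(x) = R_i^{*}(x)x_i + \int_{x_i}^{+\infty} R_i^{*}(t_i,x_{-i})\,dt_i$, which is literally the required formula. The only point to check is that the integral is finite. Costs lie in $X_i=[0,\omega_i]$, and $\lambda_i$ is nondecreasing with $\lambda_i(t)\geq t$. For $t_i$ beyond the largest value $v_{\mathcal{F}}$ can take, any social law restricting one of agent $i$'s actions has objective at most $\max v - t_i<0 \le g(\emptyset\text{-law on } i)$. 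Hence $R_i^{*}(t_i,x_{-i})=0$ there, so the integral has bounded support. If $\lambda_i$ is only defined on $X_i$, we will adopt the convention that bids above $\omega_i$ yield zero allocation.

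The substantive step is the monotonicity condition: for fixed $x_{-i}$, $R_i^{*}(x_i,x_{-i})$ is nonincreasing in $x_i$. This is a pointwise (ex post) statement, stronger than the interim monotonicity of $r_i$ required in Lemma~\ref{lm-U2}. We expect the proof of Lemma~\ref{lm-qni} already delivers it, but we will reprove it via an exchange argument. Take $x_i<x_i'$ and let $\eta=\hat R^{*}(x_i,x_{-i})$ and $\eta'=\hat R^{*}(x_i',x_{-i})$ be the maximizers, with $n=\sum_q|\eta_i(q)|$ and $n'=\sum_q|\eta'_i(q)|$. Write $g(\eta,x)=h(\eta)-\lambda_i(x_i)\,n$, where $h$ collects the value and all other agents' virtual-cost terms and so does not depend on $x_i$. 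Optimality at each profile gives
\[
h(\eta)-\lambda_i(x_i)n \ge h(\eta')-\lambda_i(x_i)n', \qquad h(\eta')-\lambda_i(x_i')n' \ge h(\eta)-\lambda_i(x_i')n.
\]
Adding the two inequalities yields $(\lambda_i(x_i')-\lambda_i(x_i))(n-n')\ge 0$.

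The main obstacle is that regularity only gives $\lambda_i$ nondecreasing, not strictly increasing, and the maximizer may not be unique. If $\lambda_i(x_i')>\lambda_i(x_i)$, we get $n\ge n'$ directly. If $\lambda_i(x_i')=\lambda_i(x_i)$, then $g(\cdot,x)$ and $g(\cdot,x')$ coincide, so we need a consistent tie-breaking rule to return the same social law. We will fix such a rule in Step 1, for instance a fixed total order on $\mathcal{SL}_S$ with ties broken in favour of fewer restrictions of the bidding agent, or simply a deterministic solver invoked on identical objective data. Under that rule, equal $\lambda_i$ gives $\eta=\eta'$, and strict increase gives $n\ge n'$. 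With this convention, monotonicity holds pointwise, consistent with Lemma~\ref{lm-qni}.

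With both conditions verified, the cited characterization theorem gives that truthful bidding is a dominant strategy in $\mathcal{PO}$-$\mathcal{ASL}$. Truthfulness then coexists with the BNIC, IR and optimality properties of Theorem~\ref{thm-bio}, since the dominant-strategy payment rule integrates in expectation over $x_{-i}$ to equation~(\ref{eq-pixipi0}) with $p_i(0)=\int_0^{+\infty}r_i$.
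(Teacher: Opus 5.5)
Your proof is correct and follows the paper's route. The paper's proof is a one-line appeal to the Archer--Tardos/Hartline characterization: monotonicity comes from the monotonicity lemma for the allocation of $\mathcal{PO}$-$\mathcal{ASL}$, and the payment condition holds by the definition in Step 2, which are exactly the two conditions you check. Your in-line exchange argument with an explicit tie-breaking convention is more careful than the paper's proof of that lemma. That proof only derives the non-strict inequality $g(\hat{R}'(x),x)\geq g(\hat{R}^*(x),x)$ and treats it as a contradiction, so it silently needs the same consistent tie-breaking when $\lambda_i$ has flat parts. Your remark on the finiteness of the payment integral is likewise a detail the paper leaves implicit.
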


%It is interesting that we end up discovered a more powerful truthful mechanism, and it maximizes the expected profit among all BNIC and IR mechanisms. %originally aimed to find out a BNIC mechanism but
	
\section{Computation}

The specification of $\mathcal{PO}$-$\mathcal{ASL}$ defines the allocation determination and payment determination problems, but does not provide practical algorithms for solving them. We now address this gap by providing practical algorithms.

\subsection{Reducing payment to allocation} 

In single-unit settings, $P^*_i(x)$ is simply the threshold bid of agent $i$ and can be found by computing the allocation twice, once for  $t_i=0$ and once for $t_i = +\infty$.
However, when each agent may hold multiple units, this shortcut no longer applies. Whether an efficient method exists to compute payments in this setting is an open question.

\vspace*{1mm}
\noindent\textbf{Irregularity in payment calculation in multi-unit settings:}
Computing the payment $P^*_i(x)$ for each agent requires evaluating the integral of $R^*_i(t_i,x_{-i})$, a function that describes how the allocation to agent i changes as its bid $t_i$ varies from $0$ to $+\infty$ (with other agents’ bids $x_{-i}$ fixed). A brute-force approach—computing the allocation for all $t_i\in [0,+\infty]$—is obviously infeasible. We must therefore find a more efficient approach.

%Fortunately, we find out that there is a way to determine  $P^*_i(x)$ based on performing allocation determination several times, and formally prove its correctness.

\vspace*{1mm}
\noindent\textbf{Turning points in the allocation curve, TP:} 
Our first observation is that $R^*_i(t_i,x_{-i})$ must be a monotone nonincreasing \emph{step function} of $t_i$. This follows from Lemma~\ref{lm-qni}, where we proved that $R^*_i(t_i,x_{-i})$ is monotone nonincreasing, and from the fact that $R^*_i(t_i,x_{-i})$ represents the number of actions selected for agent $i$. To determine $R^*_i(t_i,x_{-i})$, it is sufficient to find its turning points, which form the sequence:
$$(p_1, n_1), (p_2,n_2), ..., (p_k, n_k)$$
where $n_1,...,n_k$ are different natural numbers, $(p_j, n_j)$ indicates $t_i = p_j$ is the point where $R^*_i(t_i,x_{-i})$ changes to $n_j$, and therefore $k\in \mathbb{N}\setminus \{0\}$, $\forall i\in [1,k]: p_i\in \mathbb{R}^+$ and $n_k = 0$. Moreover, we also let $p_{0} = x_i$, $n_0 = R^*_i(x_i,x_{-i})$, and $\forall j> k: p_{j} = +\infty$, $n_{j} = 0$, and so we have $\forall j\in [0,k], \forall p \in (p_j, p_{j+1}): R^*_i(p,x_{-i}) = n_j$. 
%The remaining problem is how to determine these turning points efficiently.

%As the bid of agent $i$ increases, the objective function value of a social law decreases if and only if it restricts some of agent $i$'s actions, and always stays the same otherwise. 
%In particular, the decreasing rate of a social law's objective function value is directly proportional to the number of agent $i$'s restricted actions.

\vspace*{1mm}
\noindent\textbf{Deriving an efficient algorithm:}  We begin with the following lemma, which shows that if a social law $\eta$ outperforms another law $\eta'$ (which restricts more of agent $i$’s actions) at bid $x_i$, $\eta$ will continue to outperform $\eta'$ for all higher bids $t\geq x_i$.

\begin{lm}\label{lm-nocatch}
	If $g(\eta,(x_i,x_{-i})) \geq g(\eta',(x_i,x_{-i}))$ and $\sum_{q\in Q} |\eta_i(q)| \leq \sum_{q\in Q} |\eta'_i(q)|$, then $g(\eta,(t,x_{-i})) \geq g(\eta',(t,x_{-i}))$ for all $t\geq x_i$.
\end{lm}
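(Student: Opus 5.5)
The plan is to write the difference of the two objective values as an explicit function of agent $i$'s bid $t$ and show that it is nondecreasing in $t$. Set $n=\sum_{q\in Q}|\eta_i(q)|$ and $n'=\sum_{q\in Q}|\eta'_i(q)|$, so that $n\le n'$ by hypothesis. Define
\[
D(t)=g(\eta,(t,x_{-i}))-g(\eta',(t,x_{-i})).
\]
By the definition of $g$ in equation~(\ref{eq-obj}), the value terms $v_{\mathcal{F}}(S\dag\eta)$ and $v_{\mathcal{F}}(S\dag\eta')$ do not depend on the bid profile. Likewise, for every agent $j\neq i$ the term $\lambda_j(x_j)\sum_{q}|\eta_j(q)|$ depends only on $x_j$, which is held fixed. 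Hence all of these terms are constant in $t$. Collecting them into a constant $K$ gives
\[
D(t)=K+\lambda_i(t)\,(n'-n).
\]

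The key step is now to invoke the regularity assumption made just before Lemma~\ref{lm-ep0}: $\lambda_i(t)=t+F_i(t)/f_i(t)$ is nondecreasing in $t$. Since $n'-n\ge 0$, the product $\lambda_i(t)(n'-n)$ is nondecreasing in $t$, and therefore so is $D(t)$.

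For any $t\ge x_i$ this gives $D(t)\ge D(x_i)\ge 0$, where the last inequality is exactly the hypothesis $g(\eta,(x_i,x_{-i}))\ge g(\eta',(x_i,x_{-i}))$. That is the claim.

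I expect no real obstacle here. The only point needing care is that the argument uses regularity; without it $\lambda_i$ need not be monotone and the lemma could fail. The proof should therefore state explicitly that it works in the regular case, which is the setting the paper has fixed. A secondary subtlety concerns bids $t$ beyond $\omega_i$, which arise because the payment formula integrates up to $+\infty$. For those $t$ I would either extend $\lambda_i$ monotonically (for instance by taking $F_i=1$, so that $\lambda_i(t)=t+1/f_i$ under any convention consistent with the paper's use), or simply note that the monotonicity of $\lambda_i$ is assumed on the whole range of bids being considered.
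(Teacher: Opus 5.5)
Your proof is correct and follows essentially the same route as the paper. The paper rearranges the hypothesis to isolate $\lambda_i(x_i)\bigl(\sum_{q\in Q}|\eta'_i(q)|-\sum_{q\in Q}|\eta_i(q)|\bigr)$, then uses regularity (nondecreasing $\lambda_i$) together with $n'\ge n$ to replace $x_i$ by $t$; this is exactly your observation that $D(t)=K+\lambda_i(t)(n'-n)$ is nondecreasing, and your caveat about bids above $\omega_i$ addresses a point the paper leaves implicit.
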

\noindent For any $i\in Ag$ and $n\in\mathbb{N}$, We let
\begin{equation}
	\mathcal{SL}^{(i,n)}_S = \{\eta \in \mathcal{SL}_S~|~ \sum_{q\in Q} |\eta_i(q)| = n\}
\end{equation}
That is, $\mathcal{SL}^{(i,n)}_S$ refers to the set of all possible $(i,n)$-social laws, and moreover, we let 
\begin{equation}\label{eq-dominantSL}
	\eta^{(i,n,(x_i,x_{-i}))} = \arg \max_{\eta\in \mathcal{SL}^{(i,n)}_S} g(\eta, (x_i,x_{-i}))
\end{equation}
\begin{equation}
	\eta^{(x_i,x_{-i})} = \arg \max_{\eta\in \mathcal{SL}_S} g(\eta, (x_i,x_{-i}))
\end{equation}
So, $\eta^{(i,n,(x_i,x_{-i}))}$ and $\eta^{(x_i,x_{-i})}$ refer to \emph{the dominant $(i,n)$-social law} and \emph{the dominant social law} respectively under the bid profile $(x_i,x_{-i})$. 
The following result shows that when agent $i$'s bid varies from $0$ to $+\infty$, the dominant $(i,n)$-social law remains unchanged:

\begin{lm}\label{lm-eta}
	$\forall t_i \in [0,+\infty) : \eta^{(i,n,(t_i,x_{-i}))} = \eta^{(i,n,(0,x_{-i}))}$. 
\end{lm}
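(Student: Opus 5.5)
Within $\mathcal{SL}^{(i,n)}_S$, varying agent $i$'s bid shifts the objective of every candidate by the same amount, so the maximizer cannot change. The plan is to make this uniform shift explicit and then argue that the argmax is invariant under it.

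First, fix $i$, $n$ and $x_{-i}$, and take any $\eta\in\mathcal{SL}^{(i,n)}_S$. By the definition of $g$ in equation (\ref{eq-obj}),
\[
g(\eta,(t_i,x_{-i})) = v_{\mathcal{F}}(S\dag\eta) - \sum_{j\neq i}\lambda_j(x_j)\sum_{q\in Q}|\eta_j(q)| - \lambda_i(t_i)\sum_{q\in Q}|\eta_i(q)|.
\]
Since $\eta$ is an $(i,n)$-social law, the last term equals $\lambda_i(t_i)\,n$. Only this term depends on $t_i$. Hence
\[
g(\eta,(t_i,x_{-i})) = g(\eta,(0,x_{-i})) - n\big(\lambda_i(t_i)-\lambda_i(0)\big)
\]
for every $\eta\in\mathcal{SL}^{(i,n)}_S$. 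The shift $n(\lambda_i(t_i)-\lambda_i(0))$ is a constant that does not depend on $\eta$.

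Second, the argmax in equation (\ref{eq-dominantSL}) is taken over the finite set $\mathcal{SL}^{(i,n)}_S$. Subtracting the same constant from every candidate preserves the ordering of their objective values. So for any $\eta,\eta'\in\mathcal{SL}^{(i,n)}_S$,
\[
g(\eta,(t_i,x_{-i}))\ge g(\eta',(t_i,x_{-i})) \iff g(\eta,(0,x_{-i}))\ge g(\eta',(0,x_{-i})).
\]
It follows that the maximizer at $t_i$ is the maximizer at $0$. This can also be seen as the equal-count case of Lemma~\ref{lm-nocatch}, applied in both directions and starting from bid $0$.

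The only real obstacle is ties: when several social laws attain the maximum, "the" argmax must be read consistently. I would handle this by fixing a tie-breaking rule that depends only on $\eta$ (for example a fixed total order on $\mathcal{SL}_S$), as is implicitly assumed wherever $\hat R^*$ is treated as a function. Because the set of maximizers is the same for every $t_i$, such a rule selects the same law for every $t_i$. Alternatively, the statement can be read as equality of the sets of maximizers. One more point needs care: $\lambda_i(t_i)$ must be defined for $t_i$ beyond $\omega_i$. The argument only uses that $\lambda_i(t_i)$ is a finite real number, so any extension of $\lambda_i$ works, and monotonicity of $\lambda_i$ is not even needed here.
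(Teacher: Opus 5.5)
Your proof is correct and is essentially the paper's own argument: for every $\eta\in\mathcal{SL}^{(i,n)}_S$ the agent-$i$ term of $g$ equals $n\lambda_i(t_i)$. This is a constant independent of $\eta$, so replacing it by $n\lambda_i(0)$ leaves the argmax unchanged. Your care about tie-breaking and about evaluating $\lambda_i$ beyond $\omega_i$ covers points the paper leaves implicit. Note, though, that equality of the maximizer sets comes from the exact shift identity, not from the side appeal to Lemma~\ref{lm-nocatch}, which only carries non-strict comparisons upward in the bid.
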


Thus, the dominant $(i,n)$-social law is independent of agent $i$’s bid, so we may denote it as $\eta^{(i,n,x_{-i})}$.

\begin{lm}\label{lm-restrict}
	$$\forall t_i \in [x_i,+\infty) : \arg\max_{\eta\in \mathcal{SL}_S} g(\eta, (t_i,x_{-i})) = $$
	$$ \arg\max_{\eta \in \{\eta^{(i,n,x_{-i})}~|~ n\leq R^*_i(x_i,x_{-i}) \}} g(\eta, (t_i,x_{-i})).$$ 
\end{lm}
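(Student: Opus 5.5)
The argument is a two-step reduction: first partition $\mathcal{SL}_S$ by how many of agent $i$'s actions each law restricts, then use Lemma~\ref{lm-nocatch} to discard every class whose count exceeds $R^*_i(x_i,x_{-i})$.

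\textbf{Step 1 (partition).} Write
\[
\mathcal{SL}_S=\bigcup_{n\in\mathbb{N}}\mathcal{SL}^{(i,n)}_S .
\]
Fix $t_i\ge x_i$. Any maximizer of $g(\cdot,(t_i,x_{-i}))$ over $\mathcal{SL}_S$ lies in some class $\mathcal{SL}^{(i,n)}_S$. Within that class it must attain the class maximum.

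By Lemma~\ref{lm-eta}, the maximizer of $g(\cdot,(t_i,x_{-i}))$ over $\mathcal{SL}^{(i,n)}_S$ is $\eta^{(i,n,x_{-i})}$, independently of $t_i$. The reason is that
\[
g(\eta,(t_i,x_{-i}))=g(\eta,(0,x_{-i}))-n\bigl(\lambda_i(t_i)-\lambda_i(0)\bigr)
\]
on that class, a uniform shift. Therefore
\[
\max_{\eta\in\mathcal{SL}_S} g(\eta,(t_i,x_{-i}))=\max_{n}\, g\bigl(\eta^{(i,n,x_{-i})},(t_i,x_{-i})\bigr).
\]
So the argmax over $\mathcal{SL}_S$ equals the argmax over the candidates $\{\eta^{(i,n,x_{-i})}: n\in\mathbb{N}\}$. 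Ties are handled by the same fixed tie-breaking rule used to define $\hat{R}^*$.

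\textbf{Step 2 (discard large $n$).} Let $n_0=R^*_i(x_i,x_{-i})$ and $\eta_0=\hat{R}^*(x_i,x_{-i})$. Since $\eta_0$ is a global maximizer at bid $x_i$ restricting $n_0$ actions of $i$, Step 1 lets us take $\eta_0=\eta^{(i,n_0,x_{-i})}$.

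Now take any $n>n_0$. At bid $x_i$ we have
\[
g(\eta_0,(x_i,x_{-i}))\ge g(\eta^{(i,n,x_{-i})},(x_i,x_{-i})),
\]
and $\eta_0$ restricts fewer of $i$'s actions than $\eta^{(i,n,x_{-i})}$. Lemma~\ref{lm-nocatch} then gives
\[
g(\eta_0,(t_i,x_{-i}))\ge g(\eta^{(i,n,x_{-i})},(t_i,x_{-i}))\quad\text{for all } t_i\ge x_i .
\]
Hence no candidate with $n>n_0$ does strictly better than $\eta_0$, which is in the restricted family. The maximum over all candidates therefore equals the maximum over those with $n\le n_0$, and the two argmax sets coincide.

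\textbf{Main obstacle: ties.} The delicate point is the case $t_i>x_i$ when $n>n_0$ but $g(\eta_0,\cdot)=g(\eta^{(i,n,x_{-i})},\cdot)$ at $x_i$. Equality at $x_i$ combined with $n>n_0$ and nondecreasing $\lambda_i$ gives a weak inequality at $t_i$, not a strict one. So the large-$n$ law could remain tied, which would break the set equality.

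I would settle this using the monotonicity of Lemma~\ref{lm-qni}. With the mechanism's tie-breaking in favor of fewer restrictions on $i$ (or any consistent rule), $R^*_i(t_i,x_{-i})\le n_0$. This makes the selected argmax lie in the restricted family. The identity is then read as equality of the selected maximizers, which is exactly what the payment computation needs.

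If strict regularity of $\lambda_i$ is available, the tie is strict for $t_i>x_i$ and the set identity holds outright. The remaining case is $t_i=x_i$ with ties, where the argmax reading above is still required.
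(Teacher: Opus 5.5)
Your proof is correct and takes essentially the paper's route. Both arguments partition $\mathcal{SL}_S$ by the number $n$ of agent $i$'s restricted actions, use Lemma~\ref{lm-nocatch} to show that $\hat{R}^*(x_i,x_{-i})$ weakly beats every law with more than $R^*_i(x_i,x_{-i})$ restrictions for all $t_i\ge x_i$, and use Lemma~\ref{lm-eta} to replace each class by its fixed maximizer $\eta^{(i,n,x_{-i})}$; the paper merely discards the large classes first and partitions second. Your treatment of ties is more careful than the paper's, which silently treats the argmax as a single element, but your claim that the set identity holds outright under strictly increasing $\lambda_i$ is overstated: if several laws in one class $\mathcal{SL}^{(i,n)}_S$ tie, the family keeps only one representative, so what both you and the paper actually prove is the selection reading under a fixed consistent tie-breaking rule.
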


This result implies that the search space for the optimal social law can be greatly refined. As agent $i$'s bid increases from $x_i$ to $+\infty$, the social law that maximizes the objective function must lie in the set $\{\eta^{(i,R^*_i(x_i,x_{-i}),x_{-i})},..., \eta^{(i,0,x_{-i})}\}$, which contains only $R^*_i(x_i,x_{-i}) + 1$ social laws.

Let$v(n,t)$ denote the objective function value of $\eta^{(i,n,x_{-i})}$  under bid profile $(t,x_{-i})$, and let $v_n = v(n,0)$. This yields the following result for the variation of $v(n,t)$ as agent $i$’s bid varies from $0$ to $+\infty$:
\begin{lm}\label{lm-vnt}
	$v(n,t) = v_n - n\lambda_i(t) $.
\end{lm}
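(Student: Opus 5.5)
The identity follows by unfolding the definitions; no real estimation is involved. By definition, $v(n,t) = g(\eta^{(i,n,x_{-i})},(t,x_{-i}))$. By Lemma~\ref{lm-eta}, the dominant $(i,n)$-social law does not depend on agent $i$'s bid, so we may evaluate the same fixed law $\eta := \eta^{(i,n,x_{-i})} = \eta^{(i,n,(0,x_{-i}))}$ at every $t$. In particular, $v_n = v(n,0) = g(\eta,(0,x_{-i}))$ is computed for this same $\eta$.

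We now expand $g$ using equation (\ref{eq-obj}):
\begin{equation*}
g(\eta,(t,x_{-i})) = v_{\mathcal{F}}(S\dag\eta) - \sum_{j\neq i}\lambda_j(x_j)\sum_{q\in Q}|\eta_j(q)| - \lambda_i(t)\sum_{q\in Q}|\eta_i(q)|.
\end{equation*}
The first two terms do not depend on $t$. Since $\eta\in\mathcal{SL}^{(i,n)}_S$, the last factor is exactly $n$, so $g(\eta,(t,x_{-i})) = K - n\lambda_i(t)$, where $K$ collects the $t$-independent terms. Setting $t=0$ gives $v_n = K - n\lambda_i(0)$.

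The one point needing care is $\lambda_i(0)$. With $X_i=[0,\omega_i]$ and $F_i(0)=0$, we get $\lambda_i(0) = 0 + F_i(0)/f_i(0) = 0$, provided $f_i(0)>0$. (If $f_i(0)=0$, we interpret $\lambda_i(0)$ as its right limit, or assume a positive density as is standard in Myerson-type settings.) Hence $K = v_n$ and $v(n,t) = v_n - n\lambda_i(t)$.

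The main obstacle is making sure that the law attaining the maximum is the same for all $t$, so that $v(n,t)$ is a value of $g$ on one fixed law. Lemma~\ref{lm-eta} handles this. Even without it, the argument goes through: every law in $\mathcal{SL}^{(i,n)}_S$ has its $g$-value shifted by the same amount $-n\lambda_i(t)$, so the argmax is preserved, with any ties broken consistently. The conclusion therefore holds whichever maximizer is selected.
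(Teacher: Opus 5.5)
Your proof is correct and takes the paper's route: fix $\eta=\eta^{(i,n,x_{-i})}$ (using Lemma~\ref{lm-eta}), expand $g$, and use $\sum_{q\in Q}|\eta_i(q)|=n$ to isolate the term $-n\lambda_i(t)$. Your explicit check that $\lambda_i(0)=F_i(0)/f_i(0)=0$ is worth keeping: the paper's one-line computation uses this step silently when it identifies the $t$-independent part with $v(n,0)$, and the lemma fails for $n>0$ without it.
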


%Now, we are ready to put the pieces together and prove the following theorem for determining the turning points. The underlying idea can be graphically depicted in figure~\ref{fg-pay}. The upper part of this graph shows the curves of $v(5,t),...,v(0,t)$. By lemma~\ref{lm-eta}, each curve depicts a single fixed social law. By lemma~\ref{lm-nocatch}, although all the curves are monotone nonincreasing, $v(m,t)$ declines faster than $v(n,t)$ if $m>n$. When $t=x_i$ the dominant social law  is $\eta^{(i,3,x_{-i})}$, by lemma~\ref{lm-restrict}, as $t$ increases from $x_i$ to $+\infty$, the dominant social law will always be the one in $\{\eta^{(i,3,x_{-i})}, \eta^{(i,2,x_{-i})},\eta^{(i,1,x_{-i})},\eta^{(i,0,x_{-i})}\}$ with the highest objective function value, and graphically it is the one whose curve appears on the ``\emph{top layer}" of this pile of curves, i.e., successively the green curve($A',B$), the yellow curve ($B,C$), the purple curve ($C,D$), and the blue curve ($D,E$), and they correspond exactly to the function $R^*(t_i,x_{-i})$ in the lower part of this graph. The turning points are respectively $B, C$ and $D$. Graphically, theorem~\ref{thm-turning} indicates that the next turning point is just the first intersection point of current top curve with other curves as $t$ increases.

We now combine these results to prove the following theorem for determining the turning points, whose intuition is illustrated in Figure~\ref{fg-pay}. The upper part of the figure shows the curves $v(5,t),...,v(0,t)$. By Lemma~\ref{lm-eta}, each curve corresponds to a fixed social law. By Lemma~\ref{lm-nocatch}, while all curves are monotone non-increasing, $v(n,t)$ declines faster for larger $n$. At $t=x_i$, the dominant social law is $\eta^{(i,3,x_{-i})}$. By Lemma~\ref{lm-restrict}, as t increases from $x_i$ to $+\infty$, the dominant social law will always be in $\{\eta^{(i,3,x_{-i})}, \eta^{(i,2,x_{-i})},\eta^{(i,1,x_{-i})},\eta^{(i,0,x_{-i})}\}$ and will have the highest objective function value. Graphically, this corresponds to the curve that remains on the “\emph{top layer}” of the stack, which successively includes the green curve ($A',B$), yellow curve ($B,C$), purple curve ($C,D$), and blue curve ($D,E$). These curves correspond exactly to the allocation function $R^*(t_i,x_{-i})$  shown in the lower part of the figure, with turning points at $B, C$ and $D$. Theorem~\ref{thm-turning} identifies the first intersection point of the current top curve with any other curve as  $t$ increases:

\begin{figure}[h!]
	\centering
	\includegraphics[width=0.7\linewidth]{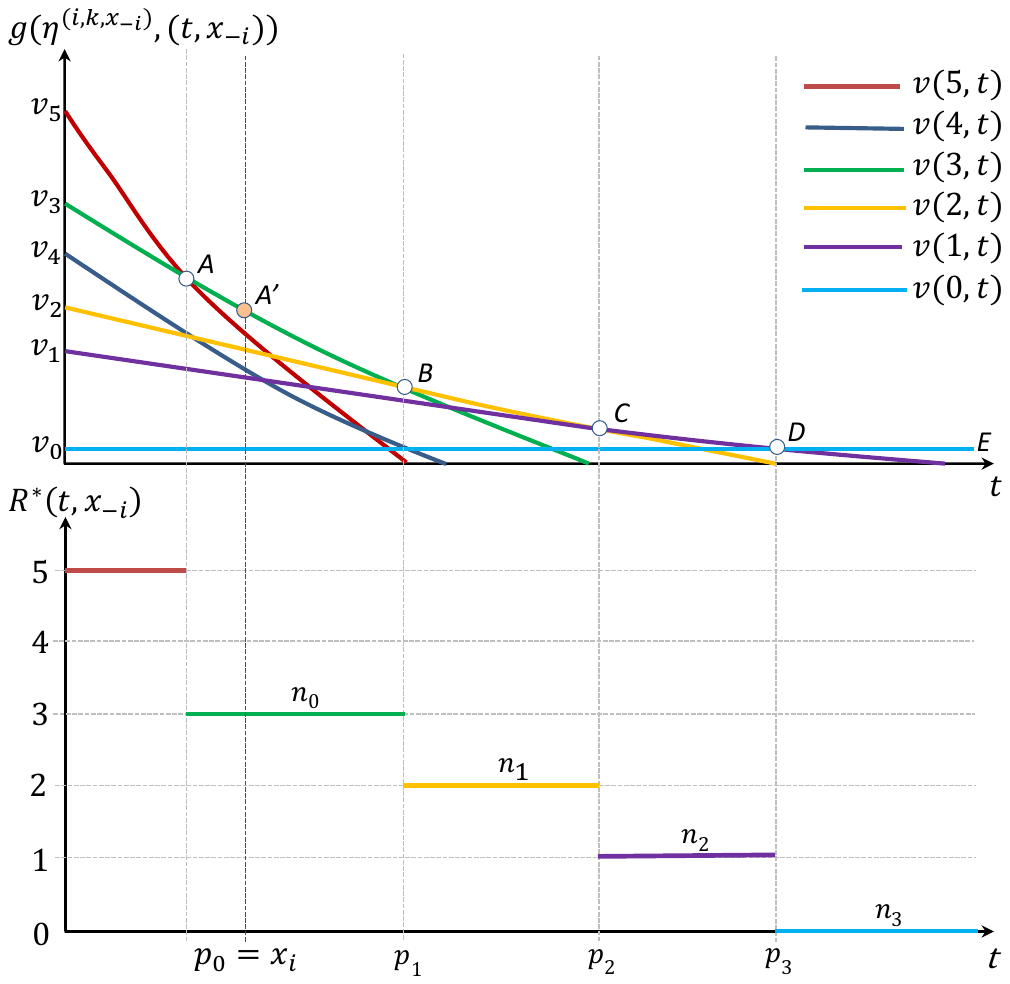}
	\caption{Turning points of agent $i$}
%	\vspace*{-4mm}
	\label{fg-pay}
\end{figure}

\begin{thm}\label{thm-turning}
	$\forall j\geq 1:$ If $n_{j-1} > 0$, then:
	$$p_j = \min_{0 \leq m < n_{j-1}} \lambda^{-1}_i(\frac{v_{n_{j-1}} - v_m}{n_{j-1} -m})$$
	$$n_j = \arg \min_{0 \leq m < n_{j-1}} \lambda^{-1}_i(\frac{v_{n_{j-1}} - v_m}{n_{j-1} -m})$$
	Otherwise, $p_j = +\infty$ and $n_j = 0$.
\end{thm}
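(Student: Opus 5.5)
We argue by induction on $j$. The invariant is that on the interval $(p_{j-1},p_j)$ the dominant social law is $\eta^{(i,n_{j-1},x_{-i})}$, and that $R^*_i(t,x_{-i})=n_{j-1}$ there. The base case $j=1$ holds because $p_0=x_i$ and $n_0=R^*_i(x_i,x_{-i})$. By Lemma~\ref{lm-restrict}, for every $t\ge x_i$ the maximizer of $g(\cdot,(t,x_{-i}))$ lies in the finite family $\{\eta^{(i,n,x_{-i})} : n\le n_0\}$. By Lemma~\ref{lm-eta}, each member of this family is a fixed social law independent of $t$. By Lemma~\ref{lm-vnt}, its objective value is the curve $v(n,t)=v_n-n\lambda_i(t)$. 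The allocation problem for $t\ge x_i$ therefore reduces to finding the upper envelope of at most $n_0+1$ curves, each affine in $\lambda_i(t)$.

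\textbf{Inductive step.} Suppose that at $t=p_{j-1}$ the top curve is $v(n_{j-1},\cdot)$ with $n_{j-1}>0$. Curves with $m>n_{j-1}$ can never regain the top for $t\ge p_{j-1}$: they lie weakly below at $p_{j-1}$ and restrict more of agent $i$'s actions, so Lemma~\ref{lm-nocatch} keeps them dominated. Only the curves with $0\le m<n_{j-1}$ remain as competitors. For such an $m$, we have $v(m,t)\ge v(n_{j-1},t)$ exactly when
\[
\lambda_i(t)\ \ge\ \frac{v_{n_{j-1}}-v_m}{n_{j-1}-m}.
\]
Since $\lambda_i$ is nondecreasing by the regularity assumption, this defines a crossing threshold $\tau_m=\lambda_i^{-1}\bigl(\frac{v_{n_{j-1}}-v_m}{n_{j-1}-m}\bigr)$. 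Curve $m$ lies below the current top for $t<\tau_m$ and weakly above it for $t\ge\tau_m$. Because $v(n_{j-1},\cdot)$ was the maximum at $p_{j-1}$, each $\tau_m\ge p_{j-1}$. The current top curve therefore stays on top exactly until $\min_m\tau_m$, which gives the formula for $p_j$, and the first curve to overtake it is the minimizer $m$, which gives $n_j$. The invariant then transfers to index $j$. Once $n_{j-1}=0$, there is no competitor with fewer restricted actions, and by Lemma~\ref{lm-nocatch} the $0$-curve stays on top forever, so $p_j=+\infty$ and $n_j=0$.

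\textbf{Main obstacle.} The hard step is making "first crossing equals next turning point" rigorous when several curves cross at the same threshold, or when $\lambda_i$ is only weakly monotone so that $\lambda_i^{-1}$ is not single-valued. I would define $\lambda_i^{-1}(y)=\inf\{t:\lambda_i(t)\ge y\}$, with the convention $\inf\emptyset=+\infty$. I would also fix the tie-breaking rule of the allocation step to favor fewer restricted actions. Under that rule, among several $m$ achieving the minimal $\tau_m$, the argmin must pick the one the mechanism actually selects. To justify this I would check that $v(m,t)-v(m',t)=(v_m-v_{m'})-(m-m')\lambda_i(t)$ for $m<m'$ both less than $n_{j-1}$. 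Comparing the thresholds then shows that, just to the right of $p_j$, the winning curve has the smallest slope among those tied at $p_j$. This is consistent with the step-function monotonicity from Lemma~\ref{lm-qni}, and choosing the smallest minimizing $m$ in the argmin resolves the ambiguity.
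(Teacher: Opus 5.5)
Your proposal is correct and follows essentially the same route as the paper: reduce via Lemmas~\ref{lm-restrict}, \ref{lm-eta} and \ref{lm-vnt} to the upper envelope of the curves $v_n-n\lambda_i(t)$, then take the earliest crossing $\lambda_i^{-1}\bigl(\frac{v_{n_{j-1}}-v_m}{n_{j-1}-m}\bigr)$ of a curve with $m<n_{j-1}$ against the current top curve (the paper discards the curves with $m>n_{j-1}$ by re-applying Lemma~\ref{lm-restrict} at $p_{j-1}$ rather than via Lemma~\ref{lm-nocatch}). Your extra treatment of ties and of a merely weakly increasing $\lambda_i$ goes beyond the paper; one fix is needed there: among curves tied at $p_j$, the winner is the one with the smallest $m$ (the least steep decline), not the ``smallest slope'', since the slope in $\lambda_i$ is $-m$.
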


With a precise representation of the curve $R^*_i(t_i,x_{-i})$, we can now determine the payments:

\begin{thm}\label{thm-tp}
	Let $k\in \mathbb{N}$ be the smallest index such that $n_k = 0$. Then the payment to agent $i$ is
	
	$$P^*_i(x_i,x_{-i})  = \sum_{1\leq i\leq k} (n_{i-1} - n_i)p_i $$%= n_0 p_1 +\sum_{2\leq i\leq k} n_{i-1}(p_i - p_{i-1})
\end{thm}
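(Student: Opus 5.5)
Starting from the payment formula of $\mathcal{PO}$-$\mathcal{ASL}$ (step 2 of Algorithm~\ref{alg:po-asl}),
$$P^*_i(x_i,x_{-i}) = R^*_i(x_i,x_{-i})\,x_i + \int_{x_i}^{+\infty} R^*_i(t_i,x_{-i})\,dt_i ,$$
I will evaluate the integral explicitly using the step-function description of $t_i\mapsto R^*_i(t_i,x_{-i})$ on $[x_i,+\infty)$. By the turning-point construction, with $p_0=x_i$ and $n_0=R^*_i(x_i,x_{-i})$, Lemma~\ref{lm-qni} and Theorem~\ref{thm-turning} give $R^*_i(t,x_{-i})=n_{j-1}$ for $t\in(p_{j-1},p_j)$ and $j=1,\dots,k$. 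Moreover $R^*_i(t,x_{-i})=n_k=0$ for $t>p_k$, where $k$ is the smallest index with $n_k=0$. Theorem~\ref{thm-turning} guarantees this $k$ is finite: each $n_j$ is strictly smaller than $n_{j-1}$ while $n_{j-1}>0$, so zero is reached in at most $n_0$ steps. It also guarantees $p_j$ is finite for $j\le k$, and the $p_j$ are nondecreasing.

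The integral then splits over the intervals $(p_{j-1},p_j)$; the endpoints have measure zero and the tail beyond $p_k$ contributes nothing:
$$\int_{x_i}^{+\infty} R^*_i(t,x_{-i})\,dt=\sum_{j=1}^{k} n_{j-1}(p_j-p_{j-1}).$$
Adding $n_0p_0=R^*_i(x_i,x_{-i})x_i$ and applying summation by parts (Abel summation) turns
$$n_0p_0+\sum_{j=1}^k n_{j-1}p_j-\sum_{j=1}^k n_{j-1}p_{j-1}$$
into $\sum_{j=1}^k (n_{j-1}-n_j)p_j$. Concretely, shift the index in the last sum, $\sum_{j=1}^{k} n_{j-1}p_{j-1}=n_0p_0+\sum_{j=1}^{k-1} n_jp_j$, and use $n_k=0$ to add the vanishing term $n_kp_k$. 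This is the claimed formula; the index $i$ in the theorem's sum is just a dummy variable.

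The main obstacle is justifying that the step description really holds on all of $[x_i,+\infty)$. Two points need care. First, between consecutive turning points the argmax does not change, which follows from Lemma~\ref{lm-restrict}, Lemma~\ref{lm-vnt} and Theorem~\ref{thm-turning}. Second, ties at the turning points themselves are immaterial because they are measure-zero. A further subtlety is the degenerate case where some $p_j=p_{j-1}$ (simultaneous crossings); there the corresponding interval has zero length and the telescoping identity still holds. I would also remark that if $n_0=0$ then $k=0$, the sum is empty, and the formula gives $P^*_i=0$, consistent with the original expression.
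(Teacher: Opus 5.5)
Your proposal is correct and follows essentially the same route as the paper: expand the integral in the $\mathcal{PO}$-$\mathcal{ASL}$ payment rule over the intervals $(p_{j-1},p_j)$, on which $R^*_i(\cdot,x_{-i})$ equals $n_{j-1}$, and then telescope using $n_k=0$. Your extra remarks on finiteness of $k$, measure-zero ties, coincident turning points and the $n_0=0$ case are harmless additions that the paper leaves implicit; the paper also takes the step description directly from the definition of the turning points rather than re-deriving it via Theorem~\ref{thm-turning}.
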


We now present algorithm \ref{alg:payment} for payment computation, which uses the subfunctions Allocation($\cdot$) and Allocation-Fix($\cdot$) to find out the dominant social law and the dominant $(i,n)$-social law, respectively. The implementation of these subroutines is deferred to Algorithms \ref{alg:allocation} \& \ref{alg:allocation-fix} later in this paper.

\begin{algorithm}[t]
	\caption{Payment}
	\label{alg:payment}
	\begin{algorithmic}
		\STATE \textbf{Input:} structure $S = \langle k, Q, q_s, \Pi, \pi, \varepsilon, c, \delta \rangle$, feature set $\mathcal{F}$, p.d.f. $f_i$ of each agent $i$'s cost, and bid profile $x$
		
		\FORALL{$i \in Ag$}
		\STATE $\eta^* \leftarrow \text{Allocation}(S, \mathcal{F}, f_1, \dots, f_k, x)$;
		\STATE $n_0 \leftarrow \sum_{q \in Q} |\eta_i^*(q)|$;
		\IF{$n_0 = 0$}
		\STATE $x_i \leftarrow 0$;
		\ELSE
		\FOR{$0 \leq n \leq n_0$}
		\STATE $\eta' \leftarrow \text{Allocation-Fix}(S, \mathcal{F}, f_1, \dots, f_k, x, i, n)$;
		\STATE $v_n \leftarrow g(\eta', (0, x_{-i}))$;
		\ENDFOR \COMMENT{determine all the $(i,n)$-social laws}
		\STATE $n_c \leftarrow n_0$; $j \leftarrow 1$;
		\WHILE{$n_c > 0$}
		\STATE $p_j \leftarrow \min_{0 \leq m < n_{j-1}} \lambda_i^{-1}\left( \frac{v_{n_{j-1}} - v_m}{n_{j-1} - m} \right)$;
		\STATE $n_j \leftarrow \arg\min_{0 \leq m < n_{j-1}} \lambda_i^{-1}\left( \frac{v_{n_{j-1}} - v_m}{n_{j-1} - m} \right)$;
		\STATE $n_c \leftarrow n_j$; $j++$;
		\ENDWHILE \COMMENT{find the TPs by theorem 14}
		\STATE $x_i \leftarrow \sum_{1 \leq i \leq j-1} (n_{i-1} - n_i) p_i$;
		\STATE \COMMENT{compute agent $i$'s payment by theorem 15}
		\ENDIF
		\ENDFOR
		\RETURN $(x_1, \dots, x_k)$
	\end{algorithmic}
\end{algorithm}
%	\vspace{2mm}

%	The correctness of algorithm 1 and its underlying idea can be captured by the following corollary and its proof.

\begin{cor}
	Algorithm 1 correctly computes the payment of mechanism $\mathcal{PO}$-$\mathcal{ASL}$.
\end{cor}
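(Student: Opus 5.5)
The corollary says that the Payment algorithm (Algorithm~\ref{alg:payment}; the statement says ``Algorithm 1'' but means this one) computes $P^*_i(x)=R^*_i(x)x_i+\int_{x_i}^{+\infty}R^*_i(t_i,x_{-i})\,dt_i$ for every agent $i$. I would prove it by following the algorithm line by line and matching each step to Lemmas~\ref{lm-eta}, \ref{lm-restrict}, \ref{lm-vnt} and Theorems~\ref{thm-turning}, \ref{thm-tp}. I assume the subroutines Allocation and Allocation-Fix correctly return the dominant social law and the dominant $(i,n)$-social law; their correctness is established later.

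\textbf{Case $n_0=0$.} Here $R^*_i(x_i,x_{-i})=0$. By Lemma~\ref{lm-qni}, $R^*_i(t_i,x_{-i})=0$ for all $t_i\ge x_i$, so the integral vanishes and $P^*_i(x)=0$. This is exactly what the algorithm outputs.

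\textbf{Case $n_0>0$.}
\begin{enumerate}
\item The loop over $0\le n\le n_0$ computes $\eta^{(i,n,(0,x_{-i}))}$ and $v_n=g(\eta^{(i,n,x_{-i})},(0,x_{-i}))$. By Lemma~\ref{lm-eta} this law does not depend on agent $i$'s bid, so it equals $\eta^{(i,n,x_{-i})}$. By Lemma~\ref{lm-vnt}, its value at any bid $t$ is $v_n-n\lambda_i(t)$.
\item Lemma~\ref{lm-restrict} shows that for every $t\ge x_i$ the dominant law lies among $\eta^{(i,n,x_{-i})}$ with $n\le n_0$. Hence the $n_0+1$ numbers $v_0,\dots,v_{n_0}$ fully determine $R^*_i(t,x_{-i})$ on $[x_i,+\infty)$.
\item The while loop is a literal implementation of the recursion in Theorem~\ref{thm-turning}, started from $n_0=R^*_i(x_i,x_{-i})$. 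I would show by induction on $j$ that after iteration $j$ the stored pair $(p_j,n_j)$ is the $j$-th turning point. Each iteration strictly decreases $n_j$ because the argmin ranges over $m<n_{j-1}$. So the loop terminates after at most $n_0$ iterations, with final index $j-1=k$ satisfying $n_k=0$.
\item The algorithm then outputs $\sum_{l=1}^{k}(n_{l-1}-n_l)p_l$. By Theorem~\ref{thm-tp} this equals $P^*_i(x)$.
\end{enumerate}

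The identity behind Theorem~\ref{thm-tp}, if one wants to recheck it, comes from summation by parts. With $p_0=x_i$,
\[
n_0x_i+\sum_{j=1}^{k} n_{j-1}(p_j-p_{j-1})=\sum_{j=1}^{k}(n_{j-1}-n_j)p_j,
\]
using $n_k=0$. The left-hand side is $R^*_i(x)x_i$ plus the integral of the step function.

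\textbf{Main obstacle.} The delicate point is ties. If several $m$ attain the minimum in Theorem~\ref{thm-turning}, or two turning points coincide, the step function might not match the allocation rule's tie-breaking. I would argue this is harmless for two reasons. First, $R^*_i$ is altered only on a measure-zero set of $t$, which does not change the integral. Second, choosing any minimizer yields the same telescoped sum, since zero-length intervals contribute nothing. A secondary issue is that $\lambda_i^{-1}$ is well defined only under regularity, where $\lambda_i$ is nondecreasing. I would take a generalized inverse $\lambda_i^{-1}(y)=\inf\{t:\lambda_i(t)\ge y\}$, and note that if the argument exceeds $\sup\lambda_i$ then $p_j=+\infty$ is never reached within the support $X_i$, consistent with the convention $p_j=+\infty$.
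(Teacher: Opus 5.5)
Your argument is the paper's own. You use the same split on $n_0$, invoke Lemma~\ref{lm-qni} to get $P^*_i(x)=0$ when $n_0=0$, and appeal to Theorems~\ref{thm-turning} and~\ref{thm-tp} when $n_0>0$. The termination argument, the summation-by-parts check, and the observation that any minimizer gives the same sum are useful details the paper leaves implicit.

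Two side remarks in your last paragraph need correcting, although your main line does not depend on them.

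\textbf{The case where no crossing occurs.} Suppose $\frac{v_{n_{j-1}}-v_m}{n_{j-1}-m}>\lambda_i(\omega_i)$ for every $m<n_{j-1}$. Putting $p_j=+\infty$ is then not ``consistent with the convention'': the paper sets $p_j=+\infty$ only when $n_{j-1}=0$. With $n_{j-1}>0$ the term $(n_{j-1}-n_j)p_j$ in Theorem~\ref{thm-tp} is infinite. Because bids lie in $X_i=[0,\omega_i]$, the integral in $P^*_i$ should be read with $R^*_i(t_i,x_{-i})=0$ for $t_i>\omega_i$, or equivalently $\lambda_i=+\infty$ there. Under that reading your generalized inverse returns $p_j=\omega_i$, not $+\infty$.

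\textbf{The measure-zero argument for ties.} The claim that ties alter $R^*_i$ only on a measure-zero set requires $\lambda_i$ to be strictly increasing. Under mere regularity, $\lambda_i$ may be constant on an interval, which is exactly the case your generalized inverse is meant to cover. A tie can then persist over that whole interval, and the payment depends on how Allocation breaks it. The choice $\inf\{t:\lambda_i(t)\ge y\}$ reproduces only the tie-breaking rule that favors fewer restrictions of agent $i$.
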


\subsection{Computational complexity} Allocation determination is the basic building block of the proposed mechanism, however we can show it is intractable.

\begin{lm}\label{lm-cdsl}
	Both {\sc Dominant Social Law} and {\sc Dominant $(i,n)$-Social Law} are FP$^{NP}$-complete.
\end{lm}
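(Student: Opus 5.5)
We prove membership and hardness separately for both problems, treating {\sc Dominant Social Law} first. The {\sc Dominant $(i,n)$-Social Law} case follows with small changes.

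\emph{Membership in FP$^{NP}$.} First we fix the encoding of the input. The structure $S$ is given explicitly, so a candidate social law $\eta$ has size polynomial in $|S|$. The quantities $\lambda_j(x_j)$ are treated as rationals computable in polynomial time from $f_j$, $F_j$ and the bid. For a given $\eta$, we compute $S\dag\eta$ in polynomial time. Since ATL model checking on explicit concurrent game structures is in P~\cite{Alur:JACM-02}, we can evaluate $v_{\mathcal{F}}(S\dag\eta)=\sum_{(\varphi_j,c_j)\in\mathcal{F};\,S\dag\eta,q_s\vDash\varphi_j}c_j$ in polynomial time, and therefore also $g(\eta,x)$. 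The decision problem ``is there an $\eta$ (with $\sum_q|\eta_i(q)|=n$, in the $(i,n)$ case) such that $g(\eta,x)\geq\theta$?'' is then in NP: guess $\eta$ and check it. The optimum value is a sum of polynomially many rationals of polynomial bit length, so it has polynomially many bits. Binary search on $\theta$ therefore finds it with polynomially many NP queries. Having the optimum $g^*$, we build an optimal $\eta$ bit by bit. For each pair $(j,q,a)$ we ask the NP oracle whether an $\eta$ achieving $g^*$ exists that extends the decisions fixed so far and also contains (or excludes) $a$ in $\eta_j(q)$, and we fix the bit accordingly. The constraint $\eta_j(q)\subset\varepsilon_j(q)$ and the cardinality constraint $n$ are easy to add to the NP verifier. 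This places both problems in FP$^{NP}$.

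\emph{Hardness.} We reduce from a known FP$^{NP}$-complete problem. The most convenient choices are the problem of synthesizing an effective social law, shown FP$^{NP}$-complete in~\cite{Wu:AAMAS-11}, or alternatively a canonical one such as {\sc Max-Weight Sat} / {\sc TSP} optimum. With effective-law synthesis, given $(S,\varphi)$ we form the feature set $\mathcal{F}=\{(\varphi,M)\}$. Here $M$ exceeds the largest possible total virtual cost, for example $M=1+\sum_i\lambda_i(x_i)\cdot\sum_q|\varepsilon_i(q)|$, and we may fix all bids so that the $\lambda_i$ are equal small constants. Any maximizer of $g$ then satisfies $\varphi$ whenever some effective law exists, and we can check in polynomial time whether the returned law is effective. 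This gives a metric (Krentel-style) reduction. If the source problem requires the \emph{least-cost} effective law, the cost term already encodes that objective.

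For the $(i,n)$ variant we pick a fresh dummy agent $i$ that has $n+1$ idle actions at a single state and no influence on transitions. Equivalently, we set $n=0$ for an agent whose actions do not affect $\delta$. The $(i,n)$-constrained optimum then coincides with the unconstrained optimum of the original instance. We still have to check that adding a dummy agent does not change the truth of ATL formulas; this holds because the dummy's actions are irrelevant, so the structures are alternating-bisimilar and Theorem 1 applies.

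The main obstacle is the hardness direction: we must make sure that the source problem's hardness genuinely concerns \emph{function} computation, i.e.\ FP$^{NP}$-hardness under metric reductions, and not just NP-hardness of a decision version. We also have to encode it without breaking the restriction to proper subsets $\eta_i(q)\subset\varepsilon_i(q)$. If citing~\cite{Wu:AAMAS-11} proves too loose, we would instead reduce directly from {\sc Max-Weight Sat}. In that construction each variable $x$ is a state where an owner agent has actions $\{\top,\bot,\mathit{stay}\}$, and restricting exactly one of $\top,\bot$ encodes a truth value. Clause satisfaction is expressed by ATL features $\llangle\emptyset\rrangle\bigcirc$-style formulas weighted by the clause weights, and restriction costs are made negligible, so that the optimal social law yields the optimal assignment.
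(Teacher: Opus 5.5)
Your membership argument is the paper's argument (the decision version is in NP, hence the function problem is in FP$^{NP}$), spelled out properly with binary search on the optimum and self-reduction. For hardness of {\sc Dominant Social Law}, the paper uses what you list as a fallback: a direct reduction from {\sc Max Weight Sat}. It takes one agent whose actions $a_1,\dots,a_m$ at $q_s$ lead to $q_1,\dots,q_m$, replaces each variable $\phi_j$ by $\llangle 1\rrangle\bigcirc\phi_j$, and sets the virtual cost to $0$. Make that your main argument. The route through~\cite{Wu:AAMAS-11} is not safe for two reasons. First, that work concerns Co-ATL laws that restrict coalitions' joint actions, not individual actions. Second, ``find an effective law'' is a witness-search problem, and NP-hardness of its decision version does not imply FP$^{NP}$-hardness under metric reductions. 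This is exactly the worry you raise yourself.

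Two details need fixing in your gadget. Consistency should come for free: substitute one ATL formula per variable into the clauses, as the paper does, so that $\neg x$ is the ATL negation of the formula for $x$. If ``$x$'' and ``$\neg x$'' instead test separate restrictions of $\top$ and $\bot$, a law restricting both makes $x$ and $\neg x$ true together, and the reduction fails. Also, features are evaluated at $q_s$, so the variable gadgets must be reachable from there; the paper's layout, with everything at $q_s$, is simplest. Your spare \emph{stay} action is a real improvement. In the paper's gadget as stated, restrictions must be proper subsets, so not all of $a_1,\dots,a_m$ can be restricted. The assignment making every $\phi_j$ false is then not realizable.

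For {\sc Dominant $(i,n)$-Social Law} your route is genuinely different. The paper gives a Turing reduction: it calls the $(i,n)$-oracle for every $n\in[0,N]$ and keeps the best law. That needs no semantic argument, but it only gives hardness under polynomially many queries. Your dummy agent gives a single-query reduction, which matches the usual metric notion of FP$^{NP}$-hardness. Its justification cannot rest on Theorem~1, however, because alternating bisimulation is defined only between structures over the same agent set. Argue directly by induction on formulas instead. Since $\delta$ ignores the dummy's action, $out(q,\vec m_A)$ is unchanged in $S\dag\eta$ for every $\eta$ and every coalition $A$ not containing the dummy. The cleanest version gives the dummy a single action in every state and takes $n=0$. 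The constraint is then vacuous, and the joint-action spaces are unchanged up to bijection.
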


\begin{thm}
	Mechanism $\mathcal{PO}$-$\mathcal{ASL}$ is FP$^{NP}$-complete.
\end{thm}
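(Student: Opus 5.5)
The statement asserts that computing the outcome $(R^*(x),P^*(x))$ of $\mathcal{PO}$-$\mathcal{ASL}$ is FP$^{NP}$-complete. I would prove membership and hardness separately, taking Lemma~\ref{lm-cdsl} and the payment reduction (Theorems~\ref{thm-turning} and~\ref{thm-tp}, together with the correctness corollary for the Payment algorithm) as given.

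\emph{Membership.} I would show that the whole outcome can be computed in polynomial time with an NP oracle. Since FP$^{NP}$ is closed under polynomially many oracle-aided subroutine calls, it suffices to bound the number and cost of those calls.
\begin{itemize}
\item The allocation $\hat{R}^*(x)$ is exactly the output of {\sc Dominant Social Law}. By Lemma~\ref{lm-cdsl} it is computable in FP$^{NP}$.
\item For payments, the Payment algorithm handles each agent $i$ with one allocation call and at most $n_0+1$ calls to {\sc Dominant $(i,n)$-Social Law}, each in FP$^{NP}$. Here $n_0 \le \sum_q |\varepsilon_i(q)|$, which is polynomial in the size of $S$.
\item The remaining steps evaluate $g(\eta',(0,x_{-i}))$, which is ATL model checking on $S\dag\eta'$ for each feature formula and is polynomial. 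They also run the turning-point loop, which has at most $n_0$ iterations with $O(n_0)$ candidates each, and finally sum the expression of Theorem~\ref{thm-tp}.
\item The one technical caveat is $\lambda_i^{-1}$. I would assume, as is implicit in the model, that $\lambda_i$ and its inverse are efficiently computable; for example, for uniform distributions $\lambda_i(x)=2x$.
\end{itemize}
Multiplying out, there are polynomially many FP$^{NP}$ calls over $k$ agents. This places the mechanism in FP$^{NP}$.

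\emph{Hardness.} I would reduce {\sc Dominant Social Law}, which is FP$^{NP}$-hard by Lemma~\ref{lm-cdsl}, to computing the mechanism's outcome. Given an instance $(S,\mathcal{F},f,x)$, I feed the same input to $\mathcal{PO}$-$\mathcal{ASL}$ and read off the allocation component $\hat{R}^*(x)$. By definition of step~1 of the mechanism, this is a maximizer of $g(\eta,x)$, i.e., a dominant social law. The reduction uses identity pre-processing and projection post-processing, so it is a valid metric/functional reduction.

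The main obstacle is one subtlety in this step: if the problem is read as computing only the payments, hardness does not follow trivially. To cover that reading, I would recover the dominant social law from the payments. The idea is to use the turning-point structure: the payment values encode $v_n$ differences, and one can fix actions one by one through self-reducibility calls. I expect this to be the delicate part, but with the allocation included in the output, the projection argument suffices.
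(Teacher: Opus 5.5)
Your proposal is correct and takes essentially the same route as the paper. Membership comes from polynomially many FP$^{NP}$ calls to {\sc Dominant Social Law} and {\sc Dominant $(i,n)$-Social Law} (the paper counts $k+1$ and at most $kN$ calls) plus polynomial post-processing, and hardness is inherited from Lemma~\ref{lm-cdsl}. You are more explicit than the paper's one-line proof, since you spell out the projection reduction for hardness and note that $\lambda_i^{-1}$ must be efficiently computable, both of which the paper leaves implicit.
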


Thus, we are unlikely to find efficient algorithms for directly computing $\mathcal{PO}$-$\mathcal{ASL}$. Methodologies for countering intractability should therefore be considered.

\vspace*{-1mm}	
\subsection{An ILP-based algorithm} 
%Since Integer-linear programming (ILP) is one of the most successful and widely-used approaches to solving computationally hard optimization problems, we are going to try to find out a solution based on ILP, and so it can be handled by current ILP solvers, which have already extensively used in solving industrial level large scale ILPs. 
We propose an approach that, given an instance of {\sc Dominant ($(i,n)$-)Social Law}, automatically constructs an ILP such that solutions to the ILP correspond exactly to solutions of the original problem instance. %The idea is the value of the objection function $g(\eta,x)$, which we aim to maximize, depends not only on whether each transition is selected but also on whether each formula $\varphi_i$ in the feature set $\mathcal{F} = \{(\varphi_1,c_1),...,(\varphi_k,c_k)\}$ is satisfied in the current state, which may further depend on whether each sub-formula of $\varphi_i$ is satisfied in the current state $s$ or the next states. Each of the above choice can be captured by a boolean variable, and the combined value assignments to all the boolean variables are constrained by the ATL semantics.

\vspace{1mm}	
\noindent\textbf{Closure:} Let $cl(\varphi)$ denote the \emph{closure} of a formula $\varphi$, i.e.,
\vspace*{-1mm}
\begin{equation}\label{eq-cl1}
	cl(\varphi) = \{\varphi\}\cup sub(\varphi)
\end{equation}
\vspace*{-2mm}
where
\begin{equation}\label{eq-sub}
	sub(\varphi)\!=\!
	\begin{cases}
		cl(\psi)\!\cup\! cl(\chi), &\!\! \text{if}~\varphi = \psi\vee\chi~\text{or}~\llangle A\rrangle\psi\mathcal{U}\chi \\
		cl(\psi), &\!\! \text{if}~\varphi = \neg \psi~\text{or}~\llangle A\rrangle \bigcirc \psi~\text{or}~\llangle A\rrangle\!\Box\! \psi\\
		\{\varphi\}, &\!\! \text{if}~\varphi\in \Pi
	\end{cases}
\end{equation}
	According to the above definition, the closure of a formula intrinsically forms a tree-like structure.
E.g., Example 2 (in the appendix). The number of formulas in $cl(\varphi)$ is called the \emph{size} (or \emph{length}) of formula $\varphi$. A formula's size directly reflects its construction complexity and closely connects to the complexity of related algorithms. For example, ATL model checking has a $\mathcal{O}(t\cdot l)$ time algorithm, where $t$ is the number of transitions in the given concurrent game structure and $l$ is the size of the given formula~\cite{Alur:JACM-02}. We also use these parameters to characterize the size of instances of social law synthesizing problem.

For any feature set $\mathcal{F}=\{(\varphi_1,c_1),...,(\varphi_n,c_n)\}$, we let
\begin{equation}\label{eq-cl}
	cl(\mathcal{F})=cl(\varphi_1)\cup...\cup cl(\varphi_n)
\end{equation}
Actually, $cl(\mathcal{F})$ is the set of all the formulas which can possibly influence the valuation of the objective function. 

\vspace{1mm}	
\noindent\textbf{Variables:}
We can first of all introduce the following 4 classes of boolean variables:
\begin{itemize}
	\item[1)] $x^q_\varphi\in\{0,1\},\forall q\in Q, \varphi\in cl(\mathcal{F})$
	\item[2)] $y^q_{i:a}\in\{0,1\}, \forall q\in Q, i\in Ag, a\in \varepsilon_i(q)$
	\item[3)] $y^q_{A:\vec{m}_A}\in\{0,1\},\forall q\in Q, A\subseteq Ag, \vec{m}_A\in D_A(q)$
	\item[4)] $z^{q,\varphi}_{A:\vec{m}_A}\!\!\!\in\!\{0,1\},\forall q\in Q,\varphi\in cl(\mathcal{F}),A\subseteq Ag,\vec{m}_A\in D_A(q)$
\end{itemize}
where
\begin{itemize}
	\item[1)] $x^q_\varphi=1$ iff $S\dag\eta,q\vDash\varphi$;
	\item[2)] $y^q_{i:a}=1$ iff at state $q$, agent $i$'s action $a$ is forbidden;
	\item[3)] $y^q_{A:\vec{m}_A}=1$ iff at state $q$, joint action $\vec{m}_A$ is forbidden;
	\item[4)] $z^{q,\varphi}_{A:\vec{m}_A}=1$ iff at state $q$, coalition $A$ adopting the joint action $\vec{m}_A$ can guarantee that $\varphi$ is satisfied in the next state.
\end{itemize}

We now introduce additional intermediate variables and transform these constraints into an equivalent set of ILP constraints, yielding the following ILP for solving the {\sc Dominant Social Law} problem. 

Note that, we focus on the state set $Q$, formula set $cl(\mathcal{F})$, and agent set $Ag$, so for example we write $\forall q,i,\varphi$ as an abbreviation for $\forall q\in Q,i\in Ag,\varphi\in cl(\mathcal{F})$. When the state $q$ is clear from the context, we also write $\forall \vec{m}_A$ as an abbreviation for $\forall \vec{m}_A\in D_A(q)$, and write $\forall \vec{m}_{\bar{A}}$ as an abbreviation for $\forall \vec{m}_{\bar{A}}\in D_{\bar{A}}(q)$. Moreover, $\forall \llangle A\rrangle$ means for every operator $\llangle A\rrangle$ that appear in a formula in $cl(\mathcal{F})$. %
%\vspace{2mm}
\begin{center}
\noindent\fbox{%  
	\parbox{445pt}{%  
		\begin{flushleft}  
			{\sc ILP-Dom-SL}($S, \mathcal{F}, f_1,...,f_k, x$): %{\sc ILP-Dominant-IN-SL}($S, \mathcal{F}, f_1,...,f_k, x, i, n$)
			maximize 
			\vspace{-3mm}
			\begin{equation}\label{ilp-object}
				\sum_{(\varphi_i,c_i)\in\mathcal{F}}c_j\cdot x^{q_s}_{\varphi_j}-\sum_{i\in Ag}\sum_{q\in Q}\sum_{a\in \epsilon_i(q)}(x_i + \frac{F_i(x_i)}{f_i(x_i)}) y^q_{i:a}
			\end{equation}
			%				\qquad\\
			subject to: 
			\vspace{-4mm}
			\setlength{\jot}{1pt} 
			\begin{eqnarray}
				x^q_{\varphi}\in \{0,1\},~~\forall q, \varphi\label{ilp-c1}\\                 
				y^q_{i:a}\in\{0,1\},~~\forall q, i, a\in \varepsilon_i(q) \label{ilp-c2}\\
				\sum_{a\in\varepsilon_i(q)}(1-y^q_{i:a})\geq 1, \forall q,i \label{ilp-c14}\\
				y^q_{A:\vec{m}_A}\in\{0,1\},~~\forall q, \llangle A\rrangle~\text{or}~\llangle \bar{A}\rrangle, \vec{m}_A \label{ilp-c3}\\
				y^q_{A:\vec{m}_A} \geq y^q_{i:\vec{m}_A[i]}, ~~\forall q, \llangle A\rrangle~\text{or}~\llangle \bar{A}\rrangle, \vec{m}_A, i \label{ilp-c5}\\
				y^q_{A:\vec{m}_A} \leq \sum_{i\in A} y^q_{i:\vec{m}_A[i]}, ~~\forall q, \llangle A\rrangle~\text{or}~\llangle \bar{A}\rrangle, \vec{m}_A\label{ilp-c6}\\
				s^{q,\varphi}_{A:\vec{m}_A, \vec{m}_{\bar{A}}}\in \{0,1\},
				\forall q, \varphi, \llangle A\rrangle,\vec{m}_A,\vec{m}_{\bar{A}}\label{ilp-c7}\\
				s^{q,\varphi}_{A:\vec{m}_A, \vec{m}_{\bar{A}}} \geq y^q_{\bar{A}:\vec{m}_{\bar{A}}},
				\forall q, \varphi, \llangle A\rrangle,\vec{m}_A,\vec{m}_{\bar{A}}\label{ilp-c8}\\
				s^{q,\varphi}_{A:\vec{m}_A, \vec{m}_{\bar{A}}} \geq x^{\delta(q, (\vec{m}_A,\vec{m}_{\bar{A}}))}_\varphi, \forall q, \varphi, \llangle A\rrangle,\vec{m}_{A}, \vec{m}_{\bar{A}}\label{ilp-c9}\\
				s^{q,\varphi}_{A:\vec{m}_A, \vec{m}_{\bar{A}}}\! \leq\! y^q_{\bar{A}:\vec{m}_{\bar{A}}}\!\!\!\!+\! x^{\delta(\!q\!, (\!\vec{m}_A,\!\vec{m}_{\bar{A}}\!)\!)}_\varphi, \forall q,\! \varphi, \!\llangle A\rrangle,\!\vec{m}_A,\!\vec{m}_{\bar{A}}\label{ilp-c10}\\ 
				z^{q,\varphi}_{A:\vec{m}_A}\in\{0,1\},
				\forall q, \varphi, \llangle A\rrangle, \vec{m}_A\label{ilp-c4}\\						z^{q,\varphi}_{A:\vec{m}_A}\leq s^{q,\varphi}_{A:\vec{m}_A, \vec{m}_{\bar{A}}},\forall q, \varphi, \llangle A\rrangle,\vec{m}_A,\vec{m}_{\bar{A}}\label{ilp-c11}\\ 
				z^{q,\varphi}_{A:\vec{m}_A}\geq\! 1\!-\!\!\!\!\sum_{\vec{m}_{\bar{A}}\in D_{\bar{A}}(q)} \!(1-s^{q,\varphi}_{A:\vec{m}_A, \vec{m}_{\bar{A}}}),\forall q, \varphi, \llangle A\rrangle,\vec{m}_A\label{ilp-c12}\\ 
				x^q_p=1, \forall q, p\in (\Pi\cap cl(\mathcal{F}))\cap \pi(q)\label{ilp-c15}\\
				x^q_p=0, \forall q, p\in (\Pi\cap cl(\mathcal{F}))\setminus \pi(q)\label{ilp-c16}\\
				x^q_{\neg\varphi} =  1- x^q_{\varphi}, ~~\forall q, \neg\varphi\label{ilp-c17}\\
				x^q_{\psi\vee\chi} \geq  x^q_{\psi},~~\forall q, \psi\vee\chi\label{ilp-c18}\\	
				x^q_{\psi\vee\chi} \geq  x^q_{\chi},~~\forall q, \psi\vee\chi\label{ilp-c19}\\
				x^q_{\psi\vee\chi} \leq  x^q_{\psi}+x^q_{\chi},~~\forall q, \psi\vee\chi\label{ilp-c20}\\
				e^{q,\varphi}_{A:\vec{m}_A}\in \{0,1\}, \forall q, \varphi, \llangle A\rrangle, \vec{m}_A\label{ilp-c21}\\
				e^{q,\varphi}_{A:\vec{m}_A}\geq z^{q,\varphi}_{A:\vec{m}_A}-y^q_{A:\vec{m}_A},\forall q, \varphi, \llangle A\rrangle, \vec{m}_A\in q\label{ilp-c22}\\ 
				e^{q,\varphi}_{A:\vec{m}_A}\leq z^{q,\varphi}_{A:\vec{m}_A}, \forall q, \varphi, \llangle A\rrangle, \vec{m}_A\label{ilp-c23}\\
				e^{q,\varphi}_{A:\vec{m}_A}\leq 1- y^q_{A:\vec{m}_A}, \forall q, \varphi, \llangle A\rrangle, \vec{m}_A\label{ilp-c24}\\
				x^q_{\llangle A\rrangle \bigcirc\varphi} \geq e^{q,\varphi}_{A:\vec{m}_A},~~\forall q, \llangle A\rrangle\bigcirc\varphi,\vec{m}_A\label{ilp-c25}\\	
				x^q_{\llangle A\rrangle \bigcirc\varphi} \leq \sum_{\vec{m}_A\in D_A(q)} e^{q,\varphi}_{A:\vec{m}_A},~~\forall q, \llangle A\rrangle\bigcirc\varphi\label{ilp-c26}\\
				r^q_{\llangle A\rrangle \psi\mathcal{U}\chi} \in \{0,1\},~~\forall q, \llangle A\rrangle\psi\mathcal{U}\chi\label{ilp-c27}\\
				r^q_{\llangle A\rrangle \psi\mathcal{U}\chi} \leq x^q_{\psi},~~\forall q, \llangle A\rrangle\psi\mathcal{U}\chi\label{ilp-c28}\\
				r^q_{\llangle A\rrangle \psi\mathcal{U}\chi} \leq \sum_{\vec{m}_A\in D_A(q)} e^{q,\llangle A\rrangle\psi\mathcal{U}\chi}_{A:\vec{m}_A}, \forall q, \llangle A\rrangle\psi\mathcal{U}\chi\label{ilp-c29}\\
				r^q_{\llangle A\rrangle \psi\mathcal{U}\chi} \geq  x^q_{\psi}+ e^{q,\!\llangle\! A\!\rrangle\psi\mathcal{U}\chi}_{A:\vec{m}_A}-1,\! \forall q,\! \llangle\! A\!\rrangle\psi\mathcal{U}\chi,\!\vec{m}_A\label{ilp-c30}\\
				x^q_{\llangle A\rrangle \psi\mathcal{U}\chi} \geq x^q_{\chi},\forall q, \llangle A\rrangle\psi\mathcal{U}\chi\label{ilp-c31}\\
				x^q_{\llangle A\rrangle \psi\mathcal{U}\chi} \geq r^q_{\llangle A\rrangle \psi\mathcal{U}\chi},\forall q, \llangle A\rrangle\psi\mathcal{U}\chi\label{ilp-c32}\\
				x^q_{\llangle A\rrangle \psi\mathcal{U}\chi} \leq x^q_{\chi}+r^q_{\llangle A\rrangle \psi\mathcal{U}\chi},\forall q, \llangle A\rrangle\psi\mathcal{U}\chi\label{ilp-c33}\\
				x^q_{\llangle A\rrangle \Box\varphi} \leq x^q_{\varphi},\forall q, \llangle A\rrangle\!\Box\!\varphi\label{ilp-c34}\\
				x^q_{\llangle A\rrangle \Box\varphi} \leq \sum_{\vec{m}_A\in D_A(q)} e^{q,\llangle A\rrangle\Box\varphi}_{A:\vec{m}_A},\forall q, \llangle A\rrangle\!\Box\!\varphi\label{ilp-c35}\\   	
				x^q_{\llangle A\rrangle \Box\varphi}\! \geq   x^q_{\varphi}+ e^{q,\llangle\! A\!\rrangle\Box\varphi}_{A:\vec{m}_A} - 1,\forall q, \llangle\! A\!\rrangle\!\Box\!\varphi,\vec{m}_A\label{ilp-c36} 
			\end{eqnarray}		
		\end{flushleft}  
	} 
} 
\end{center}

We define {\sc ILP-Dom-IN-SL}($S, \mathcal{F}, f_1,...,f_k, x, i, n$) as the ILP obtained from {\sc ILP-Dom-SL}($S, \mathcal{F}, f_1,...,f_k, x$) by aiding the additional constraint:
\begin{equation}\label{ilp-c4'}
	\sum_{q\in Q}\sum_{a\in \epsilon_i(q)} y^q_{i:a} = n 
\end{equation}
%$\eta_{\ell^i_{s_{m-1}}}$
The dependencies between the variables can be vividly depicted as figure~\ref{fig:dep}.% that is, the assignments to variables of the form $x^q_{\llangle A\rrangle \bigcirc\varphi}$, $x^q_{\llangle A\rrangle \varphi\mathcal{U}\psi}$ and $x^q_{\llangle A\rrangle \Box\varphi}$ finally depend on the assignments to the variables of the form $x^q_\varphi$ and $y^q_{i:a}$, through some intermediate variables.
%\vspace{-4mm}
\begin{figure}[h]
	\centering
	\includegraphics[width=0.55\linewidth]{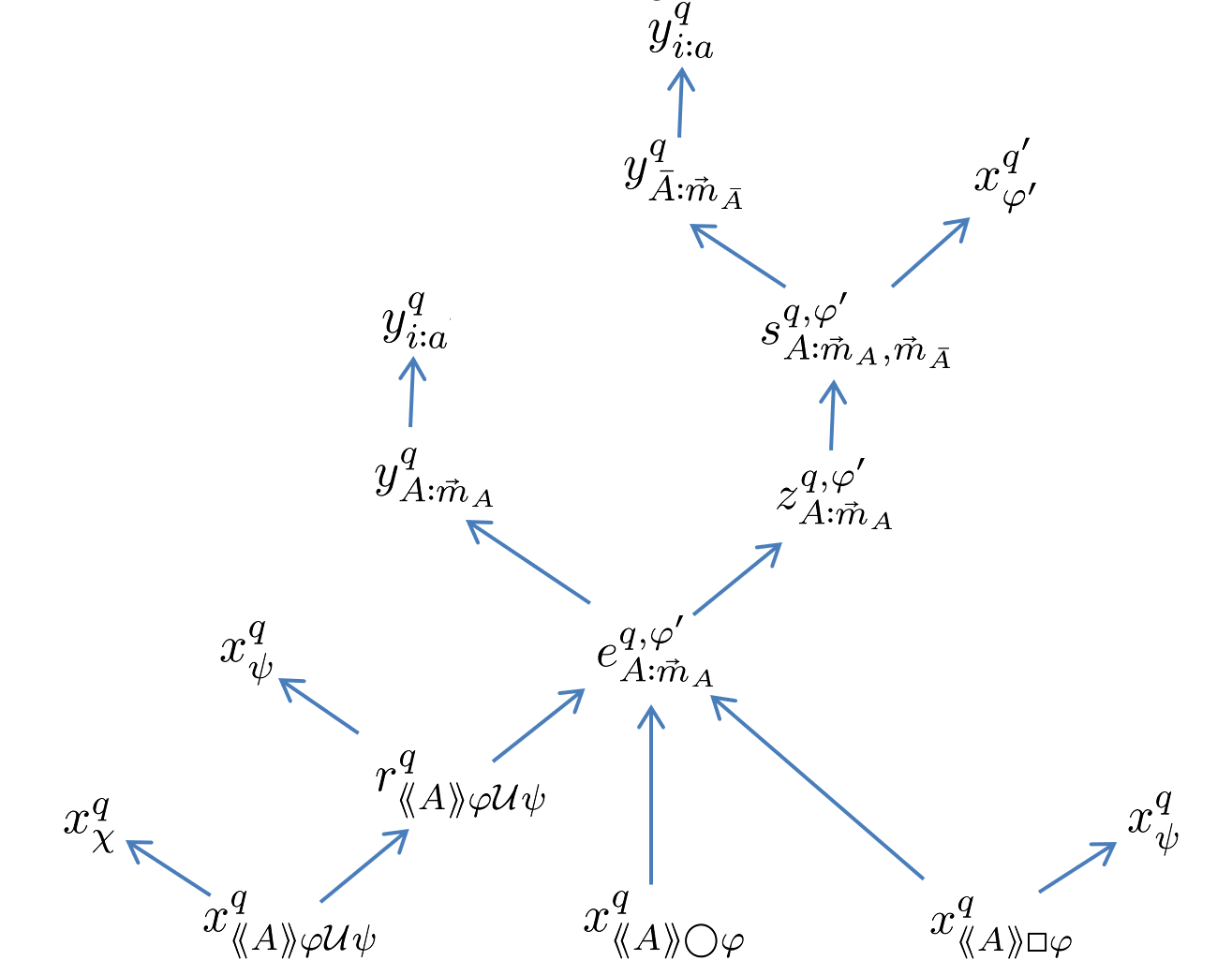}
	\caption{The dependencies between the variables}
%	\vspace*{-3mm}
	\label{fig:dep}
\end{figure}

The solution to {\sc ILP-Dom-SL} or {\sc ILP-Dom-IN-SL} is actually an assignment $\ell$, which assigns an $0$ or $1$ to each of the variables under all the constraints. We let $\eta_\ell(i,q)$ be the social law corresponding to assignment $\ell$, that is,
\begin{equation}
	\eta_\ell(i,q)=\{a\in \varepsilon_i(q)| \ell(y^q_{i:a})=1\}
\end{equation}
Then, we can further prove the following results:

\begin{lm}\label{lm-x16}~		
	\begin{itemize}
		\item[1)] $\forall q, i, a$: $\ell(y^q_{i:a})=1$ iff $a\in\eta_\ell(i,q)$;
		\item[2)] $\forall q, A, \vec{m}_A$: $\ell(y^q_{A:\vec{m}_A})=1$ iff $\exists i\in A: \ell(y^q_{i:\vec{m}_A[i]}) = 1$;
		\item[3)] $\forall q, \llangle A\rrangle, \vec{m}_A,\vec{m}_{\bar{A}}$: $\ell(s^{q,\varphi}_{A:\vec{m}_A, \vec{m}_{\bar{A}}}) = 1$ iff $\ell(y^q_{\bar{A}:\vec{m}_{\bar{A}}})=1$ or $\ell(x^{\delta(q, (\vec{m}_A,\vec{m}_{\bar{A}}))}_\varphi) = 1$;
		\item[4)] $\ell(z^{q,\varphi}_{A:\vec{m}_A})=1$ iff $\forall \vec{m}_{\bar{A}}: $$\ell(s^{q,\varphi}_{A:\vec{m}_A, \vec{m}_{\bar{A}}}) = 1$;
		\item[5)] $\forall q, \varphi, A, \vec{m}_A$: $\ell(e^{q,\varphi}_{A:\vec{m}_A}) = 1$ iff $\ell(y^q_{A:\vec{m}_A})=0$ and ;
		\item[6)] $\forall q, \llangle A\rrangle\psi\mathcal{U}\chi$: $\ell(r^q_{\llangle A\rrangle \psi\mathcal{U}\chi}) = 1$ iff $\ell(x^q_{\psi})=1$ and $\exists \vec{m}_A: (\ell(y^q_{A:\vec{m}_A})=0$ and $\ell(z^{q,\llangle A\rrangle\psi\mathcal{U}\chi}_{A:\vec{m}_A})=1)$.
	\end{itemize}
\end{lm}

\begin{lm}\label{lm-c} For any $\varphi\in cl(\mathcal{F})$, $\ell(x^q_\varphi)=1$ iff $S\dag\eta_{\ell},q\vDash\varphi$.
\end{lm}

Finally, the following theorem follows from the above two lemmas. It shows that the proposed ILP correctly computes the allocation function.

\begin{thm}\label{thm-17}~~
	\begin{itemize}
		\item[1)] If $\ell$ is a solution to {\sc ILP-Dom-SL}, then $\eta_\ell$ is a dominant social law under the bid profile $x$;
		\item[2)] If $\ell$ is a solution to {\sc ILP-Dom-IN-SL}, then $\eta_\ell$ is a dominant $(i,n)$-social law under the bid profile $x$.
	\end{itemize}	
\end{thm}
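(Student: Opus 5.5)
The plan is to combine Lemma~\ref{lm-x16} and Lemma~\ref{lm-c} into a two-way correspondence between feasible ILP assignments and social laws. Once that correspondence is in place, optimality transfers immediately because the ILP objective (\ref{ilp-object}) coincides with $g(\eta_\ell,x)$.

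\emph{Step 1 (objective agreement).} Let $\ell$ be any feasible assignment of {\sc ILP-Dom-SL}. Constraint (\ref{ilp-c14}) guarantees that $\eta_\ell$ leaves at least one action available to every agent in every state, so $\eta_\ell\in\mathcal{SL}_S$. By Lemma~\ref{lm-x16}(1), the penalty term $\sum_i\sum_q\sum_a\lambda_i(x_i)\ell(y^q_{i:a})$ equals $\sum_i\lambda_i(x_i)\sum_q|(\eta_\ell)_i(q)|$. By Lemma~\ref{lm-c} applied at $q_s$, $\ell(x^{q_s}_{\varphi_j})=1$ iff $S\dag\eta_\ell,q_s\vDash\varphi_j$. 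Hence the value term is $\sigma(\mathcal{F},S\dag\eta_\ell)=v_{\mathcal F}(S\dag\eta_\ell)$, and the ILP objective value of $\ell$ is exactly $g(\eta_\ell,x)$ as in (\ref{eq-obj}).

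\emph{Step 2 (surjectivity).} Conversely, for every $\eta\in\mathcal{SL}_S$ we need a feasible $\ell$ with $\eta_\ell=\eta$. Construct it by setting each variable to its intended meaning:
\begin{itemize}
\item $y^q_{i:a}=1$ iff $a\in\eta_i(q)$;
\item $y^q_{A:\vec m_A}$ is the disjunction over $i\in A$ of the corresponding $y^q_{i:\vec m_A[i]}$;
\item $x^q_\varphi$ is the truth value of $S\dag\eta,q\vDash\varphi$;
\item $s$, $z$, $e$ and $r$ are defined by the right-hand sides in Lemma~\ref{lm-x16}.
\end{itemize}
We then check each constraint family: the linearizations (\ref{ilp-c5})--(\ref{ilp-c12}) and (\ref{ilp-c21})--(\ref{ilp-c24}) hold by the standard encodings of OR and AND. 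The temporal constraints (\ref{ilp-c25})--(\ref{ilp-c36}) hold because the ATL fixpoint characterizations are satisfied in $S\dag\eta$. These are $\llangle A\rrangle\psi\mathcal U\chi\equiv\chi\vee(\psi\wedge\llangle A\rrangle\bigcirc\llangle A\rrangle\psi\mathcal U\chi)$ and $\llangle A\rrangle\Box\varphi\equiv\varphi\wedge\llangle A\rrangle\bigcirc\llangle A\rrangle\Box\varphi$. Constraint (\ref{ilp-c14}) holds since $\eta_i(q)\subset\varepsilon_i(q)$ is proper. For {\sc ILP-Dom-IN-SL}, if $\eta\in\mathcal{SL}^{(i,n)}_S$ then (\ref{ilp-c4'}) holds as well. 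Conversely, any feasible $\ell$ of that ILP gives $\eta_\ell\in\mathcal{SL}^{(i,n)}_S$ by Lemma~\ref{lm-x16}(1).

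\emph{Step 3 (conclusion).} Let $\ell$ be an optimal solution. For any competitor $\eta$, Step 2 supplies a feasible $\ell'$ with $\eta_{\ell'}=\eta$. Step 1 then gives $g(\eta_\ell,x)$ equal to the objective of $\ell$, which is at least the objective of $\ell'$, which equals $g(\eta,x)$. Therefore $\eta_\ell$ maximizes $g(\cdot,x)$ over $\mathcal{SL}_S$, respectively over $\mathcal{SL}^{(i,n)}_S$, which proves both items.

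The main obstacle is Step 2, in particular the fixpoint constraints for $\mathcal U$ and $\Box$. Those constraints by themselves admit non-intended fixpoints, which is exactly why Lemma~\ref{lm-c} is stated for every feasible $\ell$. Here I only need existence of a feasible assignment for each $\eta$, and the true-semantics assignment suffices because it satisfies the one-step unfolding equations. So I would verify each of (\ref{ilp-c28})--(\ref{ilp-c36}) directly against those unfolding identities, taking care that the $e$ variables encode ``$\vec m_A$ is still available after restriction and forces the subformula in every successor''.
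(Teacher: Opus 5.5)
Your decomposition is the paper's own. Its proof likewise uses constraint~(\ref{ilp-c14}) and Lemma~\ref{lm-c} to claim a ``one-to-one correspondence'' between feasible assignments and social laws with matching objective. Your Step~2 usefully supplies the surjectivity the paper only asserts, and it correctly shows that the ILP optimum is at least $\max_{\eta} g(\eta,x)$.

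The gap is the converse, Step~1. It invokes Lemma~\ref{lm-c} for an arbitrary feasible $\ell$, and Step~3 needs it exactly for the optimal one. But, as you observe yourself, (\ref{ilp-c27})--(\ref{ilp-c36}) impose only the one-step unfoldings of $\mathcal{U}$ and $\Box$. They are therefore also satisfied by non-least fixpoints (for $\mathcal{U}$) and non-greatest fixpoints (for $\Box$), and for such $\ell$ it is \emph{not} true that $\ell(x^q_\varphi)=1$ iff $S\dag\eta_\ell,q\vDash\varphi$. Your closing remark treats these spurious fixpoints as compatible with Lemma~\ref{lm-c}; in fact they refute it. Its inductive proof applies the hypothesis to $\llangle A\rrangle\bigcirc\llangle A\rrangle\psi\mathcal{U}\chi$, which is not a subformula of $\llangle A\rrangle\psi\mathcal{U}\chi$, so the induction is circular. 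Since the maximizer is free to pick a spurious fixpoint, identifying the objective value of $\ell$ with $g(\eta_\ell,x)$ in Step~3 is unsupported.

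This is not merely a presentational gap: Lemma~\ref{lm-c}, and with it the theorem, fails. Here is a counterexample.
\begin{itemize}
\item Take $Ag=\{1\}$, $Q=\{q_0,q_1\}$, $q_s=q_0$, $\pi(q_0)=\{p\}$, $\pi(q_1)=\emptyset$.
\item Let $\varepsilon_1(q_0)=\{a,b\}$ with $\delta(q_0,a)=q_0$ and $\delta(q_0,b)=q_1$, and give $q_1$ a single self-loop.
\item Let $\mathcal{F}=\{(\varphi,10)\}$ with $\varphi=\llangle\emptyset\rrangle p\,\mathcal{U}\,p'$, where $p'$ holds nowhere.
\end{itemize}
Then $v_{\mathcal{F}}\equiv 0$, so for $\lambda_1(x_1)>0$ the unique dominant law is $\emptyset$.

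In the ILP, (\ref{ilp-c16}), (\ref{ilp-c28}) and (\ref{ilp-c33}) force $x^{q_1}_\varphi=0$. If $\eta_\ell=\emptyset$, the chain (\ref{ilp-c6}), (\ref{ilp-c10}), (\ref{ilp-c11}), (\ref{ilp-c23}), (\ref{ilp-c29}), (\ref{ilp-c33}) then forces $x^{q_0}_\varphi=0$, giving objective $0$. With $y^{q_0}_{1:b}=1$, however, the self-supporting assignment $x^{q_0}_\varphi=r^{q_0}_\varphi=e^{q_0,\varphi}_{\emptyset:()}=z^{q_0,\varphi}_{\emptyset:()}=1$ (both $s$-variables at $q_0$ equal to $1$) is feasible, with objective $10-\lambda_1(x_1)$. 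So for $0<\lambda_1(x_1)<10$ the ILP returns $\{b\}$, whose true value $-\lambda_1(x_1)$ is not maximal. Adding the feature $(\llangle\emptyset\rrangle\bigcirc\neg p,1)$ breaks item~2) for $n=1$ in the same way.

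To close the gap you need an encoding in which every feasible assignment carries the intended fixpoint. For instance, replace each $\mathcal{U}$/$\Box$ variable by layers $L_k=x^{q,k}_\varphi$, $0\le k\le |Q|$, built as follows.
\begin{itemize}
\item For $\mathcal{U}$: $L_0$ is $\chi$, and $L_k$ is $\chi\vee(\psi\wedge\llangle A\rrangle\bigcirc L_{k-1})$.
\item For $\Box$: $L_0$ is $\psi$, and $L_k$ is $\psi\wedge\llangle A\rrangle\bigcirc L_{k-1}$.
\end{itemize}
Each $L_k$ is encoded with the existing $\bigcirc$-gadget. This is acyclic and still polynomial, Lemma~\ref{lm-c} then holds by induction on formula structure and $k$, and your Steps~1--3 go through unchanged.
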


% Algorithm 2: Allocation
\begin{algorithm}[t]
	\caption{Allocation}
	\label{alg:allocation}
	\begin{algorithmic}
		\STATE \textbf{Input:} structure $S = \langle k, Q, q_s, \Pi, \pi, \varepsilon, c, \delta \rangle$, feature set $\mathcal{F}$, p. d. f. $f_i$ of each agent $i$'s cost, and bid profile $x$
		
		\STATE Generate {\sc ILP-Dom-SL}$(S, \mathcal{F}, f_1, \dots, f_k, x)$;
		\STATE $\ell \leftarrow $ Solving  {\sc ILP-Dom-SL} $(S, \mathcal{F}, f_1, \dots, f_k, x)$;
		\FORALL{$i \in Ag$, $q \in Q$}
		\STATE $\eta_\ell(i,q) \leftarrow \{a \in \varepsilon_i(q) \mid \ell(y_{i:a}^q) = 1\}$
		\ENDFOR
		\RETURN $\eta_\ell$
	\end{algorithmic}
\end{algorithm}
%\vspace{-2mm}	
% Algorithm 3: Allocation-Fix
\begin{algorithm}[t]
	\caption{Allocation-Fix}
	\label{alg:allocation-fix}
	\begin{algorithmic}
		\STATE \textbf{Input:} structure $S = \langle k, Q, q_s, \Pi, \pi, \varepsilon, c, \delta \rangle$, feature set $\mathcal{F}$, p. d. f. $f_i$ of each agent $i$'s cost, bid profile $x$, $i \in Ag$ and $n \in \mathbb{N}$
		
		\STATE Generate  {\sc ILP-Dom-IN-SL} $(S, \mathcal{F}, f_1, \dots, f_k, x, i, n)$;
		\STATE $\ell \leftarrow $ Solving {\sc ILP-Dom-IN-SL}$(S, \mathcal{F}, f_1, \dots, f_k, x, i, n)$;
		\FORALL{$i \in Ag$, $q \in Q$}
		\STATE $\eta_\ell(i,q) \leftarrow \{a \in \varepsilon_i(q) \mid \ell(y_{i:a}^q) = 1\}$
		\ENDFOR
		\RETURN $\eta_\ell$
	\end{algorithmic}
\end{algorithm}

\begin{thm}\label{thm-cipl}
	Both {\sc ILP-Dom-SL} and {\sc ILP-Dom-IN-SL} can be generated in $\mathcal{O}(|Q|\cdot t\cdot l^2)$ time, where $|Q|$, $t$ are respectively the state number and state transition number of the given structure, and $l$ is the total length of the formulas in the given feature set.
\end{thm}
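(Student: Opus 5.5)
The plan is to bound the cost of generating the ILP by counting variables and constraints class by class, and then charging each class to one of the three parameters $|Q|$, $t$ and $l$. First I fix the size measures. I interpret $t$ as the total number of transitions, $t=\sum_{q\in Q}|D(q)|$, so that for any coalition $A$ and any state $q$ we have $|D_A(q)|\le |D(q)|$ and $|D_A(q)|\cdot|D_{\bar{A}}(q)| = |D(q)|$. I also use $|cl(\mathcal{F})|\le l$, which holds by the tree-like definition of closure. Finally, the number of distinct coalition operators $\llangle A\rrangle$ occurring in $cl(\mathcal{F})$ is at most $l$.

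Next I go through the variables in order. There are $|Q|\cdot l$ variables $x^q_\varphi$. The variables $y^q_{i:a}$ number $\sum_q\sum_i|\varepsilon_i(q)|\le \sum_q k|D(q)|$; I treat this as $\mathcal{O}(t)$ up to the agent count, or equivalently bound it by the number of transitions, since each action takes part in at least one transition. There are $\mathcal{O}(l\cdot t)$ variables $y^q_{A:\vec m_A}$, and $\mathcal{O}(l\cdot l\cdot t)$ variables $z$ and $e$. The number of $s^{q,\varphi}_{A:\vec m_A,\vec m_{\bar A}}$ variables is
$$\sum_q\sum_{\varphi}\sum_{\llangle A\rrangle}|D_A(q)||D_{\bar A}(q)|\le l\cdot l\cdot t,$$
and there are $\mathcal{O}(|Q|\cdot l)$ variables $r$.

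Then I count the constraints. Each of (\ref{ilp-c1})--(\ref{ilp-c36}) is indexed by at most a tuple $(q,\varphi,\llangle A\rrangle,\vec m_A,\vec m_{\bar A})$ or some sub-tuple of it, so the number of constraints per family is $\mathcal{O}(l^2 t)$. The issue is the size of each constraint. Most constraints have constant arity. The summation constraints (\ref{ilp-c6}), (\ref{ilp-c12}), (\ref{ilp-c26}), (\ref{ilp-c29}) and (\ref{ilp-c35}) have length $|A|$, $|D_{\bar A}(q)|$ or $|D_A(q)|$, each bounded by $|D(q)|\le t$.

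Writing the constraints also needs $\delta(q,(\vec m_A,\vec m_{\bar A}))$ in (\ref{ilp-c9}) and (\ref{ilp-c10}). I plan to look this up in the transition table, which gives time $\mathcal{O}(t)$ in the worst case, or $\mathcal{O}(1)$ after a preprocessing pass.

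The main obstacle is matching the stated bound $\mathcal{O}(|Q|\cdot t\cdot l^2)$ exactly, rather than proving a tighter or looser one. The long summation constraints can be absorbed as follows. For each fixed $(q,\varphi,\llangle A\rrangle)$, the total length of (\ref{ilp-c12}) over all $\vec m_A$ is $|D_A(q)|\cdot|D_{\bar A}(q)|=|D(q)|$, so it contributes $\mathcal{O}(l^2 t)$ in total. The extra factor $|Q|$ in the bound I would attribute to the naive lookup of successor states $\delta(\cdot)$, or to the enumeration of pairs $(q,q')$ when locating the variables $x^{\delta(\cdot)}_\varphi$. So I expect to argue that each of the $\mathcal{O}(l^2 t)$ constraint instances is emitted in $\mathcal{O}(|Q|)$ time, obtained by scanning the state list to index $x^{q'}_\varphi$. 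Summing over all families then gives $\mathcal{O}(|Q|\cdot t\cdot l^2)$. The ILP-Dom-IN-SL case adds only constraint (\ref{ilp-c4'}), of length $\mathcal{O}(t)$, so the same bound holds.
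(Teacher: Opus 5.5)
Your proof is correct, but you count differently from the paper, and your count gives a sharper result.

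\textbf{How the paper counts.} It bounds $|D_A(q)|$ by $t$ separately for each state and then multiplies by $|Q|$. That crude step, not any lookup cost, is where the $|Q|$ factor in the theorem comes from. It bounds the number of formulas and of coalition operators by $\mathcal{O}(l)$ each, as you do. It then assumes each constraint is generated in unit time, ignoring the length of summation constraints such as (\ref{ilp-c12}).

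\textbf{How you count.} You use the exact identity $|D_A(q)|\cdot|D_{\bar A}(q)|=|D(q)|$ together with $\sum_q|D(q)|=t$. This gives $\mathcal{O}(l^2 t)$ constraint instances, treating $k$ as a constant as the paper does. You also explicitly bound the total length of the long constraints, which closes the unit-time gap in the paper's argument. Your count therefore shows the whole ILP has size $\mathcal{O}(t\cdot l^2)$. With $\mathcal{O}(1)$ indexing this gives generation in $\mathcal{O}(t\cdot l^2)$ time, which is stronger than the stated bound.

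\textbf{The ``obstacle'' is not real.} A tighter upper bound already implies $\mathcal{O}(|Q|\cdot t\cdot l^2)$, so you do not need to assign the $|Q|$ factor to any particular cost. You can simply note that $\mathcal{O}(t\cdot l^2)\subseteq\mathcal{O}(|Q|\cdot t\cdot l^2)$.

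\textbf{If you keep the attribution, choose it carefully.} The $\mathcal{O}(|Q|)$ scan of the state list to index $x^{q'}_\varphi$ is fine. It must be combined with the $\mathcal{O}(1)$-after-preprocessing lookup of $\delta$. The naive alternative does not fit the bound. Scanning the transitions at $q$ for each instance of (\ref{ilp-c9})--(\ref{ilp-c10}) costs $\mathcal{O}(|D(q)|)$ per instance, for a total of order $l^2\sum_q|D(q)|^2$. This can be as large as $l^2t^2$ when one state carries most of the transitions, which exceeds $|Q|\cdot t\cdot l^2$ whenever $t\gg|Q|$. Your stated worst case of $\mathcal{O}(t)$ per lookup is worse still.
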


%	Algorithm 2 and algorithm 3 generate the corresponding ILPs in polynomial-time (according to theorem~\ref{thm-cipl}) and can be efficiently handled based on current ILP solvers, and according to theorem~\ref{thm-17} they correctly find out the dominant social law and dominant $(i,n)$-social law. Moreover, we have already shown that algorithm 1 correctly determines the payment to each agent via calling the above 2 algorithms. 
Therefore, via designing algorithm 2$\sim$4, we have proposed a practical way to compute the proposed mechanism. 

%\section{Related Work}

%The cost of implementing a social law was first studied by~\cite{Fitoussi:AIJ-00}. ~\cite{Agotnes:AAMAS-10} later proposed the OPTIMAL SOCIAL LAW problem, which balances the cost and benefit of implementing a social law and models CTL-based social law synthesizing as a combinatorial optimization problem.  \cite{Wu:AAMAS-17a,Wu:JAAMAS-21} non-trivially extended the {\sc Optimal Social Law} problem to the strategic case, developed solutions based on the framework of optimal auctions~\cite{Myerson:MOR-81,Elkind:SODA-04}. We consider more general ATL-based social laws, which leads to a new \emph{multiunit auction} design problem~\cite{Maskin:EMIG-89,Bhattacharya:TCS-20,Simina:AIJ-23,chan2014truthful,Wu:AAMAS-19,Wu:AAAI-20,Hagen2023Collusion,Gautier2023Multi,Potfer2024Improved,Bougt2025Revenue,Aberg2025Quantifying}  where payment determination is a major challenge. %A more comprehensive review is provided in the appendix.

\section{Conclusion}
In this paper, we have systematically studied Social Law Synthesis (SLS) in strategic multi-agent environments as a novel multi-unit profit-optimal mechanism design problem. By modeling SLS within a Bayesian single-parameter procurement auction framework built on Alternating-time Temporal Logic (ATL), we have addressed key challenges arising from agents’ private cost information, strategic behavior, and the multi-unit nature of action restrictions.
Our main theoretical contributions include a representation lemma that compactly encodes valid valuations via ATL feature sets, a polynomial-time reduction from payment computation to allocation determination, and a complexity result showing that allocation is \(FP^{NP}\)-complete. To achieve practical computability, we encode the full ATL semantics into integer linear programming constraints, enabling efficient solution using off-the-shelf ILP solvers. The resulting PO-ASL mechanism is proven to be truthful, individually rational, and expected-profit-maximizing among all incentive-compatible mechanisms.
This work bridges formal methods for multi-agent coordination and algorithmic mechanism design, offering a principled and implementable approach to norm synthesis under self-interested agents. Future directions will extend the model to general cost structures, online and dynamic settings, coalition formation, and approximate mechanisms for large-scale systems.
	% 参考文献（arxiv 直接用 plain 即可）
\bibliographystyle{plain}
\bibliography{myref}

\appendix
\section{Examples for the Problem Setting}

The following example shows that coordinating a system consisting of some rational agents, can be modeled as a Social Law Synthesizing (SLS) problem. 

\begin{ex}
	Consider a system consists of two agents and a shared storage. Each agent also has an independent cache of their own. During the running process, each agent constantly generates data and writes them to their own cache. But the space of each agent's cache is limited, they have to transfer the data in it to the shared storage when it becomes full. To ensure correctness, each agent's access to the shared storage should be mutually exclusive.
	
	\begin{table*}[h] 
		\caption{The feature set $\mathcal{F}$}
		\begin{tabular}{m{4.5cm} m{0.8cm} m{11.5cm}}
			\hline
			\hline
			Property & Value & Description\\
			\hline
			$\varphi_1 = \llangle \rrangle \Box \neg \epsilon$ & 30  &The system will never enter the error state.\\
			
			$\varphi_2 = \llangle \rrangle \Box (\alpha_1\rightarrow \llangle1\rrangle \bigcirc \beta_1)$ & 10  &In agent 1 priority state, agent 1 can control the storage immediately.\\
			
			$\varphi_3 = \llangle \rrangle \Box (\alpha_2\rightarrow \llangle2\rrangle \bigcirc \beta_2)$ & 10 &In agent 2 priority state, agent 2 can control the storage immediately.\\
			
			$\varphi_4 = \llangle \rrangle \Box (\beta_1\rightarrow \llangle1\rrangle \bigcirc \alpha_2)$ & 10 &In agent 1 control state, agent 1 can write to the storage immediately.\\
			
			$\varphi_5 = \llangle \rrangle \Box (\beta_2\rightarrow \llangle2\rrangle \bigcirc \alpha_1)$ & 10 &In agent 2 control state, agent 2 can write to the storage immediately.\\
			
			$\varphi_6 = \llangle \rrangle \Box \llangle 1\rrangle \Diamond \beta_1$ & 12 &Agent 1 can always try to eventually control the storage.\\
			
			$\varphi_7 = \llangle \rrangle \Box \llangle 2\rrangle \Diamond \beta_2$ & 12 &Agent 2 can always try to eventually control the storage.\\
			
			$\varphi_8 = \llangle \rrangle \Box (\alpha_1\rightarrow \llangle1,2\rrangle \Diamond \beta_2)$ & 6 &In agent 1 priority state, it is possible for agent 2 to eventually control the storage.\\
			
			$\varphi_9 = \llangle \rrangle \Box (\alpha_2\rightarrow \llangle1,2\rrangle \Diamond \beta_1)$ & 6 &In agent 2 priority state, it is possible for agent 1 to eventually control the storage.\\
			
			$\varphi_{10} = \llangle \rrangle \Box (\beta_1\rightarrow \llangle 1\rrangle\bigcirc \beta_1)$ & 4 &In agent 1 control state, agent 1 can keep controlling the storage.\\
			
			$\varphi_{11} = \llangle \rrangle \Box (\beta_2\rightarrow \llangle 2\rrangle\bigcirc \beta_2)$ & 4 &In agent 2 control state, agent 2 can keep controlling the storage.\\
			\hline
			%\hline
		\end{tabular}
		\label{tb-fs}
	\end{table*}
	
	\begin{table*}[h] 
		\caption{Social laws and their profits (Each table line specifies a social law $\eta_i$ (where for example $0^1_a$ means agent 1's action $a$ in state $1$ is forbidden by $\eta_i$), labels the satisfied formulas (with ``+") and unsatisfied formulas (with ``-") from $\mathcal{F}$ (in the structure obtained by implementing $\eta_i$), and calculates the cost ($\sum_{i\in Ag} (c(i)\cdot\sum_{q\in Q} |\eta_i(q)|)$), value ($v_{\mathcal{F}}(S\dag \eta_i)$) and profit ($g(\eta_i)$).}
		%		\resizebox{\textwidth}{!}{
			\begin{tabular}{c c c c c c c c c c c c c c c c}
				\hline
				\hline
				& Social law  & $\varphi_1$ & $\varphi_2$ & $\varphi_3$ & $\varphi_4$ & $\varphi_5$ & $\varphi_6$ & $\varphi_7$ & $\varphi_8$ & $\varphi_9$  & $\varphi_{10}$ & $\varphi_{11}$ & Cost & Value & Profit\\
				\hline
				$\eta_0$ &  $\emptyset$ & - & + & + & - & - & - & - & + & + & - & - & 0 & 32 & 32\\
				$\eta_1$ &  $\{0^1_a, 0^2_a\}$ & + & - & + & + & + & - & - & - & + & + & + & 25 & 74 & 49\\
				$\eta_2$ & $\{0^1_a, 1^1_a, 3^1_w\}$ & + & - & + & + & + & - & + & + & - & + & + & 30 & 86 & 56\\
				$\eta_3$ & $\{2^2_w, 3^1_w\}$ & + & + & + & + & + & - & - & + & + & + & +  & 25 & 90 & 65\\
				$\eta_4$ & $\{0^2_a, 1^2_a, 2^2_w\}$ & + & + & - & + & + & + & - & - & + & + & + & 45 & 86 & 41\\
				$\eta_5$ & $\{2^2_w, 3^1_w,3^2_r\}$ & + & + & + & + & + & + & - & + & + & + & - & 40 & 98 & 58\\
				$\eta_6$ & $\{2^1_w, 3^2_w\}$ & + & + & + & - & - & - & - & + & + & - & - & 25 & 62 & 37\\
				$\eta_7$ & $\{2^2_w, 3^1_w,2^1_r,3^2_r\}$ & + & + & + & + & + & + & + & + & + & - & - & 50 & 106 & 56\\
				\hline
				\hline
			\end{tabular}%}
		\label{tb-slu}
	\end{table*}
	
	The running of such a systems can be represented as a {\sc ccgs} $S=\langle k, Q, q_s,\Pi, \pi, \varepsilon, c, \delta\rangle $, where
	\begin{itemize}
		\item $k = 2$, i.e., the agent set is $Ag= \{1,2\}$;
		\item $Q  = \{q_0, q_1, q_2, q_3, q_4\}$, $q_s = q_0$, the initial state of the system is $q_0$;
		
		\item $\Pi = \{\alpha_1, \alpha_2, \beta_1, \beta_2, \epsilon\}$, and captures the following meanings:\\
		$\alpha_i:$ Agent $i$ has the priority to apply for controlling the shared storage;\\
		$\beta_i:$ Agent $i$ has been authorized to control the shared storage;\\
		$\epsilon:$ Writing to the storage has violated mutual exclusiveness.
		\item Label function $\pi$ and action function $\epsilon$ are defined as follows:
		
		$\pi(q_0) = \{\alpha_1\}$; \quad $\varepsilon_1(q_0) = \{r, a\}$; \quad $\varepsilon_2(q_0) = \{r, a\}$;
		
		$\pi(q_1) = \{\alpha_2\}$; \quad $\varepsilon_1(q_1) = \{r, a\}$; \quad $\varepsilon_2(q_1) = \{r, a\}$;
		
		$\pi(q_2) = \{\beta_1\}$; \quad $\varepsilon_1(q_2) = \{r, w\}$; \quad $\varepsilon_2(q_2) = \{r, w\}$;
		
		$\pi(q_3) = \{\beta_2\}$; \quad $\varepsilon_1(q_3) = \{r, w\}$; \quad $\varepsilon_2(q_3) = \{r, w\}$;
		
		$\pi(q_4) = \{\epsilon\}$; \quad~~ $\varepsilon_1(q_4) = \{r\}$; \quad~~~ $\varepsilon_2(q_4) = \{r\}$;
		
		%		\vspace*{1mm}
		That is, in state $q_0,q_1$, agent 1 and agent 2 respectively have the priority to apply for controlling the shared storage; in state $q_2,q_3$, agent 1 and agent 2 respectively have been authorized to control the shared storage; $q_4$ is an error state; action $a$ is ``applying for controlling the storage", action $r$ is ``executing routine tasks`" and action $w$ is ``Writing to the storage". E.g., $\varepsilon_1(q_0) = \{r, a\}$ means in state $q_0$, Agent 1 can choose to perform routine tasks or apply for controlling the storage.
		
		\item The state transition function can be depicted as the directed edges in Figure~\ref{fig:ex1-cgs}.
		\begin{figure}
			\centering
			\includegraphics[width=0.55\linewidth]{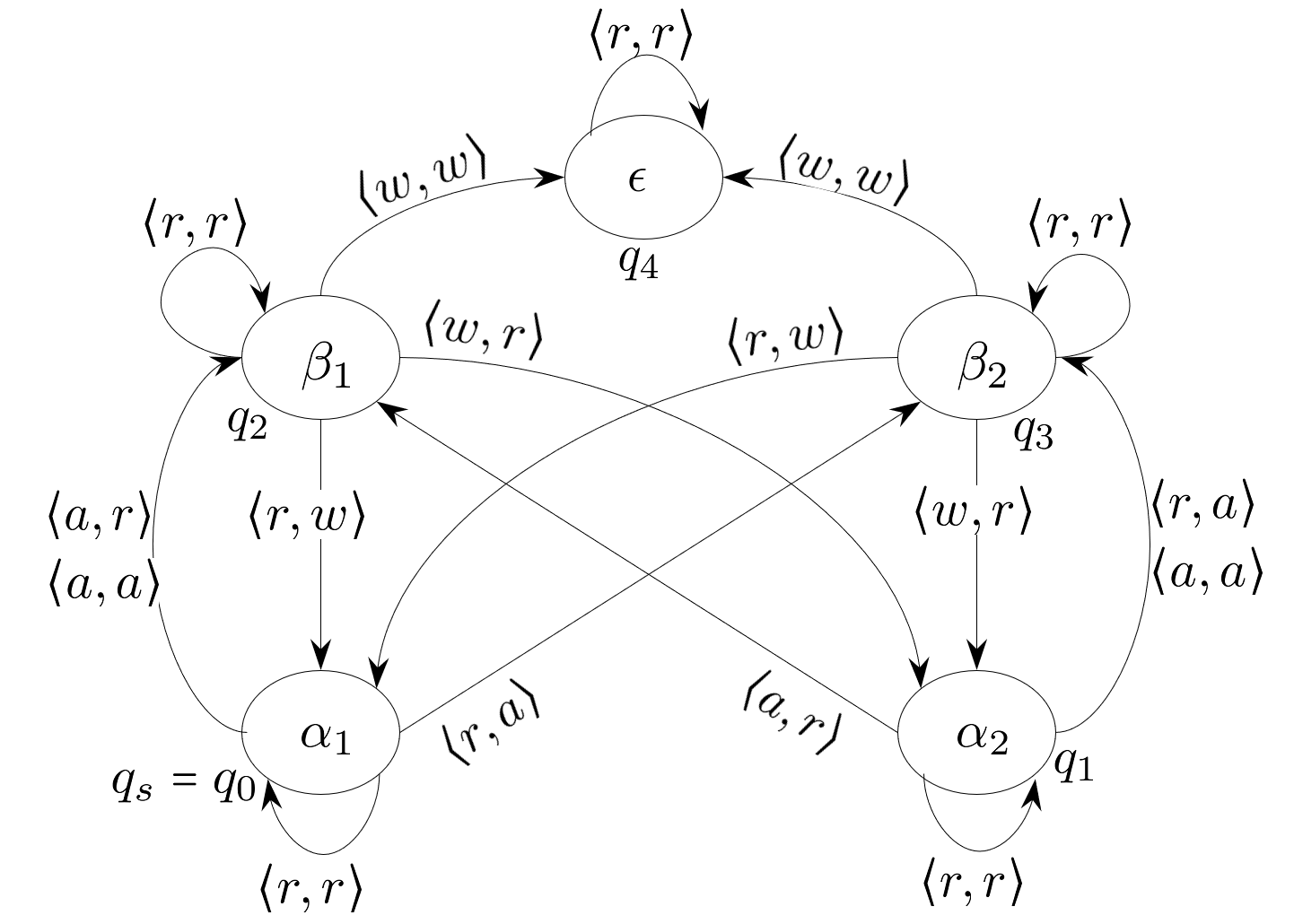}
			\caption{A cost-aware concurrent game structure (CCGS)}
			\label{fig:ex1-cgs}
		\end{figure}
		
		For example, in state $q_0$, agent 1 has the priority to apply for controlling the storage, that is, agent 1 can obtain the right to control the storage (via enforcing the system state to change to $q_2$ by choosing action $a$) no matter agent 2 chooses action $r$ or action $a$; While agent 2 can obtain the right to control the storage (let the system state change to $q_3$) only when agent 1 chooses action $r$ (agent 2 can choose action $a$ in this case); Moreover, when both of these two agents choose action $r$, the system state keeps the same. In state $q_2$, although agent $1$ has the right to control the storage, but agent 2 still can write to the storage. If only agent 1 writes in the storage, then the system state change to $q_1$ where agent $2$ has the priority to apply for controlling the storage; if only only agent 2 writes in the storage, then state change to $q_0$ where agent $1$ has the priority to apply for controlling the storage; when both of the two agents write in the storage, the system state transfers to $q_4$ since mutually exclusiveness is violated.
		
		\item The feature set can be represented as the Table~\ref{tb-fs}:
		
		According to lemma~\ref{lm-rl}, the above $\mathcal{F}$ actually specifies a valuation function $v_\mathcal{F}$.
		
		\item If $c(1) = 10$ and $c(2)=15$, that is, the unit cost of agent 1 and agent 2 are $10$ and $15$ respectively, and we pay each agent exactly their costs,  then some social laws and their values, costs and profits can be listed as Table ~\ref{tb-slu}.		
		
		For example, the structures obtained by implementing social laws $\eta_2$, $\eta_3$, $\eta_5$ and $\eta_7$ respectively on $S$ can be depicted in Figure~\ref{fig:fig-ex-cgs}. Among the $8$ social laws, the social law with highest value is $\eta_7$, while the social law with the highest profit is $\eta_3$; Although both $\eta_5$ and $\eta_7$ have higher values than social law $\eta_3$, their profits are lower than $\eta_3$;  $\eta_2$ and $\eta_7$ have the same profits although their values are different.
		
	\end{itemize}
	Our aim is to find out the social law $\eta^*$ that maximize the profit $g(\eta)$. But now notice that, $c(i)$ is only known by agent $i$ itself, what we can obtain is only $c(i)$'s probability density function $f_i$. In such a setting, it is obviously impossible for us to directly optimize $g(\eta)$. How to select the social law and pay the agents to obtain the best profit and at the same time keep the agents well-motivated becomes a challenge. 
	
	To overcome this challenge, based on the framework of algorithmic mechanism design, we can select the social law via running a procurement auction: firstly we announce a mechanism consists of an \emph{allocation function} $\hat{R}:X\rightarrow \mathcal{SL}_S$ and a \emph{payment function} $P_i:X\rightarrow \mathbb{R}^+$ for each agent $i$, then collect the bids (\emph{i.e.}, cost reports) $x$ from the agents, and finally select the social law $\eta=\hat{R}(x)$ and pay  $P_i(x)$ units of money to each agent $i$. Therefore, our paper focuses on deducing the functions $\hat{R}$ and $P_i$, which ensure biding truthfully being every agent's best choice (incentive compatibility), every agent getting a positive utility (individual rationality) and reliably obtaining the profit-optimal social law.
	
	\begin{figure}
		\centering
		\includegraphics[width=1.0\linewidth]{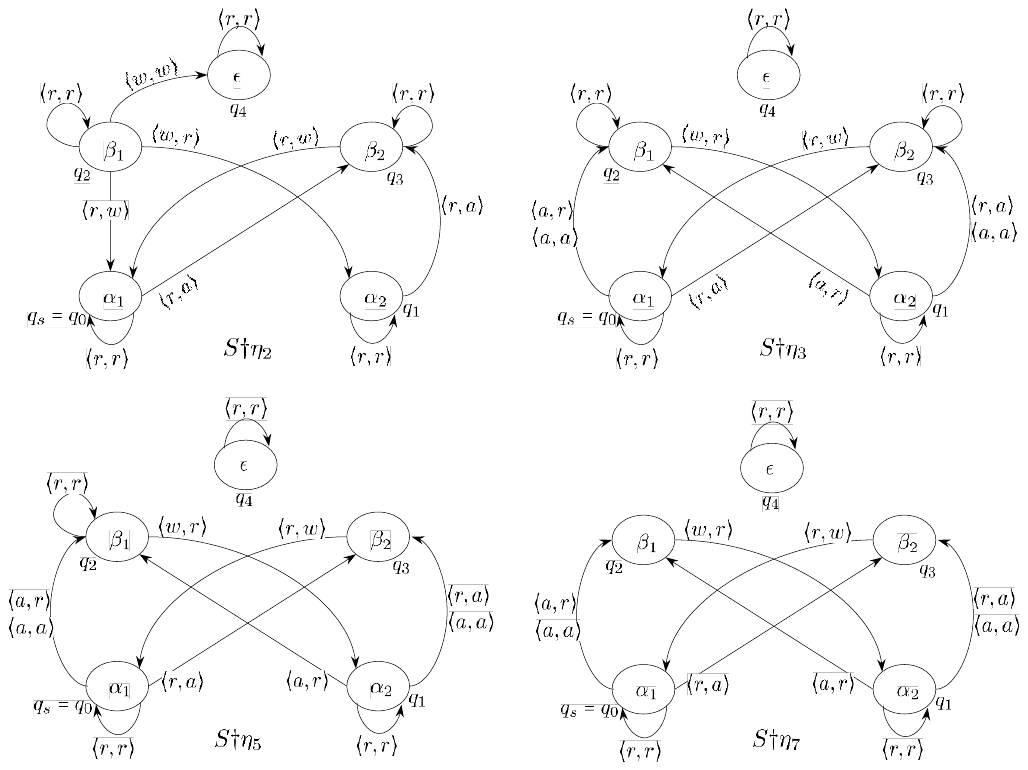}
		\caption{Structures obtained by implementing social laws}
		\label{fig:fig-ex-cgs}
		%		\vspace*{-3mm}
	\end{figure}
\end{ex}

\begin{ex}
	Given an agent set $Ag$, a proposition set $\Pi$, for any coalition $A_1,A_2,A_3 \subseteq Ag$ and propositions $p_1,p_2,p_3 \subseteq \Pi$, $\neg \llangle A_1\rrangle\Box (\llangle A_2\rrangle\bigcirc p_1  \vee \llangle A_3\rrangle p_2\mathcal{U} p_3 )$ is an ATL formula, the formulas in $cl(\neg \llangle A_1\rrangle\Box (\llangle A_2\rrangle\bigcirc p_1  \vee \llangle A_3\rrangle p_2\mathcal{U} p_3 ))$ can be represented as a tree-like structure depicted as Figure~\ref{fig:ex2}. 
	
	It is clear that,   
	\begin{itemize}
		\item the root node is the given formula, 
		\item all the leaf nodes are propositions in $\Pi$, and
		\item for any tree node $v$ and the corresponding formula $\varphi$,
		\begin{itemize}
			\item $cl(\varphi)$  is actually the set of all the nodes included in the sub-tree rooted at $v$;
			\item $sub(\varphi)$  is actually the set of all the nodes in the sub-trees which rooted at node $v$'s children (if $v$ isn't a leaf node) or the singleton set $\{v\}$ (if $v$ is a leaf node).	
		\end{itemize}		 
	\end{itemize}
\end{ex}

\begin{figure}
	\centering
	\includegraphics[width=0.35\linewidth]{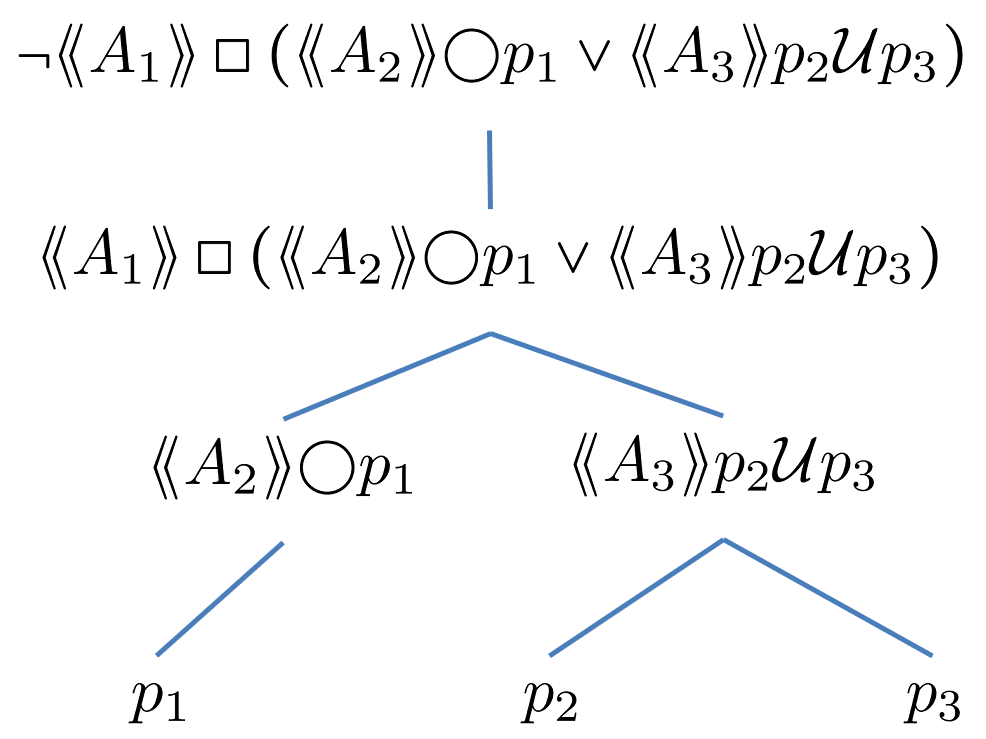}
	\caption{The closure of a formula and its underlying tree-like structure}
	\label{fig:ex2}
\end{figure}

\section{Results and Omitted Proofs}

\setcounter{thm}{0}

\begin{thm}(Alur et al. 1998) For two concurrent game structures $S$ and $S'$, if $S \leftrightarroweq S'$ and $q Z q'$, then for an arbitrary ATL function $\varphi$, we have $S, q\vDash \varphi~~\text{iff}~~S', q'\vDash \varphi.$
\end{thm}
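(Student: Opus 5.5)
The plan is a structural induction on the ATL formula $\varphi$. The statement to prove is: for every pair $(q,q')$ with $qZq'$, $S,q\vDash\varphi$ iff $S',q'\vDash\varphi$. We fix an alternating bisimulation $Z$ witnessing $S\leftrightarroweq S'$ and carry the claim for all related pairs simultaneously, so that the induction hypothesis is available at successor states.

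\textbf{Propositional cases.} The base case $p\in\Pi$ is immediate from clause 1), $\pi(q)=\pi'(q')$. The cases $\neg\psi$ and $\psi_1\vee\psi_2$ follow directly from the induction hypothesis.

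\textbf{Next-time.} For $\llangle A\rrangle\bigcirc\psi$, assume $S,q\vDash\llangle A\rrangle\bigcirc\psi$, witnessed by $\vec m_A\in D_A(q)$ such that every state in $out(q,\vec m_A)$ satisfies $\psi$. Clause 2) yields $\vec m'_A\in D'_A(q')$ such that every $q'_x\in out(q',\vec m'_A)$ has a $Z$-partner $q_x\in out(q,\vec m_A)$. Since $q_x\vDash\psi$, the induction hypothesis gives $q'_x\vDash\psi$, so $\vec m'_A$ witnesses the formula at $q'$. The converse direction uses clause 3) symmetrically.

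\textbf{Strategic temporal operators.} For $\llangle A\rrangle\Box\psi$ and $\llangle A\rrangle\psi_1\mathcal U\psi_2$, we must lift one-step move matching to memoryless (or general) strategies. Given a winning strategy $f_A$ for $A$ in $S$ from $q$, we construct a strategy $f'_A$ in $S'$ incrementally along histories. We maintain, for each finite $S'$-history $h'$ consistent with $f'_A$, a $Z$-related $S$-history $h$ consistent with $f_A$ of the same length. At the last states $(q_n,q'_n)$, which satisfy $q_nZq'_n$, we let $f'_A(h')$ be the move $\vec m'_A$ that clause 2) supplies for $\vec m_A=f_A(h)$. Each successor $q'_{n+1}$ then has a partner $q_{n+1}\in out(q_n,f_A(h))$ with $q_{n+1}Zq'_{n+1}$, which extends the correspondence.

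Every outcome path $\lambda'$ of $f'_A$ therefore has a pointwise $Z$-related outcome path $\lambda$ of $f_A$. By the induction hypothesis, $\psi$ (respectively $\psi_1,\psi_2$) holds at position $j$ of $\lambda'$ iff it holds at position $j$ of $\lambda$. Since $\lambda$ satisfies $\Box\psi$ (respectively $\psi_1\mathcal U\psi_2$), so does $\lambda'$. The converse again uses clause 3).

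\textbf{Main obstacle.} The delicate step is this strategy transfer. It requires making the partner choice a well-defined function of the history, which may need the axiom of choice or, since the structures are finite, a fixed selection. It also requires working with perfect-recall strategies rather than memoryless ones, because $Z$ need not be functional. Finiteness of the state spaces and the standard fact that ATL semantics is the same under memoryless and perfect-recall strategies make this routine. Finally, the case $S\leftrightarroweq S'$ with $q_sZq'_s$ follows as the special case of the general related pair.
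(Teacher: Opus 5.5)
Your argument is correct. It is the standard proof from Alur et al.\ (1998), which the paper cites without giving any proof of its own: induction over all $Z$-related pairs at once, one-step move matching via clauses 2) and 3) for the next-time case, and a history-indexed strategy transfer for the $\Box$ and $\mathcal{U}$ cases. Your appeal to the memoryless/perfect-recall equivalence is unnecessary, because the paper's semantics (appendix) already quantifies over perfect-recall strategies defined on finite histories $Q^+$, so the transferred strategy $f'_A$ is admissible as it stands.
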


\begin{lm}[Representation Lemma]\label{lm-rl} The following 2 items are equivalent:
	\begin{itemize}
		\item[1)] $\forall S_1, S_2$: $S \leftrightarroweq S'$ implies $v(S) = v(S')$;
		\item[2)] $\exists$ feature set $\mathcal{F} = \{(\varphi_1,c_1),...,(\varphi_k,c_k)\}$, where $k\in\mathbb{N},\forall j\in Ag: \varphi_j\in\mathcal{L}_{ATL}, c_j\in\mathbb{R}^+$, and for all $S\in\mathcal{W}_S$:
		$$v(S)= \sigma(\mathcal{F},S) = \sum_{(\varphi_j,c_j)\in\mathcal{F}; S,q_s\vDash\varphi_j}c_j.$$
	\end{itemize}
\end{lm}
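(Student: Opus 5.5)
We prove the two directions separately. The direction (2)$\Rightarrow$(1) is immediate from the earlier theorem of Alur et al. Suppose $v=\sigma(\mathcal{F},\cdot)$ and $S\leftrightarroweq S'$ via a relation $Z$. By definition $q_s Z q'_s$, so for every $(\varphi_j,c_j)\in\mathcal{F}$ we have $S,q_s\vDash\varphi_j$ iff $S',q'_s\vDash\varphi_j$. Hence exactly the same features fire in $S$ and $S'$, and $\sigma(\mathcal{F},S)=\sigma(\mathcal{F},S')$. (We read the quantifier in item 1) as ranging over $S,S'\in\mathcal{W}_S$.)

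For (1)$\Rightarrow$(2), the key observation is that $\mathcal{W}_S$ is finite, because $S$ has finitely many states and actions and hence finitely many social laws. We partition $\mathcal{W}_S$ into its $\leftrightarroweq$-equivalence classes $E_1,\dots,E_m$. Alternating bisimilarity is an equivalence relation: the identity relation works for reflexivity, the converse relation for symmetry, and relational composition for transitivity; the clauses of the definition compose in the evident way. By hypothesis (1), $v$ is constant on each class, say $v\equiv w_r>0$ on $E_r$.

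The plan is then to find, for each class $E_r$, a characteristic ATL formula $\chi_r$ such that for every $S'\in\mathcal{W}_S$ we have $S',q_s\vDash\chi_r$ iff $S'\in E_r$. Given such formulas, we set $\mathcal{F}=\{(\chi_1,w_1),\dots,(\chi_m,w_m)\}$. Every $S'$ satisfies exactly one $\chi_r$, so $\sigma(\mathcal{F},S')=w_r=v(S')$. All weights are positive because $v$ maps into $\mathbb{R}^+$, as the lemma requires. To build $\chi_r$, we pick representatives $S_r\in E_r$. For $r\neq s$ we need a distinguishing formula $\theta_{r,s}$ with $S_r,q_s\vDash\theta_{r,s}$ and $S_s,q_s\nvDash\theta_{r,s}$, and we then put $\chi_r=\bigwedge_{s\neq r}\theta_{r,s}$, using $\wedge$ defined from $\neg,\vee$. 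By the Alur et al. theorem, satisfaction of $\chi_r$ is invariant within each class, so $\chi_r$ holds exactly on $E_r$.

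The main obstacle is the existence of $\theta_{r,s}$, which is the converse of the Alur et al. theorem: non-bisimilar finite structures must be distinguished by some ATL formula. We would prove it in contrapositive form, as in Hennessy--Milner. Consider the relation $Z=\{(q,q'): q$ and $q'$ satisfy the same ATL formulas$\}$ between the states of $S_r$ and $S_s$, and show that $Z$ is an alternating bisimulation. Clause 1 holds because propositions are formulas. For clause 2, suppose it fails at $qZq'$ for some $A$ and $\vec m_A$. Then for each of the finitely many $\vec m'_A\in D'_A(q')$ there is a successor $q'_x\in out(q',\vec m'_A)$ that is $Z$-related to no element of the finite set $out(q,\vec m_A)$. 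For each pair (successor $q_x\in out(q,\vec m_A)$, bad successor $q'_x$) we choose a formula true at $q_x$ and false at $q'_x$. Conjoining over $q_x$ and disjoining appropriately yields a formula $\psi$ with $S_r,q\vDash\llangle A\rrangle\bigcirc\psi$ and $S_s,q'\nvDash\llangle A\rrangle\bigcirc\psi$, which contradicts $qZq'$. Clause 3 is symmetric. Finiteness of the state and action sets is what keeps these Boolean combinations finite, so the $\bigcirc$ fragment of ATL already suffices.

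If $q_sZq'_s$, this argument shows $S_r\leftrightarroweq S_s$, contradicting $r\neq s$. Therefore some formula separates the two initial states, and negating it if necessary gives $\theta_{r,s}$.
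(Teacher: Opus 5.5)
Your proposal is correct and follows the same route as the paper: partition the finite set $\mathcal{W}_S$ into bisimulation classes, weight a characteristic formula for each class (a conjunction of pairwise distinguishing formulas) by the class value, and get the converse from the Alur et al.\ invariance theorem. The one place you go beyond the paper is the existence of the distinguishing formulas, which the paper simply asserts from non-bisimilarity. Your Hennessy--Milner argument over the $\bigcirc$ fragment does supply this, provided the formula is nested as $\psi=\bigvee_{q_x\in out(q,\vec m_A)}\bigwedge_{\vec m'_A\in D'_A(q')}\theta_{q_x,\vec m'_A}$, i.e.\ a disjunction over the $q_x$ of conjunctions over the bad successors, not a conjunction over the $q_x$ as your wording literally says.
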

\begin{proof}
	``$1)\Rightarrow 2)$": Since there are finite number of agents and states in the cocurrent game structure $S$, and the available actions of any agent in any state are finite, $\mathcal{SL}_S$ is a finite set of social laws, in which we can define a binary relation $\simeq$ as follows:
	$$\forall \eta_1,\eta_2\in \mathcal{SL}_S, \eta_1\simeq\eta_2~~\text{iff}~~S\dag \eta_1\leftrightarroweq S\dag \eta_2$$
	
	Obviously, the relation $\simeq$ satisfy reflexivity, symmetry and transitivity, and thus is an equivalence relation. We let $\mathcal{E} = \{ \eta_1,...,\eta_m\} $ be an equivalence class of the this equivalence relation, then for any social law $\eta\in\mathcal{SL}_S$ there must be a $\eta_j \in \mathcal{E} $ satisfying $S\dag \eta\leftrightarroweq S\dag \eta_j$; Moreover, for any  $\eta_j,\eta_l \in \mathcal{E} $, $S\dag \eta_j\leftrightarroweq S\dag \eta_l$ if and only if $j = l $. So, for any $j,l\in \{0,...,m\} $ and $j \neq l $,  $S\dag \eta_j$ and $S\dag \eta_l$ don't satisfy alternating bisimulation equivalence, therefore there exists a ATL formula $\psi_{j-l} $ satisfying:
	$$S\dag \eta_j, q_s \vDash \psi_{j-l}~~\text{and}~~S\dag \eta_l, q_s \nvDash \psi_{j-l}$$
	
	We can define a feature set $\mathcal{F} = \{(\varphi_1, c_1),...,(\varphi_m,c_m)\} $, where
	$$\forall 1\leq j\leq m:~~~~\varphi_j = \bigwedge_{l\neq j}\psi_{j-l} ~~~~ c_j = v(S\dag \eta_j)$$
	
	It is easy to see that $\varphi_j $ is a formula that is only satisfied by the initial state of $S\dag \eta_j$ but can't be satisfied by the initial state of any other structures in $\mathcal{W}_S$, that is, we have
	$$S\dag \eta_j, q_s \vDash \varphi_{l}~~\text{and}~~ j=l.$$ 
	It further follows that
	\begin{equation*}
		\begin{split}
			\forall \eta_j\in \mathcal{E}: v(S\dag \eta_j) = c_j = \!\!\sum_{(\varphi_{x},c_{x})\in \mathcal{F}: S\dag \eta_j,q_s\vDash \varphi_{x}}\!\!\!\!c_{x} = \sigma(\mathcal{F}, S\dag \eta_j)
		\end{split}
	\end{equation*}
	\begin{equation*}
		\begin{split}
			&\text{and}~~\forall \eta\nin \mathcal{E}: \exists \eta_j\in \mathcal{E},  v(S\dag\eta) = v(S\dag \eta_j) = c_j\\
			&= \sum_{(\varphi_{x},c_{x})\in \mathcal{F}: S\dag \eta_j,q_s\vDash \varphi_{x}}c_{x} = \sum_{(\varphi_{x},c_{x})\in \mathcal{F}: S\dag \eta,q_s\vDash \varphi_{x}}c_{x}\\
			&=\sigma(\mathcal{F}, S\dag \eta)
		\end{split}
	\end{equation*}
	
	So, for any $S\in \mathcal{W}_S$, we always have $v(S)=\sigma(\mathcal{F},S)$.\\

	$``1)\Leftarrow 2)":$ Assume that $\exists S_1,S_2\in \mathcal{W}_S$, $S_1\leftrightarroweq S_2$, and $v(S_1)\neq v(S_2)$. Since there is a $\mathcal{F} = \{(\varphi_1,c_1),...,(\varphi_k,c_k)\}$ satisfying:
	
	\begin{equation*}
		\begin{split}
			&v(S_1)=\sigma(\mathcal{F},S_1) = \sum_{(\varphi_j,c_j)\in\mathcal{F}; S_1,q_s\vDash\varphi_j}c_j;~~\text{and}\\ 
			&v(S_2)=\sigma(\mathcal{F},S_2) = \sum_{(\varphi_j,c_j)\in\mathcal{F}; S_2,q_s\vDash\varphi_j}c_j
		\end{split}
	\end{equation*}
	
	$v(S_1)\neq v(S_2)$ follows $$\sum_{(\varphi_j,c_j)\in\mathcal{F}; S_1,q_s\vDash\varphi_j}c_j \neq \sum_{(\varphi_j,c_j)\in\mathcal{F}; S_2,q_s\vDash\varphi_j}c_j$$
	
	That is, $S_1$ and $S_2$ satisfy different formula set in their initial state $q_s$, however this is impossible, since $S_1\leftrightarroweq S_2$.	
\end{proof}

\begin{lm}\label{lm-U2}
	A mechanism $\langle R,P\rangle$ is BNIC iff $\forall x_i\in X_i$:
	\begin{itemize}
		\item[1)]   $r_i(x_i)$ is monotone nonincreasing; and
		\item[2)] 	the expected payment to each agent satisfy:
		%		\vspace{-1mm}
		\begin{equation}
			\label{eq-pixipi0} p_i(x_i) = p_i(0) + x_i r_i(x_i) - \int_{0}^{x_i} r_i(t_i)dt_i\tag{14}
		\end{equation}.
	\end{itemize}	
\end{lm}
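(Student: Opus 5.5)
The plan is the standard Myerson-style characterization, adapted to the procurement setting with multi-unit allocation $r_i$. Everything will be phrased through the interim quantities $r_i$, $p_i$, $\bar u_i$ and $\hat u_i$ from (\ref{eq-barui3}) and (\ref{eq-Ui}). BNIC says exactly that $\hat u_i(x_i)=\bar u_i(x_i,x_i)\ge \bar u_i(x_i,\gamma_i)$ for all $x_i,\gamma_i\in X_i$. The first step is to rewrite this as
$$\hat u_i(x_i)\ \ge\ p_i(\gamma_i)-r_i(\gamma_i)x_i\ =\ \hat u_i(\gamma_i)+r_i(\gamma_i)(\gamma_i-x_i).$$
This gives the two-sided sandwich
$$r_i(x_i)(\gamma_i-x_i)\ \ge\ \hat u_i(\gamma_i)-\hat u_i(x_i)\ \ge\ r_i(\gamma_i)(\gamma_i-x_i),$$
where the left inequality comes from swapping the roles of $x_i$ and $\gamma_i$.

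For the ``only if'' direction I take $\gamma_i>x_i$ and divide the sandwich by $\gamma_i-x_i$. This yields $r_i(x_i)\ge r_i(\gamma_i)$, which is condition 1). The same inequalities say that $\hat u_i$ is a convex function, being the upper envelope of the affine functions $x_i\mapsto p_i(\gamma_i)-r_i(\gamma_i)x_i$. Hence $\hat u_i$ is Lipschitz on $[0,\omega_i]$, because $r_i$ is bounded by the finite total number of actions. It is therefore absolutely continuous, with derivative $-r_i(x_i)$ almost everywhere, since $r_i$ is monotone and so has at most countably many discontinuities. Integrating gives
$$\hat u_i(x_i)=\hat u_i(0)-\int_0^{x_i}r_i(t_i)dt_i.$$
Substituting $\hat u_i(x_i)=p_i(x_i)-x_ir_i(x_i)$ and $\hat u_i(0)=p_i(0)$ then produces (\ref{eq-pixipi0}).

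For the ``if'' direction, I assume 1) and 2). Then $\hat u_i(x_i)=p_i(0)-\int_0^{x_i}r_i$. For any $\gamma_i$, I compute
$$\hat u_i(x_i)-\bar u_i(x_i,\gamma_i)=\hat u_i(x_i)-\hat u_i(\gamma_i)-r_i(\gamma_i)(\gamma_i-x_i)=\int_{x_i}^{\gamma_i}\big(r_i(t)-r_i(\gamma_i)\big)dt.$$
Monotonicity makes this nonnegative in both cases. If $\gamma_i>x_i$, the integrand is $\ge0$ on $[x_i,\gamma_i]$. If $\gamma_i<x_i$, the integral runs backward over an interval where $r_i(t)\le r_i(\gamma_i)$, so it is again $\ge 0$. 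Truthful bidding is therefore a best response against truthful others, which is exactly BNIC.

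The only delicate step is the integration in the ``only if'' direction. One must justify that $\hat u_i$ equals the integral of its a.e.\ derivative. I would handle this through the Lipschitz bound from the sandwich inequality rather than by assuming differentiability. Alternatively, one can use a Riemann-sum argument: partition $[0,x_i]$, apply the sandwich on each subinterval, and note that the upper and lower sums of the monotone function $r_i$ converge to the same integral.
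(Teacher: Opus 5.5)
Your proof is correct and follows essentially the same route as the paper. Both rewrite BNIC as $\hat u_i(x_i)-\hat u_i(\gamma_i)\ge r_i(\gamma_i)(\gamma_i-x_i)$, swap roles to get the two-sided bound and hence monotonicity, integrate to obtain $\hat u_i(x_i)=p_i(0)-\int_0^{x_i}r_i$, and prove sufficiency by the same integral comparison. For the integration step the paper uses exactly the Riemann-sum argument you give as an alternative; your Lipschitz/absolute-continuity route is a slightly more careful version of it. One small slip: the middle term of your sandwich should be $\hat u_i(x_i)-\hat u_i(\gamma_i)$, not $\hat u_i(\gamma_i)-\hat u_i(x_i)$. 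The outer inequality you actually use, and your later claim that the a.e.\ derivative is $-r_i$, are both consistent with the corrected form.
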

\begin{proof}
	
	First of all, the following equivalence relations follow trivially from the definition of BNIC and equations (9) and (10): 
	
	Mechanism $\langle H,P\rangle$ is BNIC 
	\begin{equation*}
		\begin{split}
			&\text{iff}~\forall i\in A, x_i, \gamma_i\in X_i:\bar{u}_i(x_i,x_i)\geq \bar{u}_i(x_i,\gamma_i)\\
			&\text{iff}~ \forall i\in A, x_i, \gamma_i\in X_i: \hat{u}_i(x_i)\geq p_i(\gamma_i) -r_i(\gamma_i) x_i\\ 
			&\text{iff}~ \forall i\in A, x_i, \gamma_i\in X_i:\hat{u}_i(x_i)-\hat{u}_i(\gamma_i)\geq  r_i(\gamma_i)(\gamma_i-x_i).
		\end{split}
	\end{equation*}

	Based on the above observation, we can prove both directions of this result as follows:
	
	``$\Rightarrow$": For all $i$, $x_i$ and $\gamma_i$, we have both 
	\begin{equation*}
		\begin{split}
			&\hat{u}_i(x_i)-\hat{u}_i(\gamma_i)\geq r_i(\gamma_i)(\gamma_i-x_i)~\text{and}\\
			&\hat{u}_i(\gamma_i)-\hat{u}_i(x_i)\geq r_i(x_i)(x_i-\gamma_i)
		\end{split}
	\end{equation*}

	Then we can obtain 
	\begin{equation}
		r_i(x_i)(x_i-\gamma_i)\leq \hat{u}_i(\gamma_i)-\hat{u}_i(x_i)\leq r_i(\gamma_i)(x_i-\gamma_i)
	\end{equation}
	It follows that $r_i(x_i)\leq r_i(\gamma_i)$ iff $x_i\geq \gamma_i$. Therefore,  $r_i(x_i)$ is a monotone nonincreasing function.
	
	To show the correctness of equation (\ref{eq-pixipi0}), we firstly divide the interval $[0,x_i]$ into $L$ intervals of length $\delta = \frac{x_i}{L}$. Denote by $y^k = (k+1)\delta$ the rightmost end of the $k$th interval, and by $x^k = k\delta$ its leftmost end. Let $x_i = y^k$ and $\gamma_i = x^k$,  then
	\begin{equation}\label{ieq-U}
		\begin{split}
			\sum_{k=0}^{L-1} r_i(y^k)(y^k-x^k)\leq \sum_{k=0}^{L-1} \hat{u}_i(x^k)-\hat{u}_i(y^k)\\
			\leq\sum_{k=0}^{L-1} r_i(x^k)(y^k-x^k)
		\end{split}
	\end{equation}	
	Noticing that, $y^k = x^{k+1}$ for all $0\leq k\leq L-1$, therefore
	\begin{equation}
		\sum_{k=0}^{L-1} \hat{u}_i(x^k)-\hat{u}_i(y^k) = \hat{u}_i(0) - \hat{u}_i(x_i)
	\end{equation}
	Both the left part and right part of inequality~(\ref{ieq-U}) are Rieman sums. By increasing $L$, $\delta$ gradually approaches $0$, both of the left part and right part of inequality~(\ref{ieq-U}) converge to $\int_{0}^{x_i} r_i(t_i) dt_i$. Therefore,
	\begin{equation}\label{eq-hatuihatui}
		\hat{u}_i(0) - \hat{u}_i(x_i) = \int_{0}^{x_i} r_i(t_i)dt_i
	\end{equation}
	Moreover, equation~(10) follows 
	\begin{equation}\label{eq-hatui0}
		\hat{u}_i(0) = p_i(0)
	\end{equation}
	Finally, the equation in item 2) of this lemma follows by combining the equations~(10)(\ref{eq-hatuihatui})and (\ref{eq-hatui0}).\\
	
	\noindent ``$\Leftarrow$": By equation~(10), equation (\ref{eq-pixipi0}) is equivalent to
	\begin{equation}\label{eq-uixi}
		\hat{u}_i(x_i) = p_i(0) - \int_{0}^{x_i} r_i(t_i)dt_i
	\end{equation}
	So, for all $\gamma_i\in X_i$
	\begin{equation}
		\hat{u}_i(x_i)  - \hat{u}_i(\gamma_i) = \int_{x_i}^{\gamma_i}r_i(t_i)dt_i
	\end{equation}
	Since $h_i$ is monotone nonincreasing, 
	\begin{equation}
		\int_{x_i}^{\gamma_i}r_i(t_i)dt_i\geq (\gamma_i-x_i)r_i(\gamma_i)
	\end{equation}
	then 
	$\hat{u}_i(x_i)  - \hat{u}_i(\gamma_i) \geq (\gamma_i-x_i)r_i(\gamma_i)$. Therefore, according to the observation we get in the beginning of this proof, $\langle H,P\rangle$ is BNIC.
\end{proof}
%\vspace{-5mm}
%Note that, the proofs for lemma~\ref{{lm-U2} and the following lemmas~\ref{lm-ep0}, lemma~\ref{lm-g0}, lemma~\ref{lm-qni}, and theorem~\ref{thm-bio} are included in the appendix.

	\begin{lm}\label{lm-ep0}
		A mechanism $\langle R,P \rangle$ is BNIC only if 
		%	\vspace{-3mm}
		\begin{equation}\label{eq-Ep2}
			\mathbb{E}_{x\in X}[\sigma(x)]\! =\!\int_X\!\Big(v_{\mathcal{F}}(S\dag\hat{R}(x))- \sum_{i\in A} \lambda_i(x_i)R_i(x)\Big)f(x)dx\tag{15}
		\end{equation}
		%		\vspace{-6mm}
		$$- \sum_{i\in A}\Big(p_i(0)-\int_{0}^{+\infty} r_i(t_i)dt_i\Big)$$
	\end{lm}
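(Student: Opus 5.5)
The plan is the standard Myerson-style substitution. Starting from equation~(\ref{eq-Ep}), split the expected profit into $\int_X v_{\mathcal{F}}(S\dag\hat{R}(x))f(x)dx$ minus $\sum_{i\in A}\int_X P_i(x)f(x)dx$. Since the $x_i$ are independent, Fubini together with the definition (\ref{eq-pi}) gives $\int_X P_i(x)f(x)dx=\int_{0}^{\omega_i} p_i(x_i)f_i(x_i)dx_i$. Thus it suffices to express $\int p_i f_i$ in terms of the allocation.

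Because the mechanism is BNIC, Lemma~\ref{lm-U2} applies, and we substitute equation~(\ref{eq-pixipi0}):
\[
\int_0^{\omega_i} p_i(x_i)f_i(x_i)dx_i = p_i(0) + \int_0^{\omega_i} x_i r_i(x_i) f_i(x_i)dx_i - \int_0^{\omega_i}\int_0^{x_i} r_i(t_i)dt_i\, f_i(x_i)dx_i ,
\]
using $\int_0^{\omega_i} f_i=1$. The double integral is handled by swapping the order of integration (Fubini/Tonelli; $r_i$ is bounded by the total number of actions, so everything is integrable):
\[
\int_0^{\omega_i} r_i(t_i)\bigl(1-F_i(t_i)\bigr)dt_i = \int_0^{\omega_i} r_i(t_i)dt_i - \int_0^{\omega_i} r_i(t_i)F_i(t_i)dt_i .
\]
Writing $r_i(t_i)F_i(t_i)=r_i(t_i)\frac{F_i(t_i)}{f_i(t_i)}f_i(t_i)$ and combining with the $x_i r_i f_i$ term yields
\[
\int p_if_i = p_i(0) - \int_0^{\omega_i} r_i(t_i)dt_i + \int_0^{\omega_i}\lambda_i(x_i)r_i(x_i)f_i(x_i)dx_i .
\]
Since $r_i$ vanishes outside the support, or rather since costs live in $[0,\omega_i]$, we identify $\int_0^{\omega_i}$ with $\int_0^{+\infty}$ as in the statement. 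If necessary, this is handled by treating $r_i(t)$ for $t>\omega_i$ as the allocation under bids beyond the support. This boundary convention is the one point I would check carefully against how the paper later uses the term, for example in Lemma~\ref{lm-g0} and the payment formula of $\mathcal{PO}$-$\mathcal{ASL}$.

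Finally, we unfold $r_i$ via (\ref{eq-hix}):
\[
\int_0^{\omega_i}\lambda_i(x_i)r_i(x_i)f_i(x_i)dx_i=\int_X \lambda_i(x_i)R_i(x)f(x)dx .
\]
Summing over $i$ and recombining with the value term gives (\ref{eq-Ep2}).

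The main obstacle is only technical: justifying the interchange of integrals and the division by $f_i$. Here we assume $f_i>0$ on $X_i$ so that $\lambda_i$ is defined, and we rely on boundedness of $R_i$ for Fubini. The algebra itself is routine.
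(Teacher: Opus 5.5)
Your proposal is correct and follows the paper's proof essentially step for step. You substitute the BNIC payment identity from Lemma~\ref{lm-U2}, integrate against $f_i$, swap the order of integration to get $\int r_i(1-F_i)$, and fold $\int r_iF_i$ into the virtual-cost term.

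On the boundary point you flag, the convention that makes the stated identity hold (and that the paper tacitly uses) is $r_i\equiv 0$ on $(\omega_i,+\infty)$, i.e.\ reading $\int_0^{+\infty}$ as $\int_0^{\omega_i}$. Your alternative, extending $r_i$ by a possibly nonzero allocation beyond the support, would leave the formula off by $\int_{\omega_i}^{+\infty} r_i(t_i)\,dt_i$: there $1-F_i=0$, and only the part of $\int r_iF_i$ over $[0,\omega_i]$ can be absorbed into $\int_X\lambda_iR_if\,dx$.
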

	\begin{proof}
		By Lemma~\ref{lm-U2},
		\begin{equation}
			p_i(x_i) = p_i(0) + x_ir_i(x_i) - \int_{0}^{x_i} r_i(t_i)dt_i
		\end{equation} 
		Therefore, by equation~(8) we can further obtain
		\begin{equation}
			\begin{split}
				\int_{X} P_i(x)f(x)dx &= \int_{0}^{+\infty}\!\!p_i(x_i)f(x_i)dx_i\\
				&= p_i(0) + \int_{0}^{+\infty}\!\! x_ir_i(x_i)f_i(x_i)dx_i\\
				&\quad- \int_{0}^{+\infty}\!\!\int_{0}^{x_i}\!\! r_i(t_i)dt_i f(x_i) dx_i
			\end{split}
		\end{equation}	
		
		Since
		\begin{equation}
			\int_{0}^{+\infty}\!\!\int_{0}^{x_i} r_i(t_i)dt_i f(x_i) dx_i = \int_{0}^{+\infty} r_i(t_i) (1-F_i(t_i))dt_i
		\end{equation}
		can be obtained by changing the order of integration. So,
		\begin{equation}\label{eq-int}
			\begin{split}
				\int_{X} P_i(x)f(x)dx &= p_i(0) - \int_{0}^{+\infty}r_i(x_i)dx_i\\
				&+ \int_{X}(x_i + \frac{F_i(x_i)}{f_i(x_i)}) R_i(x)f(x)dx\\
				&=p_i(0) - \int_{0}^{+\infty}r_i(x_i)dx_i\\
				&+ \int_{X}\lambda_i(x_i) R_i(x)f(x)dx
			\end{split}
		\end{equation}
		Finally, equation~(\ref{eq-Ep2}) can be obtained by combining equations~(\ref{eq-int}) and (13). 
	\end{proof} 
	
	\begin{lm}\label{lm-g0}
		A mechanism $\langle H,P\rangle$ is BNIC and IR only if	
		\begin{equation}\label{ieq-ph}
			\forall i\in A: p_i(0) - \int_{0}^{+\infty}r_i(t_i)dt_i \geq 0\tag{16}
		\end{equation}
	\end{lm}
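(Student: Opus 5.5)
The plan is to combine IR with the utility identity established inside the proof of Lemma~\ref{lm-U2}. For a BNIC mechanism we showed that
\begin{equation*}
\hat{u}_i(x_i) = p_i(0) - \int_{0}^{x_i} r_i(t_i)\,dt_i
\end{equation*}
for every $x_i \in X_i = [0,\omega_i]$. This follows from equation (14) together with $\hat{u}_i(x_i) = p_i(x_i) - r_i(x_i)x_i$. IR, i.e.\ inequality~(\ref{eq-ir}), requires $\hat{u}_i(x_i)\ge 0$ for every $x_i$. Hence
\begin{equation*}
p_i(0) - \int_0^{x_i} r_i(t_i)\,dt_i \ge 0 \quad \text{for all } x_i\in[0,\omega_i].
\end{equation*}

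The key step is to evaluate this at the worst type, $x_i=\omega_i$. Because $r_i(t_i)\ge 0$ (it is an expected count of restricted actions), the map $x_i\mapsto \int_0^{x_i} r_i$ is nondecreasing. So the binding constraint sits at the top of the support, and there we get $p_i(0)-\int_0^{\omega_i} r_i(t_i)\,dt_i \ge 0$.

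What remains is to identify $\int_0^{\omega_i}$ with the paper's $\int_0^{+\infty}$. Costs are drawn only from $[0,\omega_i]$, so bids lie in $X_i$. I would adopt the convention used implicitly in Lemma~\ref{lm-ep0}, where $f_i$ vanishes outside $X_i$, and extend $r_i$ by $0$ beyond $\omega_i$; an agent is never assigned actions at a bid above the support. Under this convention the two integrals coincide.

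Alternatively, if $X_i$ is taken to be unbounded, I would argue by taking the limit $x_i\to+\infty$. The integrals increase monotonically and are bounded above by $p_i(0)$, so the limit exists and the inequality is preserved.

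The main obstacle is this boundary and convention issue, namely justifying the replacement of $\omega_i$ by $+\infty$. The inequality itself is immediate once Lemma~\ref{lm-U2}'s utility formula is invoked.
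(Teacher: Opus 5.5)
Your proposal is correct and follows the paper's proof: you take the BNIC identity $\hat{u}_i(x_i) = p_i(0) - \int_0^{x_i} r_i(t_i)\,dt_i$ derived via Lemma~\ref{lm-U2}, combine it with IR, and evaluate at the worst type. The paper simply writes $\hat{u}_i(+\infty)\geq 0$, so your explicit endpoint treatment is a more careful version of that same step: evaluate at $\omega_i$ with $r_i$ extended by zero, or pass to the monotone, bounded limit if the support is unbounded.
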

	\begin{proof}
		$\langle H,P\rangle$ is BNIC follows equation (\ref{eq-uixi}), since we have proved that it implies equation (\ref{eq-pixipi0}) in lemma~\ref{lm-U2}.	Then by inequality (11), we have $$\forall i\in A: \hat{u}_i(+\infty) = p_i(0) - \int_{0}^{+\infty}r_i(t_i)dt_i \geq 0.$$
	\end{proof}

	\begin{lm}\label{lm-qni}
		The allocation function of $\mathcal{PO}$-$\mathcal{ASL}$ is monotone non-increasing.
	\end{lm}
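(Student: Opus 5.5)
We must show that for fixed $x_{-i}$, $R^*_i(t,x_{-i})$ is nonincreasing in $t$. Monotonicity of the interim quantity $r_i$ then follows by integrating over $x_{-i}$. The plan is a standard exchange (swap) argument on the objective $g(\eta,x)$ of Algorithm 1, combined with the regularity assumption that $\lambda_i$ is nondecreasing.

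First, fix agent $i$, the bids $x_{-i}$, and two bids $t<t'$ of agent $i$. Let $\eta=\hat R^*(t,x_{-i})$ and $\eta'=\hat R^*(t',x_{-i})$ be the social laws selected at these profiles. Write $n=\sum_{q}|\eta_i(q)|$ and $n'=\sum_q|\eta'_i(q)|$. The key structural observation is how $g$ depends on agent $i$'s bid. By equation (\ref{eq-obj}), for any fixed social law $\theta$,
\[
g(\theta,(s,x_{-i})) = \Big[v_{\mathcal F}(S\dag\theta) - \sum_{j\neq i}\lambda_j(x_j)\sum_q|\theta_j(q)|\Big] - \lambda_i(s)\sum_q|\theta_i(q)|.
\]
The bracket does not depend on $s$. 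So only the last term varies with agent $i$'s bid, and it is linear in $\lambda_i(s)$ with slope equal to the number of restricted actions of $i$.

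Second, write down the two optimality inequalities: $g(\eta,(t,x_{-i}))\ge g(\eta',(t,x_{-i}))$ and $g(\eta',(t',x_{-i}))\ge g(\eta,(t',x_{-i}))$. Adding them, the bracketed terms cancel, leaving
\[
(\lambda_i(t')-\lambda_i(t))(n-n')\ge 0.
\]
If $\lambda_i(t')>\lambda_i(t)$, this gives $n'\le n$ directly.

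The main obstacle is the case $\lambda_i(t')=\lambda_i(t)$. Regularity only gives a nondecreasing $\lambda_i$, so flat pieces are possible. On such a piece the objective $g(\cdot,(s,x_{-i}))$ is literally the same function for all $s$ in the piece. I would handle this by fixing a deterministic tie-breaking rule in step 1 of the mechanism that depends only on the objective values and the social laws, for example a fixed total order on $\mathcal{SL}_S$. With such a rule, identical objective functions yield the identical $\eta$, hence $n=n'$.

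A cleaner alternative is to break ties in favour of fewer restrictions of agent $i$. Lemma~\ref{lm-nocatch} then shows this choice is preserved as the bid increases. Either way $R^*_i(\cdot,x_{-i})$ is nonincreasing. Finally, by equation (\ref{eq-hix}), $r_i(x'_i)$ is an integral of these nonincreasing functions against the nonnegative density $f_{-i}$, so $r_i$ is monotone nonincreasing as well. This is the form needed to apply Lemma~\ref{lm-U2}.
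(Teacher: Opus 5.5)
Your proof is correct and uses the same exchange argument as the paper. The paper takes the optimality inequality $g(\hat{R}'(x),\gamma)\geq g(\hat{R}^*(x),\gamma)$ at the raised bid and, using that $\lambda_k$ is nondecreasing, transfers it back to $x$ to contradict the optimality of $\hat{R}^*(x)$; adding the two optimality inequalities, as you do, is the symmetric form of that step. Your explicit treatment of flat pieces of $\lambda_i$ through a fixed tie-breaking order on $\mathcal{SL}_S$ actually closes a gap in the paper, whose argument only yields the non-strict $g(\hat{R}'(x),x)\geq g(\hat{R}^*(x),x)$ and so gives no contradiction when there are ties. Your alternative rule favouring fewer restrictions of agent $i$ is agent-specific and cannot favour every agent at once, so the fixed-order rule is the one to keep.
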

	\begin{proof}
		Let $\hat{R}^*(x)$ be the social law that maximizes the function $$g(\hat{R}(x),x) = v_{\mathcal{F}}(S\dag\hat{R}(x)) - \sum_{i\in A} \lambda(x_i)R_i(x),$$ $\hat{R}'(x)$ be a social law that $R_k'(x)\geq R^*_k(x)$, and $\gamma = (\gamma_k,x_{-k})$ where $\gamma_k\geq x_k$. Suppose $R^*_k(x_k, x_{-k})$ is a increasing function of $x_k$, that is, when the bid vector changes to $\gamma$, the social law $\hat{R}'(x)$ becomes the one maximize the objective function. Therefore, $g(\hat{R}'(x),\gamma)\geq g(\hat{R}^*(x),\gamma)$. So,
		\begin{equation}\label{ieq-QQ}
			\begin{split}
				&v_{\mathcal{F}}(S\dag\hat{R}'(x)) - \sum_{i\in A} \lambda(\gamma_i)R'_i(x)\\
				&\geq v_{\mathcal{F}}(S\dag\hat{R}^*(x)) - \sum_{i\in A} \lambda(\gamma_i)R^*_i(x)
			\end{split}
		\end{equation}
		Since $\gamma_i = x_i$ for all $i\neq k$, inequality~(\ref{ieq-QQ}) is equivalent to
		\begin{equation}\label{ieq-QQ2}
			\begin{split}
				&g(\hat{R}'(x),x) +  (\lambda(x_k)-\lambda(\gamma_k))R'_k(x)\\
				&\geq g(\hat{R}^*(x),x)+ (\lambda(x_k)-\lambda(\gamma_k))R^*_k(x) 
			\end{split}
		\end{equation}
		By the regularity of the space $X$, we can further obtain $$g(\hat{R}'(x),x)\geq g(\hat{R}^*(x),x),$$ which contradicts the fact that $\hat{R}^*(x)$ is the social law with highest profit in this case. So, for each $k$, $R^*_k(x_k, x_{-k})$ and further $h^*_k(x_k)$ is a monotone non-increasing function of $x_k$.
	\end{proof}

	\begin{thm}\label{thm-bio}
		$\mathcal{PO}$-$\mathcal{ASL}$ is BNIC, IR, and maximizes the expected profit within all BNIC and IR mechanisms.
	\end{thm}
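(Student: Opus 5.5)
We prove the three claims in the order BNIC, IR, optimality, using Lemmas~\ref{lm-U2}, \ref{lm-ep0}, \ref{lm-g0} and~\ref{lm-qni}.

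\textbf{BNIC.} By Lemma~\ref{lm-qni}, for every fixed $x_{-i}$ the ex-post allocation $R^*_i(x_i,x_{-i})$ is monotone nonincreasing in $x_i$. Integrating against $f_{-i}$ then shows that $r_i(x_i)$ is monotone nonincreasing, which is condition 1) of Lemma~\ref{lm-U2}.

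For condition 2), take the expectation over $x_{-i}$ of the payment rule in Algorithm~\ref{alg:po-asl} and use Fubini:
\[
p_i(x_i)=x_ir_i(x_i)+\int_{x_i}^{+\infty}r_i(t_i)\,dt_i .
\]
At $x_i=0$ this gives $p_i(0)=\int_0^{+\infty}r_i(t_i)\,dt_i$. Substituting back yields
\[
p_i(x_i)=p_i(0)+x_ir_i(x_i)-\int_0^{x_i}r_i(t_i)\,dt_i,
\]
which is exactly (\ref{eq-pixipi0}). Lemma~\ref{lm-U2} therefore gives BNIC.

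One technical point must be checked: the integral $\int_{x_i}^{+\infty}R^*_i(t_i,x_{-i})\,dt_i$ has to be finite. Here $\lambda_i(t)\geq t$, so once $t>\sum_j c_j$ any law restricting one of agent $i$'s actions has $g$ strictly below that of the empty law, and hence $R^*_i=0$. The cost supports are also bounded, since $X_i=[0,\omega_i]$, so the integrand vanishes beyond a finite point. A measurability issue also arises from tie-breaking in the argmax. We will fix a consistent tie-breaking rule, for instance preferring fewer restrictions of agent $i$, so that $R^*_i(\cdot,x_{-i})$ is a monotone step function.

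\textbf{IR.} From (\ref{eq-uixi}) we have $\hat u_i(x_i)=p_i(0)-\int_0^{x_i}r_i(t_i)\,dt_i=\int_{x_i}^{+\infty}r_i(t_i)\,dt_i\geq 0$, which is (\ref{eq-ir}). Ex post the same computation gives $u_i=\int_{x_i}^{\infty}R^*_i(t_i,x_{-i})\,dt_i\geq0$.

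\textbf{Optimality.} Let $\langle R,P\rangle$ be any BNIC and IR mechanism. By Lemma~\ref{lm-ep0}, its expected profit equals
\[
\int_X g(\hat R(x),x)f(x)\,dx-\sum_i\Big(p_i(0)-\int_0^{+\infty}r_i(t_i)\,dt_i\Big).
\]
By Lemma~\ref{lm-g0} the subtracted sum is nonnegative, so the expected profit is at most $\int_X \max_{\eta}g(\eta,x)f(x)\,dx$. For $\mathcal{PO}$-$\mathcal{ASL}$ the first term attains this pointwise maximum by Step~1 of the algorithm, and the second term is exactly $0$ by the identity $p_i(0)=\int_0^{+\infty}r_i$ derived above. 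Since $\mathcal{PO}$-$\mathcal{ASL}$ is itself BNIC (shown above), Lemma~\ref{lm-ep0} applies to it, so its expected profit equals this upper bound, and it is therefore optimal among all BNIC and IR mechanisms.

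The main obstacle is the payment-identity step: the Fubini exchange, the finiteness of the integral, and consistent tie-breaking. Beyond that, the proof only assembles the preceding lemmas.
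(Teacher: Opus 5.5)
Your proposal is correct and follows the paper's proof essentially step for step: BNIC from Lemma~\ref{lm-qni} plus the identity $p_i(0)=\int_0^{+\infty}r_i(t_i)\,dt_i$ fed into Lemma~\ref{lm-U2}, IR from $\hat u_i(x_i)=\int_{x_i}^{+\infty}r_i(t_i)\,dt_i\geq 0$, and optimality from Lemmas~\ref{lm-ep0} and~\ref{lm-g0} with pointwise maximization of $g$. The differences are cosmetic: you give a direct upper bound where the paper argues by contradiction, and you correctly invoke Lemma~\ref{lm-g0} where the paper cites Lemma~\ref{lm-U2}. Your remarks on Fubini, finiteness and tie-breaking are refinements rather than a different route; the one quibble is that ``prefer fewer restrictions of agent $i$'' cannot be imposed for every agent at once, but any fixed tie-breaking rule that depends on $x$ only through the values $g(\eta,x)$ suffices.
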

	\begin{proof}
		Since 
		\begin{equation*}
			\begin{split}
				p^*_i(x_i) &= \int_{X_{-i}}P^*_i(x_i,x_{-i})f_{-i}(x_{-i})dx_{-i}\\
				&=x_ih^*_i(x_i) + \int_{0}^{+\infty}h^*_i(t_i)dt_i - \int_{0}^{x_i}h^*_i(t_i)dt_i, 
			\end{split}		
		\end{equation*}
		we can obtain
		\begin{equation}\label{eq-m0}
			p^*_i(0) = \int_{0}^{+\infty}h^*_i(t_i)dt_i
		\end{equation}
		and further 
		\begin{equation}
			p^*_i(x_i) = p^*_i(0) + x_ih^*_i(x_i) - \int_{0}^{x_i} h^*_i(t_i)dt_i.
		\end{equation}
		Moreover, according to lemma~\ref{lm-qni}, $h^*_i$ is monotone non-increasing. So, by lemma~\ref{lm-U2},	$\mathcal{PO}$-$\mathcal{ASL}$ is BNIC.
		
		Then, by equations (\ref{eq-uixi}) and (\ref{eq-m0}), $\forall x_i\in X_i:$ 
		\begin{equation}
			\hat{u}^*_i(x_i)  = \int_{0}^{+\infty}h^*_i(t_i)dt_i - \int_{0}^{x_i}h^*_i(t_i)dt_i \geq 0
		\end{equation}
		
		So, $\mathcal{PO}$-$\mathcal{ASL}$ is also IR.
		\vspace*{2mm}
		
		Suppose there is a BNIC and IR mechanism $\mathcal{M'} = (H',P')$ which achieve higher expected profit than the $\mathcal{PO}$-$\mathcal{ASL}$ mechanism $\mathcal{M^*}=(H^*,P^*)$. By lemma~\ref{lm-U2},  $\mathcal{M'}$ is IR follows
		\begin{equation}\label{ieq-m2}
			\forall i\in A: p'_i(0) - \int_{0}^{+\infty}h'_i(t_i)dt_i\geq 0
		\end{equation}
		Therefore, by lemma~\ref{lm-ep0}, equality (\ref{eq-m0}) and inequality (\ref{ieq-m2}), $\mathbb{E}_{x\in X}[\sigma_{\mathcal{M'}}(x)] > \mathbb{E}_{x\in X}[\sigma_{\mathcal{M^*}}(x)]$ implies
		\begin{equation}
			\begin{split}
				&\int_X\Big(v_{\mathcal{F}}(S\dag\hat{R}'(x)) - \sum_{i\in A} \lambda(x_i) R'_i(x)\Big)f(x)dx\\
				&> \int_X\Big(v_{\mathcal{F}}(S\dag\hat{R}^*(x)) - \sum_{i\in A} \lambda(x_i) R^*_i(x)\Big)f(x)dx	
			\end{split}		
		\end{equation}	
		which contradicts the fact that for all $x\in X$, $R^*(x)$ maximizes $v_{\mathcal{F}}(S\dag\hat{R}(x)) - \sum_{i\in A} \lambda(x_i) R_i(x)$.
	\end{proof}

	\begin{thm}(Archer and Tardos 2001; Hartline 2006)\label{thm-truthful}
		A single-parameter procurement auction is truthful if and only if for any agent $i$ and bids of other
		agents $x_{-i}$ fixed, 
		\begin{itemize}
			\item $R_i(x_i,x_{-i})$ is monotone non-increasing.
			\item $P_i(x_i,x_{-i}) = R_i(x_i,x_{-i})x_i+\int_{x_i}^{+\infty}R_i(t_i,x_{-i})dt_i$
		\end{itemize}
	\end{thm}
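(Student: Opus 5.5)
We prove the statement by fixing an agent $i$ and the other bids $x_{-i}$, and then repeating the argument of Lemma~\ref{lm-U2} \emph{ex post}, with no expectation over $x_{-i}$. Write $R(t)=R_i(t,x_{-i})$, $P(t)=P_i(t,x_{-i})$, and $u(t)=P(t)-tR(t)$ for the utility of truthful bidding at true cost $t$.

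Truthfulness means $u(x)\geq P(y)-xR(y)$ for all $x,y$. Adding and subtracting $yR(y)$ on the right, this is equivalent to
\[
u(x)-u(y)\geq R(y)(y-x)\quad \text{for all } x,y.
\]
So the theorem reduces to characterizing the pairs $(R,P)$ that satisfy this family of inequalities.

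For the ``only if'' direction, we write the inequality for $(x,y)$ and for $(y,x)$ and chain them:
\[
R(x)(x-y)\leq u(y)-u(x)\leq R(y)(x-y).
\]
For $x>y$ this gives $R(x)\leq R(y)$, which is monotone non-increasing. Next we partition $[0,x]$ into $L$ pieces and sum the sandwich inequalities, exactly as in the proof of Lemma~\ref{lm-U2}. Since $R$ is monotone, it is Riemann integrable. Letting $L\to\infty$ then gives
\[
u(x)=u(0)-\int_0^x R(t)\,dt,
\]
or equivalently
\[
P(x)=xR(x)+u(0)-\int_0^x R(t)\,dt.
\]

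To obtain the stated payment formula we need
\[
u(0)=\int_0^{+\infty}R(t)\,dt.
\]
This does not follow from truthfulness alone: adding a constant $h_i(x_{-i})$ to $P$ preserves truthfulness. So at this point I will invoke the normalization implicit in the cited characterization. That normalization is that an agent bidding arbitrarily high is not selected and receives nothing, i.e.\ $u(t)\to 0$ as $t\to+\infty$ (equivalently $P(+\infty)=0$, the ``no positive transfers to losers'' convention). Under it, letting $x\to\infty$ in the envelope identity gives the required value of $u(0)$. Substituting back yields
\[
P(x)=xR(x)+\int_x^{+\infty}R(t)\,dt.
\]

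For the ``if'' direction, we assume $R$ is non-increasing and $P$ has the stated form. Then
\[
u(x)=\int_x^{+\infty}R(t)\,dt,
\]
so $u(x)-u(y)=\int_x^y R(t)\,dt$. Monotonicity gives $\int_x^y R(t)\,dt\geq (y-x)R(y)$ for both orders of $x$ and $y$, since for $y<x$ the integral is $-\int_y^x R\geq -(x-y)R(y)$. This is exactly the truthfulness inequality, and as a by-product $u\geq 0$, so the mechanism is also ex post IR.

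The main obstacle is the normalization step, together with integrability. We must make explicit that the integral $\int_x^{+\infty}R$ is finite, for instance because $R(t)=0$ once $\lambda_i(t)$ exceeds the total value, or because $R$ is truncated beyond $\omega_i$. We must also state plainly that the payment identity holds up to the additive term $h_i(x_{-i})$, and is pinned down only by the normalization $u(+\infty)=0$. Everything else is a pointwise replay of Lemma~\ref{lm-U2}.
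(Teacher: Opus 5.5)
Your proof is correct. The paper gives no proof of this theorem; it imports it from Archer--Tardos and Hartline. Your argument is exactly the paper's proof of Lemma~\ref{lm-U2} (sandwich inequality, monotonicity, Riemann sums), carried out pointwise for fixed $x_{-i}$ instead of in expectation. Your normalization $u(+\infty)=0$ plays the same role as the paper's choice $p^*_i(0)=\int_0^{+\infty}h^*_i(t_i)\,dt_i$ in the proof of Theorem~\ref{thm-bio}.

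Your caveat is also right and worth keeping. As literally stated, the ``only if'' direction is false, because adding any $h_i(x_{-i})$ to $P_i$ preserves truthfulness. The original Archer--Tardos characterization carries this $h_i(x_{-i})$ term, and the displayed formula is the special case $h_i(x_{-i})=\int_0^{+\infty}R_i(t,x_{-i})\,dt$.

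This gap does not affect the paper. The corollary that $\mathcal{PO}$-$\mathcal{ASL}$ is truthful uses only the ``if'' direction. Your argument proves that direction with no normalization, needing only that $\int_{x_i}^{+\infty}R_i(t_i,x_{-i})\,dt_i$ be finite.
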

	
	\begin{cor}
		$\mathcal{PO}$-$\mathcal{ASL}$ is truthful.
	\end{cor}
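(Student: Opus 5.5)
The plan is to check directly that $\mathcal{PO}$-$\mathcal{ASL}$ meets both conditions of the Archer--Tardos/Hartline characterization (Theorem~\ref{thm-truthful}). Fix an agent $i$ and the other agents' bids $x_{-i}$. The second condition holds by construction. Step~2 of Algorithm~\ref{alg:po-asl} defines $P^*_i(x)=R^*_i(x)x_i+\int_{x_i}^{+\infty}R^*_i(t_i,x_{-i})dt_i$, which is exactly the required formula. One point needs checking: the integral must be finite. I would argue this via the regularity assumption. $\lambda_i(t)\ge t$ grows without bound, and the value $v_{\mathcal{F}}$ is bounded by $\sum_j c_j$. So for $t_i$ large enough, any social law restricting at least one of agent $i$'s actions is beaten by the best law that restricts none of them, giving $R^*_i(t_i,x_{-i})=0$. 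Hence the integrand has bounded support. If the cost support is $[0,\omega_i]$, one can equally take $R^*_i=0$ beyond $\omega_i$ by convention; either way the integral is finite.

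The substantive step is the first condition: $R^*_i(t_i,x_{-i})$ must be monotone nonincreasing in $t_i$ pointwise, for fixed $x_{-i}$. Lemma~\ref{lm-qni} is stated for the allocation function and is really a pointwise argument, so I would redo it cleanly. Take $\gamma_i>x_i$. Let $\eta=\hat R^*(x_i,x_{-i})$ and $\eta'=\hat R^*(\gamma_i,x_{-i})$, and write $n=\sum_q|\eta_i(q)|$ and $n'=\sum_q|\eta'_i(q)|$. By optimality at each profile, $g(\eta,x)\ge g(\eta',x)$ and $g(\eta',\gamma)\ge g(\eta,\gamma)$. By Equation~(\ref{eq-obj}), only agent $i$'s term changes between the two profiles, so $g(\eta,\gamma)=g(\eta,x)-(\lambda_i(\gamma_i)-\lambda_i(x_i))n$, and likewise for $\eta'$. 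Adding the two optimality inequalities gives $(\lambda_i(\gamma_i)-\lambda_i(x_i))(n'-n)\le 0$.

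This is where the main obstacle lies. Regularity only gives $\lambda_i$ nondecreasing, so if $\lambda_i(\gamma_i)=\lambda_i(x_i)$ the inequality says nothing. I would resolve this with a consistent tie-breaking rule. When $\lambda_i$ is constant on $[x_i,\gamma_i]$, the objective $g(\cdot,(t,x_{-i}))$ is literally the same function for every $t$ in that interval. A deterministic argmax that depends only on the objective values, for instance preferring the smallest $R_i$ and then a fixed lexicographic order, therefore returns the same $\eta$, so $n'=n$. When $\lambda_i(\gamma_i)>\lambda_i(x_i)$, the inequality gives $n'\le n$ directly. The same rule also settles ties at strict increases: if $\eta$ and $\eta'$ both stay optimal, the fewest-restrictions preference together with Lemma~\ref{lm-nocatch} prevents $R_i$ from going up.

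Combining the two conditions, Theorem~\ref{thm-truthful} yields that $\mathcal{PO}$-$\mathcal{ASL}$ is truthful. This is a dominant-strategy guarantee, which strengthens the BNIC part of Theorem~\ref{thm-bio}. Averaging the pointwise payment identity over $x_{-i}$ recovers the expected-payment form used there, so the two results are consistent.
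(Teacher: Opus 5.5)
Your proposal is correct and follows the same route as the paper, which simply combines the monotonicity lemma for the allocation, the Archer--Tardos/Hartline characterization, and the payment rule in Step~2 of the mechanism. Your extra care on monotonicity is well placed: the paper's proof of that lemma only reaches the non-strict inequality $g(\hat{R}'(x),x)\geq g(\hat{R}^*(x),x)$, which is no contradiction when $\lambda_i(\gamma_i)=\lambda_i(x_i)$, so your tie-breaking rule (one that depends only on the objective) genuinely closes that gap. Your closing remark about ties at strict increases is superfluous, though, since $\lambda_i(\gamma_i)>\lambda_i(x_i)$ already forces $n'\le n$ for any choice of maximizers.
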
	
	\begin{proof}
		Trivial, according to lemma~\ref{lm-qni}, theorem~\ref{thm-truthful} and the specification of mechanism $\mathcal{PO}$-$\mathcal{ASL}$.
	\end{proof}
	
	\begin{lm}\label{lm-nocatch}
		If $g(\eta,(x_i,x_{-i})) \geq g(\eta',(x_i,x_{-i}))$ and $\sum_{q\in Q} |\eta_i(q)| \leq \sum_{q\in Q} |\eta'_i(q)|$, then $g(\eta,(t,x_{-i})) \geq g(\eta',(t,x_{-i}))$ for all $t\geq x_i$.
	\end{lm}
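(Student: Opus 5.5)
The plan is to compute the difference $g(\eta,(t,x_{-i})) - g(\eta',(t,x_{-i}))$ explicitly from the definition in equation~(\ref{eq-obj}), and to show it is nondecreasing in $t$. Write $n=\sum_{q\in Q}|\eta_i(q)|$ and $n'=\sum_{q\in Q}|\eta'_i(q)|$, so that $n\leq n'$ by hypothesis. Only agent $i$'s bid changes when we pass from $(x_i,x_{-i})$ to $(t,x_{-i})$. Hence the valuation terms $v_{\mathcal{F}}(S\dag\eta)$ and $v_{\mathcal{F}}(S\dag\eta')$ are unchanged. So are all the virtual-cost terms $\lambda_j(x_j)\sum_{q}|\eta_j(q)|$ for $j\neq i$.

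Concretely, I will write
$$g(\eta,(t,x_{-i})) = g(\eta,(x_i,x_{-i})) - n\bigl(\lambda_i(t)-\lambda_i(x_i)\bigr),$$
and the analogous identity for $\eta'$ with $n'$ in place of $n$. Subtracting the two identities gives
$$g(\eta,(t,x_{-i})) - g(\eta',(t,x_{-i})) = \bigl[g(\eta,(x_i,x_{-i})) - g(\eta',(x_i,x_{-i}))\bigr] + (n'-n)\bigl(\lambda_i(t)-\lambda_i(x_i)\bigr).$$

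The first bracket is nonnegative by hypothesis. For the second term, $n'-n\geq 0$ by hypothesis. The regularity assumption says $\lambda_i$ is nondecreasing, so $t\geq x_i$ gives $\lambda_i(t)-\lambda_i(x_i)\geq 0$. The product is therefore nonnegative, and the conclusion follows.

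The only step needing care is the appeal to regularity. This lemma, like Lemma~\ref{lm-qni}, relies on the standing assumption that each $\lambda_i$ is nondecreasing. Without it the monotonicity in $t$ could fail, so I will state explicitly that this assumption is being invoked. Everything else is a direct rearrangement.
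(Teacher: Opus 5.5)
Your proof is correct and follows the same route as the paper: it isolates agent $i$'s virtual-cost term, notes that the valuation and the other agents' terms do not depend on $t$, and combines $n\le n'$ with the monotonicity of $\lambda_i$ given by the regularity assumption. Your single identity $g(\eta,(t,x_{-i}))-g(\eta',(t,x_{-i})) = [g(\eta,(x_i,x_{-i}))-g(\eta',(x_i,x_{-i}))] + (n'-n)(\lambda_i(t)-\lambda_i(x_i))$ just packages the paper's chain of rearranged inequalities more tidily, and it avoids the reversed inequality that appears in the paper's second display.
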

	\begin{proof}
		$g(\eta,(x_i,x_{-i})) \geq g(\eta',(x_i,x_{-i}))$ implies 
		\begin{equation}
			\begin{split}
				&v_{\mathcal{F}}(S\dag\eta) - \sum_{i\in A} (\lambda_i(x_i)\cdot\sum_{q\in Q} |\eta_i(q)|) \\
				&\geq v_{\mathcal{F}}(S\dag\eta') - \sum_{i\in A} (\lambda_i(x_i)\cdot\sum_{q\in Q} |\eta'_i(q)|)
			\end{split}
		\end{equation}
		So, we have 
		\begin{equation} 
			\begin{split} 
				&v_{\mathcal{F}}(S\dag\eta') -\!\!\!\! \sum_{j\in A\setminus \{i\}}\!\!\! (\lambda_j(x_j)\cdot\!\sum_{q\in Q} |\eta'_j(q)|)-  \lambda_i(x_i)\cdot\!\sum_{q\in Q} |\eta'_i(q)|\\
				&\geq v_{\mathcal{F}}(S\dag\eta) -\!\!\!\! \sum_{j\in A\setminus \{i\}}\!\!\! (\lambda_j(x_j)\cdot\!\sum_{q\in Q} |\eta_j(q)|) -  \lambda_i(x_i)\cdot\!\sum_{q\in Q} |\eta_i(q)|		
			\end{split}  
		\end{equation}
		Therefore, for all $t\geq x_i$, we can further obtain
		\begin{equation} 
			\begin{split} 
				v_{\mathcal{F}}(S\dag\eta') &- \sum_{j\in A\setminus \{i\}} (\lambda_j(x_j)\cdot\sum_{q\in Q} |\eta'_j(q)|)\\
				&-(v_{\mathcal{F}}(S\dag\eta) - \!\!\!\!\!\!\sum_{j\in A\setminus \{i\}} (\lambda_j(x_j)\cdot\sum_{q\in Q} |\eta_j(q)|))\\  		
				&\leq \lambda_i(x_i)\cdot(\sum_{q\in Q} |\eta'_i(q)|- \sum_{q\in Q} |\eta_i(q)|)\\
				&\leq \lambda_i(t)\cdot(\sum_{q\in Q} |\eta'_i(q)|- \sum_{q\in Q} |\eta_i(q)|)
			\end{split}  
		\end{equation}
		Note that, the last inequality above is due to the fact that $\lambda_i$ is non-decreasing function of $x_i$ for every $i$, and $\sum_{q\in Q} |\eta'_i(q)|- \sum_{q\in Q} |\eta_i(q)|\geq 0$, and then we can transform the above inequality to the following form:
		\begin{equation}
			\begin{split}
				\forall t\geq x_i: &v_{\mathcal{F}}(S\dag\eta) - \sum_{j\in A\setminus \{i\}} (\lambda_j(x_j)\cdot\sum_{q\in Q} |\eta_j(q)|)\\ 
				&- \lambda_i(t)\cdot\sum_{q\in Q} |\eta_i(q)|\\
				&\geq v_{\mathcal{F}}(S\dag\eta') -\!\!\! \sum_{j\in A\setminus \{i\}} (\lambda_j(x_j)\cdot\sum_{q\in Q} |\eta'_j(q)|)\\ 
				&- \lambda_i(t)\cdot\sum_{q\in Q} |\eta'_i(q)|
			\end{split}
		\end{equation}
		That is, we have $\forall t\geq x_i: g(\eta,(t,x_{-i})) \geq g(\eta',(t,x_{-i}))$.         
	\end{proof}

	\begin{lm}\label{lm-eta}
		$\forall t_i \in [0,+\infty) : \eta^{(i,n,(t_i,x_{-i}))} = \eta^{(i,n,(0,x_{-i}))}$. 
	\end{lm}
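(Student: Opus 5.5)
The plan is to show that, inside the restricted family $\mathcal{SL}^{(i,n)}_S$, changing agent $i$'s bid shifts the objective $g$ of every candidate by exactly the same amount. An argmax over a fixed finite set is unchanged when all candidates are shifted by a common constant.

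\emph{Decomposition.} Fix $i$, $n$ and $x_{-i}$, and take any $\eta\in\mathcal{SL}^{(i,n)}_S$. By the definition of $g$ in equation~(\ref{eq-obj}), I will split off agent $i$'s term:
\begin{equation*}
g(\eta,(t_i,x_{-i})) = v_{\mathcal{F}}(S\dag\eta) - \sum_{j\in Ag\setminus\{i\}}\lambda_j(x_j)\sum_{q\in Q}|\eta_j(q)| - \lambda_i(t_i)\sum_{q\in Q}|\eta_i(q)|.
\end{equation*}
Since $\eta\in\mathcal{SL}^{(i,n)}_S$, the last sum equals $n$. Hence
\begin{equation*}
g(\eta,(t_i,x_{-i})) = g(\eta,(0,x_{-i})) - n\big(\lambda_i(t_i)-\lambda_i(0)\big).
\end{equation*}
The subtracted quantity depends only on $t_i$ and $n$, not on $\eta$.

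\emph{Conclusion.} For any $\eta,\eta'\in\mathcal{SL}^{(i,n)}_S$, the difference $g(\eta,(t_i,x_{-i}))-g(\eta',(t_i,x_{-i}))$ therefore equals $g(\eta,(0,x_{-i}))-g(\eta',(0,x_{-i}))$. So the ordering of the finite set $\mathcal{SL}^{(i,n)}_S$ by $g$ is the same for every $t_i\in[0,+\infty)$, and the argmax in equation~(\ref{eq-dominantSL}) coincides with its value at $t_i=0$. As a by-product, the same decomposition yields Lemma~\ref{lm-vnt}, because $\lambda_i(0)=0$ when $F_i(0)=0$.

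\emph{Main obstacle.} The only delicate point is that $\arg\max$ may be set-valued when there are ties. I will handle this by reading the statement as equality of the argmax sets, which follows immediately because the ordering is preserved. Alternatively, one can fix a bid-independent tie-breaking rule, for instance the first maximizer in a fixed enumeration of $\mathcal{SL}^{(i,n)}_S$; that rule then selects the same element for every $t_i$. I will also note that $\lambda_i$ must be finite on the relevant range, which holds because $f_i>0$ on $X_i$; bids outside $X_i$ are treated formally.
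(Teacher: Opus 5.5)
Your proposal is correct and follows essentially the same route as the paper. The paper also expands $g$, notes that on $\mathcal{SL}^{(i,n)}_S$ agent $i$'s term is $n\lambda_i(t_i)$, which does not depend on $\eta$, and concludes that the $\arg\max$ coincides with the one at $t_i=0$; your explicit handling of ties (set-valued $\arg\max$ or a bid-independent tie-breaking rule) and of where $\lambda_i$ is defined adds care the paper's proof leaves implicit.
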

	\begin{proof}
		\begin{equation*}
			\begin{split}
				&\forall t_i \in [0,+\infty) : \eta^{(i,n,(t_i,x_{-i}))} = \arg\!\!\!\! \max_{\eta\in \mathcal{SL}^{(i,n)}_S} g(\eta, (t_i,x_{-i}))\\
				&= \arg\!\!\!\!\! \max_{\eta\in \mathcal{SL}^{(i,n)}_S}\!\! \Big(v_{\mathcal{F}}(S\dag\eta) -\!\!\!\!\!\!\! \sum_{j\in A\setminus \{i\}}\!\!\!\!\!\! (\lambda_j(x_j)\!\cdot\!\!\!\sum_{q\in Q}\!\!|\eta_j(q)|)\! - \! \lambda_i(t_i)\!\cdot\!\!\!\sum_{q\in Q}\!\! |\eta_i(q)|\Big)\\
				&= \arg\!\!\!\! \max_{\eta\in \mathcal{SL}^{(i,n)}_S} \Big(v_{\mathcal{F}}(S\dag\eta) -\!\!\!\! \sum_{j\in A\setminus \{i\}}\!\!\!\! (\lambda_j(x_j)\cdot\sum_{q\in Q} |\eta_j(q)|) -  n\lambda_i(t_i)\Big)\\
				&= \arg\!\!\!\! \max_{\eta\in \mathcal{SL}^{(i,n)}_S} \Big(v_{\mathcal{F}}(S\dag\eta) -\!\!\!\! \sum_{j\in A\setminus \{i\}}\!\!\!\! (\lambda_j(x_j)\cdot\sum_{q\in Q} |\eta_j(q)|) -  n\lambda_i(0)\Big)\\ 
				&\qquad (\text{since both}~n\lambda_i(t_i)~\text{and}~n\lambda_i(t_i)~\text{are constants})\\
				&= \arg\!\!\!\! \max_{\eta\in \mathcal{SL}^{(i,n)}_S} g(\eta, (0,x_{-i})) = \eta^{(i,n,(0,x_{-i}))}
			\end{split}
		\end{equation*}
	\end{proof}
	
	\begin{lm}\label{lm-restrict}
		$$\forall t_i \in [x_i,+\infty) : \arg\max_{\eta\in \mathcal{SL}_S} g(\eta, (t_i,x_{-i})) = $$
		$$ \arg\max_{\eta \in \{\eta^{(i,n,x_{-i})}~|~ n\leq R^*_i(x_i,x_{-i}) \}} g(\eta, (t_i,x_{-i})).$$ 
	\end{lm}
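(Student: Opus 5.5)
Fix $t_i\ge x_i$ and let $n_0=R^*_i(x_i,x_{-i})$. I will show that a maximizer of $g(\cdot,(t_i,x_{-i}))$ over all of $\mathcal{SL}_S$ can be found inside the candidate set $\mathcal{C}=\{\eta^{(i,n,x_{-i})}\mid n\le n_0\}$. Since $\mathcal{C}\subseteq\mathcal{SL}_S$, it suffices to show that no social law outside $\mathcal{C}$ strictly beats every member of $\mathcal{C}$ at $(t_i,x_{-i})$. Equivalently, the maximum over $\mathcal{SL}_S$ equals the maximum over $\mathcal{C}$, so the arg max sets coincide, reading arg max as "an optimal element", with ties broken consistently.

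The first step partitions $\mathcal{SL}_S=\bigcup_n \mathcal{SL}^{(i,n)}_S$. Within each class, every $\eta$ restricts exactly $n$ of agent $i$'s actions, so $g(\eta,(t_i,x_{-i}))$ differs from $g(\eta,(0,x_{-i}))$ by the same constant $n(\lambda_i(t_i)-\lambda_i(0))$. By Lemma~\ref{lm-eta}, $\eta^{(i,n,x_{-i})}$ is therefore optimal within $\mathcal{SL}^{(i,n)}_S$ for every bid $t_i$. Hence
\[
\max_{\eta\in\mathcal{SL}_S} g(\eta,(t_i,x_{-i}))=\max_{n} g(\eta^{(i,n,x_{-i})},(t_i,x_{-i})),
\]
where $n$ ranges over all feasible counts. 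It remains to discard the indices $n>n_0$.

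The second step handles those indices using Lemma~\ref{lm-nocatch}. Let $\eta^*=\hat R^*(x_i,x_{-i})$ be the dominant social law at the original bid. It lies in $\mathcal{SL}^{(i,n_0)}_S$ and is optimal overall, so $g(\eta^*,(x_i,x_{-i}))\ge g(\eta^{(i,n,x_{-i})},(x_i,x_{-i}))$ for every $n$. By the first step, $g(\eta^{(i,n_0,x_{-i})},\cdot)=g(\eta^*,\cdot)$ at every bid of agent $i$, so $\eta^{(i,n_0,x_{-i})}$ is itself a dominant social law at $x$. Now take $n>n_0$ and apply Lemma~\ref{lm-nocatch} with $\eta=\eta^{(i,n_0,x_{-i})}$ and $\eta'=\eta^{(i,n,x_{-i})}$. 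The count hypothesis $n_0\le n$ holds, so $g(\eta^{(i,n_0,x_{-i})},(t,x_{-i}))\ge g(\eta^{(i,n,x_{-i})},(t,x_{-i}))$ for all $t\ge x_i$. Thus every class with $n>n_0$ is weakly dominated by a member of $\mathcal{C}$, and the two maxima agree.

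The main obstacle is the ambiguity of $\arg\max$ under ties. I would state explicitly that the claim means the optimal value is attained in $\mathcal{C}$, and that any maximizer over $\mathcal{C}$ is a maximizer over $\mathcal{SL}_S$. I would also note that if ties are broken toward fewer restrictions of agent $i$, as is consistent with Lemma~\ref{lm-qni}, the selected allocations coincide. A minor point is that $\eta^{(i,n,x_{-i})}$ need only be defined for those $n$ with $\mathcal{SL}^{(i,n)}_S\neq\emptyset$; the empty classes are simply skipped.
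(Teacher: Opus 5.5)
Your proof is correct and uses the same ingredients as the paper's proof, only in the opposite order: the paper applies Lemma~\ref{lm-nocatch} directly to the selected law $\eta^*$ against an arbitrary $\eta'$ with more of agent $i$'s actions restricted, then uses Lemma~\ref{lm-eta} to replace each class $\mathcal{SL}^{(i,n)}_S$, $n\le R^*_i(x_i,x_{-i})$, by $\eta^{(i,n,x_{-i})}$, which spares your tie argument between $\eta^*$ and $\eta^{(i,n_0,x_{-i})}$. Your explicit reading of $\arg\max$ as ``some optimal element'' under consistent tie-breaking is a genuine improvement, because the paper's literal equality of $\arg\max$ sets fails whenever a law restricting more of agent $i$'s actions ties with $\eta^*$ (for example at $t_i=x_i$, or on an interval where $\lambda_i$ is constant).
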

	\begin{proof}
		Let $\eta^* = \hat{R}^*_i(x_i,x_{-i})$, and $\eta'$ be an arbitrary social law satisfying
		\begin{equation}
			\sum_{q\in Q} |\eta^*_i(q)| \leq \sum_{q\in Q} |\eta'_i(q)|
		\end{equation}
		Since $\eta^*$ is the dominant social law under the bid profile $(x_i,x_{-i})$, we have  $g(\eta^*,(x_i,x_{-i})) \geq g(\eta',(x_i,x_{-i}))$. Therefore, according to lemma~\ref{lm-nocatch}, 
		\begin{equation}
			\forall t_i\geq x_i: g(\eta^*,(t,x_{-i})) \geq g(\eta',(t,x_{-i}))
		\end{equation}
		Therefore, let $m = R^*_i(x_i,x_{-i})$, then $\forall t_i \in [x_i,+\infty) :$
		\begin{equation*}
			\begin{split}
				&\arg\max_{\eta\in \mathcal{SL}_S} g(\eta, (t_i,x_{-i})) = \arg\!\!\!\!\!\max_{\eta\in \mathcal{SL}^{(i,0)}_S\cup ...\cup \mathcal{SL}^{(i,m)}_S}\!\!\!\!\!\!\!\!\! g(\eta, (t_i,x_{-i}))\\
				&\in \{\arg\max_{\eta\in \mathcal{SL}^{(i,0)}_S}   g(\eta, (t_i,x_{-i})), ...,  \arg\!\!\!\!\max_{\eta\in \mathcal{SL}^{(i,m)}_S}\!\!\!\!   g(\eta, (t_i,x_{-i}))\}
			\end{split}
		\end{equation*}
		$$= \{\eta^{(i,0,(x_i,x_{-i}))}, ..., \eta^{(i,R^*_i(x_i,x_{-i}),(x_i,x_{-i}))}\} (\text{by equation~19})$$
		$$= \{\eta^{(i,0,x_{-i})}, ..., \eta^{(i,R^*_i(x_i,x_{-i}),x_{-i})}\} \quad\quad\quad\quad (\text{by lemma~\ref{lm-eta}})$$
		So, since $ \arg\max_{\eta\in \mathcal{SL}_S} g(\eta, (t_i,x_{-i}))$ always be the dominant social law under bid profile $(t_i,x_{-i})$,  we have $\forall t_i \in [x_i,+\infty) :$
		$$ \arg\max_{\eta\in \mathcal{SL}_S} g(\eta, (t_i,x_{-i})) = \arg\!\!\!\!\!\!\!\!\max_{\eta \in \{\eta^{(i,n,x_{-i})}~|~ n\leq m \}} \!\!\!\!\!\!\!\!g(\eta, (t_i,x_{-i})).$$    
	\end{proof}  
	
	\begin{lm}\label{lm-vnt}
		$v(n,t) = v_n - n\lambda_i(t) $.
	\end{lm}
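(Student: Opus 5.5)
The proof is a direct unfolding of definitions, using Lemma~\ref{lm-eta} to fix the social law whose value is being tracked. By Lemma~\ref{lm-eta}, the dominant $(i,n)$-social law $\eta^{(i,n,x_{-i})}$ does not depend on agent $i$'s bid. So $v(n,t)$ is simply $g(\eta^{(i,n,x_{-i})},(t,x_{-i}))$ for one fixed social law $\eta:=\eta^{(i,n,x_{-i})}$ as $t$ ranges over $[0,+\infty)$. Similarly, $v_n = g(\eta,(0,x_{-i}))$.

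Next I expand $g$ using equation~(\ref{eq-obj}):
$$g(\eta,(t,x_{-i})) = v_{\mathcal{F}}(S\dag\eta) - \sum_{j\in A\setminus\{i\}}\lambda_j(x_j)\sum_{q\in Q}|\eta_j(q)| - \lambda_i(t)\sum_{q\in Q}|\eta_i(q)|.$$
The first two terms do not involve $t$. Because $\eta\in\mathcal{SL}^{(i,n)}_S$, we have $\sum_{q\in Q}|\eta_i(q)| = n$, so the last term equals $n\lambda_i(t)$. Writing $K$ for the $t$-independent part gives $v(n,t) = K - n\lambda_i(t)$.

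Setting $t=0$ gives $v_n = K - n\lambda_i(0)$. Since $F_i(0)=0$ (costs are supported on $[0,\omega_i]$), we have $\lambda_i(0)=0$. Hence $K=v_n$ and $v(n,t)=v_n-n\lambda_i(t)$.

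The only delicate point is $\lambda_i(0)=0$, which requires $f_i(0)>0$, or at least that $F_i(0)/f_i(0)$ is interpreted as $0$. If this is not guaranteed, I would instead state the identity as $v(n,t)=v_n-n(\lambda_i(t)-\lambda_i(0))$. This form leaves the later differences $v_{n}-v_m$ in Theorem~\ref{thm-turning} unchanged up to the affine shift $n\lambda_i(0)$, which can be absorbed by defining $v_n$ through $K$ directly. I would add a remark to that effect.
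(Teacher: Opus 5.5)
Your proof is correct and is essentially the paper's: fix $\eta=\eta^{(i,n,x_{-i})}$, expand $g(\eta,(t,x_{-i}))$, use $\sum_{q\in Q}|\eta_i(q)|=n$, and identify the $t$-independent part with $v_n$. The paper makes that last identification silently, so your explicit observation that it requires $\lambda_i(0)=0$ is extra care rather than a different route. Under the paper's regularity assumption your fallback is not really needed, because $F_i(0)=0$ already forces $F_i(t)/f_i(t)\to 0$ as $t\to 0^+$.
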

	\begin{proof}
		Let $\eta = \eta^{(i,n,x_{-i})}$, then we have 
		$$v(n,t) = v_{\mathcal{F}}(S\dag\eta) - \sum_{i\in A\setminus\{i\}} (\lambda_i(x_i)\cdot\sum_{q\in Q} |\eta_i(q)|) - n\lambda_i(t)$$
		$$= v(n,0) - n\lambda_i(t) = v_n - n\lambda_i(t)$$
	\end{proof}

	\begin{thm}\label{thm-turning}
		$\forall j\geq 1:$ If $n_{j-1} > 0$, then we have 
		$$p_j = \min_{0 \leq m < n_{j-1}} \lambda^{-1}_i(\frac{v_{n_{j-1}} - v_m}{n_{j-1} -m})$$
		$$n_j = \arg \min_{0 \leq m < n_{j-1}} \lambda^{-1}_i(\frac{v_{n_{j-1}} - v_m}{n_{j-1} -m})$$
		else we have $p_j = +\infty$, $n_j = 0$.
	\end{thm}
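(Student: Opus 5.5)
We argue by induction on $j$, maintaining the invariant that on the interval $(p_{j-1},p_j)$ the dominant social law is $\eta^{(i,n_{j-1},x_{-i})}$, i.e.\ $R^*_i(t,x_{-i})=n_{j-1}$. The base case $j-1=0$ holds by the definitions $p_0=x_i$ and $n_0=R^*_i(x_i,x_{-i})$.

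For the inductive step, suppose that at $t=p_{j-1}$ the law $\eta^{(i,n_{j-1},x_{-i})}$ is dominant and $n_{j-1}>0$. Lemma~\ref{lm-restrict} is stated for $t\ge x_i$ with bound $R^*_i(x_i,x_{-i})$, but its proof only uses Lemma~\ref{lm-nocatch} and the dominance of the current law. Applied at $p_{j-1}$, it shows that for all $t\ge p_{j-1}$ the dominant law lies in $\{\eta^{(i,m,x_{-i})}: m\le n_{j-1}\}$. By Lemma~\ref{lm-vnt} the competing values are the curves $v(m,t)=v_m-m\lambda_i(t)$, so the question reduces to a comparison of finitely many curves in the single quantity $\lambda_i(t)$.

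Next we compare the current curve with each candidate $m<n_{j-1}$. We have $v(m,t)\ge v(n_{j-1},t)$ iff $(n_{j-1}-m)\lambda_i(t)\ge v_{n_{j-1}}-v_m$, which holds iff $\lambda_i(t)\ge \frac{v_{n_{j-1}}-v_m}{n_{j-1}-m}$. Since $\lambda_i$ is nondecreasing by regularity, the set of $t$ where $m$ overtakes $n_{j-1}$ is an up-ray starting at $\lambda^{-1}_i(\frac{v_{n_{j-1}}-v_m}{n_{j-1}-m})$. Here $\lambda^{-1}_i$ is read as the generalized inverse $\inf\{t:\lambda_i(t)\ge y\}$, and this is well defined because $\lambda_i$ is continuous and increasing to $+\infty$ with $\lambda_i(t)\ge t$. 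Candidates with $m>n_{j-1}$ never overtake, by Lemma~\ref{lm-nocatch} applied in the reverse role. Therefore the current law stays dominant exactly until the first such threshold, which gives $p_j=\min_m \lambda^{-1}_i(\cdot)$. At that point the overtaking curve is the minimizer $n_j$.

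It remains to confirm that $n_j$ is indeed the new dominant law just after $p_j$, so that the induction can continue. Any other $m'$ that dominates slightly after $p_j$ must have its threshold at most $p_j$, so it ties at $p_j$. With a tie-breaking convention, taking the smallest $m$ among the minimizers, the law $n_j$ beats all the tied curves beyond $p_j$: it has the fewest of agent $i$'s units, so by Lemma~\ref{lm-nocatch} it stays weakly ahead and is strictly ahead once $\lambda_i$ strictly increases. This tie handling is the main obstacle. The argmin in the statement must be interpreted as selecting the smallest $m$, and ties between curves at the same crossing point must be checked to be consistent with the mechanism's own tie-breaking in $\hat{R}^*$. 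If $\lambda_i$ has flat stretches, the strictness needs the generalized-inverse convention described above.

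Finally, if $n_{j-1}=0$, then for all larger $t$ the only candidate is $m=0$ by Lemma~\ref{lm-restrict}, so the allocation never changes again. This gives $p_j=+\infty$ and $n_j=0$, consistent with the convention in the definition of turning points.
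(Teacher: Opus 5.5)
Your proposal is correct and follows the same route as the paper. You invoke Lemma~\ref{lm-restrict} at $p_{j-1}$ to confine the dominant law to $\{\eta^{(i,m,x_{-i})}: m\le n_{j-1}\}$, use Lemma~\ref{lm-vnt} to obtain each crossing point $\lambda_i^{-1}\bigl(\frac{v_{n_{j-1}}-v_m}{n_{j-1}-m}\bigr)$, and take the earliest. Your handling of ties (smallest-$m$ argmin) and of $\lambda_i^{-1}$ as a generalized inverse goes beyond the paper, which ignores ties and silently treats $\lambda_i$ as invertible. Note, though, that your tie argument uses continuity of $\lambda_i$, which regularity alone does not give, so it is an added assumption (one the paper also makes implicitly).
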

	\begin{proof} There are the following two cases:
		
		Case 1 ($n_{j-1} > 0$):
		By lemma~\ref{lm-restrict}, since $R^*_i(p_{j-1},x_{-i}) = n_{j-1}$, we have $\forall t_i \in [p_{j-1},+\infty) :$	
		$$ \arg\max_{\eta\in \mathcal{SL}_S} g(\eta, (t_i,x_{-i})) =  \arg\!\!\!\!\!\!\!\!\max_{\eta \in \{\eta^{(i,n,x_{-i})}~|~ n\leq n_{j-1} \}}\!\!\!\!\!\!\!\! g(\eta, (t_i,x_{-i}))$$
		That is, when agent $i$'s bid varies gradually from $p_{j-1}$ to $+\infty$, the selected social law is always the one in the set $\{\eta^{(i,n_{j-1},x_{-i})}, ..., \eta^{(i,0,x_{-i})}\}$ with highest objective function value.
		Moreover, when $t_i = p_{j-1}$, the dominant social law is $\eta^{(i,n_{j-1},x_{-i})}$. Therefore, for a social law $\eta^{(i,m,x_{-i})}$ where $0 \leq m < n_{j-1}$ in the above set, there must be a point $t = x_m$ where $v(m,t)$ catch up with $v(n_{j-1},t)$, and according to lemma~\ref{lm-vnt}, we have
		\begin{equation}
			v_{n_{j-1}} - n\lambda_i(x_m) = v_m - m\lambda_i(x_m)
		\end{equation}
		And therefore 
		\begin{equation}
			x_m = \lambda_i^{-1}(\frac{v_{n_{j-1}} - v_m}{n_{j-1} -m})
		\end{equation}
		Since we always select the dominant social law, as agent $i$'s bid increases gradually from $p_{j-1}$, we selected another social law as soon as we meet the first $x_m$ above, so we have 
		\begin{equation}
			p_j = \min_{0 \leq m < n_{j-1}}x_m=\min_{0 \leq m < n_{j-1}} \lambda^{-1}_i(\frac{v_{n_{j-1}} - v_m}{n_{j-1} -m})
		\end{equation}
		\begin{equation}
			n_j = \arg \min_{0 \leq m < n_{j-1}} \lambda^{-1}_i(\frac{v_{n_{j-1}} - v_m}{n_{j-1} -m})
		\end{equation}
		
		Case 2 ($n_{j-1} = 0$): According to definition, $j-1\geq k$, and so $j\geq k+1 >k$, which follows $p_{j} = +\infty$, and $n_{j} = 0$.    
	\end{proof}

	\begin{thm}\label{thm-tp}
		Let $k\in \mathbb{N}$ be the first position such that $n_k = 0$, then the payment to agent $i$ should be 
		
		$$P^*_i(x_i,x_{-i})  = \sum_{1\leq i\leq k} (n_{i-1} - n_i)p_i $$%= n_0 p_1 +\sum_{2\leq i\leq k} n_{i-1}(p_i - p_{i-1})
	\end{thm}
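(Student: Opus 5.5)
The plan is to start from the payment formula in the specification of $\mathcal{PO}$-$\mathcal{ASL}$,
$$P^*_i(x_i,x_{-i}) = R^*_i(x_i,x_{-i})\,x_i + \int_{x_i}^{+\infty} R^*_i(t_i,x_{-i})\,dt_i,$$
and to evaluate the integral using the step-function description of $R^*_i(\cdot,x_{-i})$ given by the turning points. By Lemma~\ref{lm-qni}, $R^*_i(t_i,x_{-i})$ is monotone nonincreasing in $t_i$. Since it takes values in $\mathbb{N}$, it is a step function. Theorem~\ref{thm-turning} then identifies its breakpoints $p_1\le p_2\le\dots\le p_k$ and values $n_0>n_1>\dots>n_k=0$. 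By the convention $p_0=x_i$ and the defining property $R^*_i(p,x_{-i})=n_j$ for $p\in(p_j,p_{j+1})$, the integrand equals $n_{j-1}$ on $(p_{j-1},p_j)$ for $1\le j\le k$ and equals $0$ on $(p_k,+\infty)$. The finitely many breakpoints have measure zero, so ties there do not affect the integral.

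Splitting the integral along these intervals gives
$$\int_{x_i}^{+\infty} R^*_i(t_i,x_{-i})\,dt_i=\sum_{j=1}^{k} n_{j-1}(p_j-p_{j-1}).$$
Adding $R^*_i(x_i,x_{-i})x_i=n_0p_0$ produces
$$P^*_i=n_0p_0+\sum_{j=1}^k n_{j-1}p_j-\sum_{j=1}^k n_{j-1}p_{j-1}.$$
The second sum, after reindexing, is $\sum_{j=0}^{k-1}n_jp_j$. Its $j=0$ term cancels $n_0p_0$, so
$$P^*_i=\sum_{j=1}^k n_{j-1}p_j-\sum_{j=1}^{k-1}n_jp_j=\sum_{j=1}^k (n_{j-1}-n_j)p_j,$$
where we use $n_k=0$ to extend the last sum to $j=k$. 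This is exactly the claimed formula, an Abel summation of the step function.

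Two edge cases need care. If $n_0=0$, then $k=0$, the sum is empty, and $P^*_i=0$, which agrees with the integral of the zero function. We also need each $p_j$, for $j\le k$, to be finite, so that the term for the last interval is a genuine finite length. Finiteness follows from Theorem~\ref{thm-turning} whenever $n_{j-1}>0$, but it depends on $\lambda_i$ eventually exceeding every finite ratio $(v_{n_{j-1}}-v_m)/(n_{j-1}-m)$. Costs lie in $X_i=[0,\omega_i]$, while the payment integral is written up to $+\infty$. I would therefore either extend $\lambda_i$ beyond $\omega_i$ as a nondecreasing unbounded function, or note that $R^*_i$ drops to $0$ once $\lambda_i(t)$ exceeds the largest such ratio.

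I expect this finiteness and well-definedness point to be the main obstacle, rather than the telescoping algebra. The plan is to check that the sequence produced by Theorem~\ref{thm-turning} is strictly decreasing in $n$ and nondecreasing in $p$, so that the intervals $(p_{j-1},p_j)$ are well ordered and the step-function representation is valid. Both properties follow from Lemma~\ref{lm-restrict} and Lemma~\ref{lm-nocatch}.
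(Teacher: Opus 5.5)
Your proposal is correct and is essentially the paper's proof. Like the paper, you start from the payment formula $P^*_i(x)=R^*_i(x)x_i+\int_{x_i}^{+\infty}R^*_i(t_i,x_{-i})\,dt_i$, split the integral over the turning-point intervals $(p_{j-1},p_j)$ where the integrand equals $n_{j-1}$, and telescope using $p_0=x_i$ and $n_k=0$. Your remarks on $n_0=0$ (so $k=0$ and $P^*_i=0$) and on the finiteness of $p_1,\dots,p_k$ are extra care: the paper handles the first in the correctness corollary for the payment algorithm, and builds the second into the definition of turning points ($p_j\in\mathbb{R}^+$ for $j\le k$).
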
 
	\begin{proof}
		According the specification of $\mathcal{PO}$-$\mathcal{ASL}$:
		\begin{equation*}
			\begin{split}
				&P^*_i(x) = R^*_i(x)x_i + \int_{x_i}^{+\infty}R^*_i(t_i,x_{-i})dt_i\\
				&=n_0 p_0 + \int_{p_0}^{p_1} n_0 dt_i+ \int_{p_1}^{p_2} n_1 dt_i +...+ \int_{p_{k-1}}^{p_k} n_{k-1} dt_i\\
				&=n_0 p_0 + n_0 (p_1- p_0)+ n_1(p_2-p_1) +...+ n_{k-1}(p_k - p_{k-1})\\
				&= n_0 p_1 + n_1(p_2-p_1) +...+ n_{k-1}(p_k - p_{k-1})\\
				&= n_0 p_1 + n_1(p_2-p_1) +...+ n_{k-1}(p_k - p_{k-1}) - n_kp_k\\
				&~~~~~\text{(Since $n_k = 0$)}\\
				&=\sum_{1\leq i\leq k} (n_{i-1} - n_i)p_i\\
			\end{split}
		\end{equation*}
	\end{proof}	
	
	\begin{cor}
		Algorithm 1 correctly computes the payment of mechanism $\mathcal{PO}$-$\mathcal{ASL}$.
	\end{cor}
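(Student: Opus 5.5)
The statement to prove is the corollary that Algorithm~\ref{alg:payment} (referred to as ``Algorithm 1'') correctly computes the payments of $\mathcal{PO}$-$\mathcal{ASL}$. My plan is to check, for each agent $i$, that the loop body outputs exactly $P^*_i(x)=R^*_i(x)x_i+\int_{x_i}^{+\infty}R^*_i(t_i,x_{-i})dt_i$. I will assume the subroutines Allocation and Allocation-Fix correctly return a dominant social law and a dominant $(i,n)$-social law; this is justified later by Theorem~\ref{thm-17}. The argument then splits into the case $n_0=0$ and the case $n_0>0$.

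\emph{Case $n_0=0$.} The algorithm sets $n_0=R^*_i(x_i,x_{-i})$ from the returned $\eta^*$. If $n_0=0$, Lemma~\ref{lm-qni} (monotone nonincreasing allocation) gives $R^*_i(t_i,x_{-i})=0$ for all $t_i\ge x_i$. Hence both terms of the payment formula vanish, and the output $0$ is correct.

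\emph{Case $n_0>0$, step 1: the values $v_n$.} The for-loop computes $v_n=g(\eta^{(i,n,x_{-i})},(0,x_{-i}))$ for $0\le n\le n_0$, via Allocation-Fix. Lemma~\ref{lm-eta} says the dominant $(i,n)$-social law does not depend on agent $i$'s bid. So evaluating at bid $0$ yields exactly $v_n=v(n,0)$ as defined before Lemma~\ref{lm-vnt}. By Lemma~\ref{lm-restrict}, only the indices $n\le n_0$ matter for bids $t_i\ge x_i$, so no other $(i,n)$-laws are needed.

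\emph{Case $n_0>0$, step 2: the turning points and the sum.} The while-loop starts from $n_0$ and repeatedly applies the recursion of Theorem~\ref{thm-turning}. By induction on $j$, it therefore produces exactly the turning-point sequence $(p_1,n_1),\dots,(p_k,n_k)$. The loop terminates because $n_j<n_{j-1}$ strictly, so it stops after at most $n_0$ iterations with $n_{j-1}=0$, i.e.\ at $k=j-1$. The final assignment $\sum_{1\le l\le k}(n_{l-1}-n_l)p_l$ then equals $P^*_i(x)$ by Theorem~\ref{thm-tp}.

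\emph{Expected obstacle.} The delicate point is the correspondence between the loop and Theorem~\ref{thm-turning}. Ties in the argmin, and the possibility that the argmax returned by Allocation differs from the $\eta^{(i,n_0,x_{-i})}$ returned by Allocation-Fix, must both be handled. I would argue that both laws restrict $n_0$ of agent $i$'s actions and attain the same optimal $g$ at $x$, so they yield the same value $v_{n_0}$; the payment formula depends only on the $v_n$, hence tie-breaking does not affect the output. Finally, the algorithm needs $\lambda_i^{-1}$ to be well defined; I would take it as the generalized inverse of the nondecreasing $\lambda_i$ guaranteed by regularity.
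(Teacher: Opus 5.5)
Your proposal is correct and follows the paper's own proof. The paper uses the same split on $n_0$: when $n_0=0$, Lemma~\ref{lm-qni} forces $R^*_i(t_i,x_{-i})=0$ for all $t_i\ge x_i$, so the payment is $0$; when $n_0>0$, correctness is inherited from Theorems~\ref{thm-turning} and~\ref{thm-tp}. Your extra checks fill in details the paper leaves implicit rather than taking a different route: that the computed values are exactly $v_n=v(n,0)$ via Lemma~\ref{lm-eta}, that the while-loop terminates at $k=j-1$, and your remarks on tie-breaking and on defining $\lambda_i^{-1}$.
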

	\begin{proof}
		Algorithm 1 first of all computes the dominant social law $\eta^*$ under the bid profile $(x_i,x_{-i})$ and $n_0$, \emph{i.e.}, the number of agent $i$'s actions that is restricted by $\eta^*$. Depending on $n_0$, there are only two possibilities as follows:
		
		Case 1 ($n_0=0$): Since $n_0 = R^*_i(x_i,x_{-i}) =0$, and obviously $R^*_i(t_i,x_{-i})\geq 0$ for all $t_i\geq x_i$, lemma~\ref{lm-qni} follows $R^*_i(t_i,x_{-i}) =0$ for all $t_i\geq x_i$. So, according to the specification of mechanism $\mathcal{PO}$-$\mathcal{ASL}$, we have $P^*_i(x) = 0$, which coincides with the result obtained by algorithm 1.
		
		Case 2 ($n_0 > 0$): Algorithm 1 computes the turning points according to theorem~\ref{thm-turning}, and then computes the payment according to theorem~\ref{thm-tp}, and therefore the correctness of the payment obtained by algorithm 1 is guaranteed by the correctness of the above two theorems.
	\end{proof}

	\begin{lm}\label{lm-cdsl}
		Both {\sc Dominant Social Law} and {\sc Dominant $(i,n)$-Social Law} are FP$^{NP}$-complete.
	\end{lm}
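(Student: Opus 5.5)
We prove membership and hardness separately for each of the two problems, treating {\sc Dominant Social Law} first; the $(i,n)$ version then follows with minor changes.

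\textbf{Membership in FP$^{NP}$.} The plan is a binary search on the optimal objective value using an NP oracle. Take the decision version: given $S,\mathcal{F},f_1,\dots,f_k,x$ and a threshold $\theta$, is there a social law $\eta$ (for the $(i,n)$ version, additionally with $\sum_{q}|\eta_i(q)|=n$) such that $g(\eta,x)\ge\theta$? This is in NP. We guess $\eta$, which has size polynomial in the number of transitions. We build $S\dag\eta$ in polynomial time. We decide $S\dag\eta,q_s\vDash\varphi_j$ for every $(\varphi_j,c_j)\in\mathcal{F}$ with the $\mathcal{O}(t\cdot l)$ ATL model-checking algorithm of~\cite{Alur:JACM-02}. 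We then evaluate $g(\eta,x)$ by the formula in Algorithm~1, treating the virtual costs $\lambda_i(x_i)$ as given rational numbers in the input encoding.

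The value $g$ is $\sum_{j\in J}c_j-\sum_i\lambda_i(x_i)N_i$, where $J\subseteq\mathcal{F}$ and each $N_i$ is at most the number of actions. So the possible values form a set whose elements have polynomial bit length. Polynomially many oracle calls, binary searching over numerators with a fixed common denominator, therefore find the optimum $g^*$. We then recover an optimal $\eta$ by prefix search. Actions are fixed one at a time as restricted or unrestricted, and each time we ask the oracle whether a completion reaching $g^*$ still exists. The $(i,n)$ variant simply adds the cardinality constraint to the oracle query.

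\textbf{Hardness.} We would reduce from a known FP$^{NP}$-complete problem. The most direct choice is synthesizing an effective social law, which is FP$^{NP}$-complete~\cite{Wu:AAMAS-11}, or equivalently the ATL social-law feasibility problem of van der Hoek et al., which is NP-complete. An alternative is a canonical FP$^{NP}$-complete problem such as finding a lexicographically maximum satisfying assignment, or Max-Weight-SAT. Our plan uses Max-Weight-SAT. Given a CNF formula with clause weights $w_c$, we build a CCGS with one ``variable agent'' per variable, or one state per variable gadget. In that gadget, restricting action $\top$ or action $\bot$ encodes the truth value. The feature set $\mathcal{F}$ contains one ATL formula per clause, expressing that the coalition can reach a state witnessing a true literal, with value $w_c$. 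A large-valued formula forces exactly one of the two actions to be restricted in each gadget. We choose costs so that $\lambda_i(x_i)$ is uniform and tiny, for example uniform distributions with equal bids, so costs do not alter the weight ordering.

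An optimal social law then encodes a maximum-weight assignment, and that assignment can be read off in polynomial time. For the $(i,n)$ version we put all variable gadgets under a single agent $i$ and set $n$ equal to the number of variables. The forcing formula then becomes unnecessary, and the same argument goes through.

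\textbf{Main obstacle.} The delicate step is the gadget design. The ATL clause formulas must be satisfied at $q_s$ exactly when some literal of the clause is made true by the restrictions. Restrictions at one gadget must not accidentally affect other clauses, and spurious choices such as restricting both or neither action must be penalised. My first attempt is a star-shaped structure: from $q_s$ the environment, via an auxiliary agent, moves to a gadget state $q_v$ for each variable. Each clause is expressed as a disjunction over its literals of $\llangle\rrangle\bigcirc$-style reachability, phrased as $\neg\llangle Ag\rrangle\bigcirc\neg\ell$ at the corresponding gadget. The absence of a proposition $\ell$ after restriction then certifies that literal. If the weight encoding proves awkward, we would instead reduce directly from the effective social law problem. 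Each candidate objective formula would get a very large value, so that optimality implies effectiveness whenever an effective law exists.
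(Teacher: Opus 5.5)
Your proposal is correct, and it differs from the paper mainly in how hardness is proved.

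\textbf{Membership.} You follow the paper's idea: the threshold version of the problem is in NP. You also supply what the paper leaves implicit in passing from ``in NP'' to ``in FP$^{NP}$''. That is the binary search over a polynomially bit-bounded set of values, the prefix search that recovers an optimal $\eta$, and the assumption that the $\lambda_i(x_i)$ are given as rationals.

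\textbf{Hardness of {\sc Dominant Social Law}.} The paper also reduces from {\sc Max Weight Sat}, but with a leaner structure. It uses one agent and one branching state $q_s$ whose actions $a_1,\dots,a_m$ lead to $q_1,\dots,q_m$, sets $\lambda_1(x_1)=0$, and substitutes $\llangle 1\rrangle\bigcirc\phi_j$ for each variable $\phi_j$. So $\phi_j$ reads as ``$a_j$ is still available''. This handles arbitrary weighted propositional formulas with no auxiliary agent and no forcing formula. Its price is that $\eta_1(q_s)$ must be a proper subset of $\varepsilon_1(q_s)$, so the assignment making every $\phi_j$ false cannot be produced. A dummy action repairs this, and your one-gadget-per-variable design avoids the issue automatically.

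\textbf{Hardness of {\sc Dominant $(i,n)$-Social Law}.} The paper builds no new instance. It reduces {\sc Dominant Social Law} to it by solving the instances $n=0,\dots,N$ and keeping the best, which is a Turing reduction with $N+1$ oracle calls. Your single-agent construction with $n$ equal to the number of variables is instead a direct one-instance reduction from {\sc Max Weight Sat}. That is the metric-reduction form normally used for FP$^{NP}$-completeness.

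\textbf{Details to fix in the write-up.} Make the routing from $q_s$ explicit. For example, use clause features
$\bigvee_{\ell\in C}\llangle Ag\rrangle\bigcirc(g_{v(\ell)}\wedge\llangle\rrangle\bigcirc\hat\ell)$,
with $\hat\ell$ equal to $t_{v(\ell)}$ or $\neg t_{v(\ell)}$ according to the sign of $\ell$, and a proposition $g_v$ marking the gadget state $q_v$. Without such a marker, a negative literal is vacuously certified at every other gadget. Once this is in place, the forcing formula is dispensable. An unrestricted gadget certifies no literal, so with integer weights and total virtual cost below $1$, every optimal law still decodes to an optimal assignment.
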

	\begin{proof}
		1) For {\sc Dominant Social Law}, the associated decision problem {\sc Dominant Social Law(d)} is ``whether there is a social law $\eta\in \mathcal{SL}_S$ achieves the value of $g(\eta,x)$ at least $l\in \mathbb{R}^+$". {\sc Dominant Social Law(d)} is trivially in $NP$ since we can guess a social law and verify it in polynomial time. Therefore, {\sc Dominant Social Law} is in FP$^{NP}$; Moreover, we can show that  {\sc Dominant Social Law} is FP$^{NP}$-hard by reducing {\sc Max Weight Sat}, which is well known to be FP$^{NP}$-compete, to it: An instance of {\sc Max Weight Sat} is given by a set of propositions $\{\psi_1,...,\psi_n\}$ over a set of Boolean variables $\{\phi_1,...,\phi_m\}$ together with integer weights $w_1,...,w_n$ for each proposition respectively. The aim is to find the valuation $v^*:\{\phi_1,...,\phi_m\}\rightarrow \{0,1\}$ that maximizes the sum of weights of propositions satisfied by the valuation. 
		
		We can construct a structure shown as Figure~\ref{fig:lemma17}: There is only one agent in the system, that is, $Ag = \{1\}$; the state set is $Q = \{q_s,q_1,...,q_m\}$, the transition relation is depicted as the arrows in the figure, that is, $\delta(q_s, a_i) = q_i$;  The virtual value of agent 1's bid is $\lambda_1(x_i)=0$.
		
		The feature set is $\mathcal{F} = \{(\hat{\psi_1},w_1),...,(\hat{\psi_n},w_n)\}$, where each $\hat{\psi_i}$ is obtained from $\psi_i$ by substituting each proposition $\phi_j$ with the ATL formula $\llangle 1\rrangle\bigcirc\phi_j$.
		
		Now, actually we have constructed an instance of {\sc Dominant Social Law}, and the output of this instance is a social law $\eta^*\in \mathcal{SL}_S$ that maximize 
		\begin{equation}\label{eq-objlm10}
			g(\eta,x) = v_{\mathcal{F}}(S\dag\eta) - \lambda_1(x_1)\cdot\sum_{q\in Q} |\eta_1(q)|
		\end{equation}
		
		\begin{equation}
			=	\sum_{(\hat{\psi}_i,w_i)\in\mathcal{F},S\dag\eta,s_0\models\hat{\varphi}_i}w_i
		\end{equation}
		
		So, the following valuation function
		$$\forall i\in \{1,...,m\}: v^*(\phi_i)=
		\begin{cases}
			1 & a_i\in \eta^*\\
			0 & \text{else}
		\end{cases}$$
		maximizes 
		\begin{equation}
			\sum_{(\psi_i,w_i)\in\mathcal{F},v\models \psi_i}w_i
		\end{equation}
		
		Therefore, {\sc Dominant Social Law} is FP$^{NP}$-complete.	%~~\quad\qquad\quad$\Box$	
		\begin{figure}
			\centering
			\includegraphics[width=0.4 \linewidth]{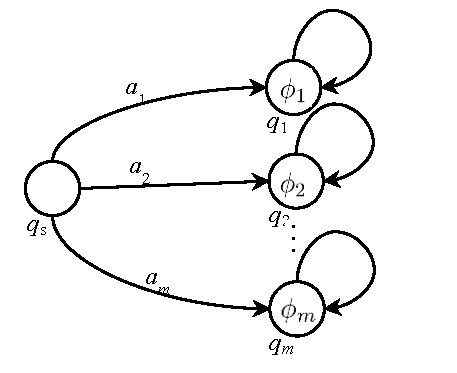}
			\caption{A structure for the proof of lemma~\ref{lm-cdsl}}
			\label{fig:lemma17}
		\end{figure}		
		
		2) For {\sc Dominant $(i,n)$-Social Law}, the associated decision problem is ``whether there is an $(i,n)$-social law $\eta\in \mathcal{SL}^{(i,n)}_S$ achieves the value of $g(\eta,x)$ at least $l\in \mathbb{R}^+$", and it is trivially in $NP$ since we can guess an $(i,n)$-social law and verify it in polynomial time. Therefore, {\sc Dominant $(i,n)$-Social Law} is in FP$^{NP}$; Moreover, we can show that it is FP$^{NP}$-hard by reducing {\sc Dominant Social Law}, which is FP$^{NP}$-complete according to lemma~\ref{lm-cdsl}, to this problem: in order to find out the dominant social law, we can first of all try to find out the dominant $(i,n)$-social laws for all $n\in [0,N]$, where $N$ is the number of agent $i$'s actions, and then select the dominant $(i,n)$-social law with highest objective function value. 
	\end{proof}
	
	\begin{thm}
		Mechanism $\mathcal{PO}$-$\mathcal{ASL}$ is FP$^{NP}$-complete.
	\end{thm}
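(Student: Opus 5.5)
To establish that computing $\mathcal{PO}$-$\mathcal{ASL}$ is FP$^{NP}$-complete, I will show membership in FP$^{NP}$ and FP$^{NP}$-hardness separately. Computing the mechanism means producing the pair $(\hat{R}^*(x), P^*(x))$ from $S$, $\mathcal{F}$, $f_1,\dots,f_k$ and $x$. Both directions lean on Lemma~\ref{lm-cdsl} together with the payment-to-allocation reduction (Theorems~\ref{thm-turning} and~\ref{thm-tp}, and the correctness corollary for Algorithm~\ref{alg:payment}).

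\emph{Membership.} The allocation $\hat{R}^*(x)$ is an instance of {\sc Dominant Social Law}, which lies in FP$^{NP}$ by Lemma~\ref{lm-cdsl}. For the payments I follow Algorithm~\ref{alg:payment}. For each agent $i$ it calls Allocation once. It then calls Allocation-Fix once for each $n\le n_0$. Since $n_0$ is at most the number of agent $i$'s actions, this is polynomially many calls, each an instance of {\sc Dominant $(i,n)$-Social Law} and hence in FP$^{NP}$. The values $v_n=g(\eta',(0,x_{-i}))$ need only ATL model checking, which is polynomial. The turning-point loop runs at most $n_0$ iterations, because $n_j$ strictly decreases. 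Each iteration takes a minimum over at most $n_0$ candidates, each requiring one evaluation of $\lambda_i^{-1}$; I assume, as the paper implicitly does, that $\lambda_i$ and $\lambda_i^{-1}$ are computable in polynomial time. The payment itself is then the closed-form sum in Theorem~\ref{thm-tp}. A polynomial-time machine making polynomially many calls to FP$^{NP}$ functions stays in FP$^{NP}$, since each call can be simulated by NP-oracle queries.

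\emph{Hardness.} Computing the mechanism includes outputting $\hat{R}^*(x)$, which is by definition a solution of {\sc Dominant Social Law}. That problem is FP$^{NP}$-hard by Lemma~\ref{lm-cdsl}. The reduction is therefore the identity on instances, followed by discarding the payment component.

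\emph{Main obstacle.} I expect the delicate step to be in membership, namely the handling of ties and of real arithmetic. When several social laws attain the maximum, the oracle-based procedure must select the same $\eta^*$ that the mechanism specifies. I plan to fix a canonical tie-break, such as lexicographically first among maximizers, which is obtainable by the standard prefix-extension search with NP queries. The second concern is that $\lambda_i$, $F_i/f_i$ and $\lambda_i^{-1}$ must be treated as polynomial-time computable (e.g.\ rational-valued on rational inputs); I would state this as an explicit standing assumption.
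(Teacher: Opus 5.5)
Your proposal is correct and follows the paper's route. The paper's own proof is a one-line count: the mechanism needs $k+1$ instances of {\sc Dominant Social Law} and at most $kN$ instances of {\sc Dominant $(i,n)$-Social Law}, and completeness is then inherited from the FP$^{NP}$-completeness lemma for these two problems. Your version is more careful than the paper's: it states the hardness direction separately via the identity reduction, and it makes the tie-breaking and the polynomial-time computability of $\lambda_i^{-1}$ explicit assumptions, where the paper leaves them implicit.
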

	\begin{proof}
		The computing of mechanism $\mathcal{PO}$-$\mathcal{ASL}$ mainly involves solving $k+1$ instances of {\sc Dominant Social Law} and at most $kN$ instances of {\sc Dominant $(i,n)$-Social Law}, where $N$ is the upperbound of the action number of the agents, and so it is also FP$^{NP}$-complete, since according to lemma~\ref{lm-cdsl} both deciding the dominant social law and deciding the $(i,n)$-social law are FP$^{NP}$-complete.
	\end{proof}

	\begin{lm}\label{lm-x16}~		
		\begin{itemize}
			\item[1)] $\forall q, i, a$: $\ell(y^q_{i:a})=1$ iff $a\in\eta_\ell(i,q)$;
			\item[2)] $\forall q, A, \vec{m}_A$: $\ell(y^q_{A:\vec{m}_A})=1$ iff $\exists i\in A: \ell(y^q_{i:\vec{m}_A[i]}) = 1$;
			\item[3)] $\forall q, \llangle A\rrangle, \vec{m}_A,\vec{m}_{\bar{A}}$: $\ell(s^{q,\varphi}_{A:\vec{m}_A, \vec{m}_{\bar{A}}}) = 1$ iff $\ell(y^q_{\bar{A}:\vec{m}_{\bar{A}}})=1$ or $\ell(x^{\delta(q, (\vec{m}_A,\vec{m}_{\bar{A}}))}_\varphi) = 1$;
			\item[4)] $\ell(z^{q,\varphi}_{A:\vec{m}_A})=1$ iff $\forall \vec{m}_{\bar{A}}: $$\ell(s^{q,\varphi}_{A:\vec{m}_A, \vec{m}_{\bar{A}}}) = 1$;
			\item[5)] $\forall q, \varphi, A, \vec{m}_A$: $\ell(e^{q,\varphi}_{A:\vec{m}_A}) = 1$ iff $\ell(y^q_{A:\vec{m}_A})=0$ and $\ell(z^{q,\varphi}_{A:\vec{m}_A})=1$;
			\item[6)] $\forall q, \llangle A\rrangle\psi\mathcal{U}\chi$: $\ell(r^q_{\llangle A\rrangle \psi\mathcal{U}\chi}) = 1$ iff $\ell(x^q_{\psi})=1$ and $\exists \vec{m}_A: (\ell(y^q_{A:\vec{m}_A})=0$ and $\ell(z^{q,\llangle A\rrangle\psi\mathcal{U}\chi}_{A:\vec{m}_A})=1)$.
		\end{itemize}
	\end{lm}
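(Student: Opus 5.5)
Each item is a local statement that a particular auxiliary 0/1 variable is pinned down by the constraints of {\sc ILP-Dom-SL} as a Boolean function of its inputs. The plan is therefore to check each item separately against the constraints that mention that variable. In every case I will use one standard fact about 0/1 variables: a family of constraints $w\geq u_j$ for all $j$, together with $w\leq\sum_j u_j$, forces $w=\max_j u_j$ (logical OR). Dually, $w\leq u_j$ for all $j$, together with $w\geq 1-\sum_j(1-u_j)$, forces $w=\min_j u_j$ (logical AND). The two-input case with a negated input is handled the same way.

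Item 1) holds by the definition of $\eta_\ell$: we set $\eta_\ell(i,q)=\{a\mid \ell(y^q_{i:a})=1\}$, so nothing needs to be proved. Item 2) follows from constraints (\ref{ilp-c5}) and (\ref{ilp-c6}), which are exactly the OR pattern over $i\in A$ for the inputs $y^q_{i:\vec m_A[i]}$. Item 3) follows from (\ref{ilp-c8})--(\ref{ilp-c10}): this is a two-input OR of $y^q_{\bar A:\vec m_{\bar A}}$ and $x^{\delta(q,(\vec m_A,\vec m_{\bar A}))}_\varphi$. For the upper bound, if both inputs are $0$ then (\ref{ilp-c10}) forces $s=0$; otherwise (\ref{ilp-c8}) or (\ref{ilp-c9}) forces $s=1$. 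Item 4) follows from (\ref{ilp-c11}) and (\ref{ilp-c12}), which give the AND pattern over $\vec m_{\bar A}\in D_{\bar A}(q)$. If every $s=1$, the sum in (\ref{ilp-c12}) vanishes and $z\geq 1$; if some $s=0$, (\ref{ilp-c11}) gives $z\leq 0$. Item 5) follows from (\ref{ilp-c22})--(\ref{ilp-c24}), which say $e$ equals $z\wedge\neg y$. If $z=1$ and $y=0$, then (\ref{ilp-c22}) gives $e\geq 1$. If $z=0$, (\ref{ilp-c23}) gives $e=0$; if $y=1$, (\ref{ilp-c24}) gives $e=0$.

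Item 6) is the only item that combines several patterns, so it is where the most care is needed. Constraints (\ref{ilp-c28}) and (\ref{ilp-c29}) give $r\leq x^q_\psi$ and $r\leq\sum_{\vec m_A}e^{q,\llangle A\rrangle\psi\mathcal U\chi}_{A:\vec m_A}$. Constraint (\ref{ilp-c30}) gives, for each $\vec m_A$, $r\geq x^q_\psi+e_{\vec m_A}-1$. Together these make $r=x^q_\psi\wedge\bigvee_{\vec m_A}e_{\vec m_A}$. For the forward direction, $r=1$ forces $x^q_\psi=1$ and some $e_{\vec m_A}=1$. For the backward direction, $x^q_\psi=1$ together with a witness $\vec m_A$ having $e_{\vec m_A}=1$ gives $r\geq 1$ through (\ref{ilp-c30}) instantiated at that witness. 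Substituting item 5) for $e$ then yields the stated condition: $y^q_{A:\vec m_A}=0$ and $z^{q,\llangle A\rrangle\psi\mathcal U\chi}_{A:\vec m_A}=1$.

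One point has to be checked for item 6) to be well-posed. The variables $z$ and $e$ with superscript formula $\llangle A\rrangle\psi\mathcal U\chi$ must actually be created and constrained for that formula. The index ranges in (\ref{ilp-c7})--(\ref{ilp-c24}) run over all $\varphi\in cl(\mathcal F)$ and all operators $\llangle A\rrangle$ occurring in $cl(\mathcal F)$, and $\llangle A\rrangle\psi\mathcal U\chi$ is itself in $cl(\mathcal F)$, so this holds. The same remark covers the $\bar A$-indexed $y$ variables used in item 3): they are defined by (\ref{ilp-c3})--(\ref{ilp-c6}) for $\bar A$ as well.
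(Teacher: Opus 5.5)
Your proposal is correct and follows the paper's own proof essentially item by item. Each item is read off from the same constraints the paper uses: (\ref{ilp-c5})--(\ref{ilp-c6}) for item 2, (\ref{ilp-c8})--(\ref{ilp-c10}) for item 3, (\ref{ilp-c11})--(\ref{ilp-c12}) for item 4, (\ref{ilp-c22})--(\ref{ilp-c24}) for item 5, and (\ref{ilp-c28})--(\ref{ilp-c30}) combined with item 5 for item 6; your explicit OR/AND patterns and your converse for item 6 (instantiating (\ref{ilp-c30}) at a witness $\vec m_A$) are cleaner than the paper's rather garbled write-up of that step.
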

	\begin{proof}
		\begin{itemize}
			
			\item [1)] It directly follows from the definition of $\eta_\ell(i,q)$;
			
			\vspace*{2mm}
			\item [2)] If $\ell(y^q_{A:\vec{m}_A})=1$, then by constraint~(30), $\ell(y^q_{A:\vec{m}_A}) \geq 1$, so $\exists i\in A: \ell(y^q_{i:\vec{m}_A[i]}) = 1$; 
			
			if $\exists i\in A: \ell(y^q_{i:\vec{m}_A[i]}) = 1$, then by constraint~(29), $\ell(y^q_{A:\vec{m}_A}) \geq 1$, so $\ell(y^q_{A:\vec{m}_A})=1$;
			
			\vspace*{2mm}
			\item [3)] If $\ell(s^{q,\varphi}_{A:\vec{m}_A, \vec{m}_{\bar{A}}}) = 1$, then by constraint~(34), we have $$\ell(y^q_{\bar{A}:\vec{m}_{\bar{A}}})+ \ell(x^{\delta(q, (\vec{m}_A,\vec{m}_{\bar{A}}))}_\varphi) \geq 1, $$ therefore $\ell(y^q_{\bar{A}:\vec{m}_{\bar{A}}})=1$ or  $\ell(x^{\delta(q, (\vec{m}_A,\vec{m}_{\bar{A}}))}_\varphi) = 1$; 
			
			If $\ell(y^q_{\bar{A}:\vec{m}_{\bar{A}}})=1$ or $\ell(x^{\delta(q, (\vec{m}_A,\vec{m}_{\bar{A}}))}_\varphi) = 1$, then by constraints (32)(33), we have $\ell(s^{q,\varphi}_{A:\vec{m}_A, \vec{m}_{\bar{A}}}) \geq 1$, so $\ell(s^{q,\varphi}_{A:\vec{m}_A, \vec{m}_{\bar{A}}}) = 1$;
			
			\vspace*{2mm}
			\item [4)] If $\ell(z^{q,\varphi}_{A:\vec{m}_A})=1$, then by constraint~(36), we have $$\forall \vec{m}_{\bar{A}}\in D_{\bar{A}}(q): \ell(s^{q,\varphi}_{A:\vec{m}_A, \vec{m}_{\bar{A}}}) \geq 1,$$ therefore $\forall \vec{m}_{\bar{A}}\in D_{\bar{A}}(q): $$\ell(s^{q,\varphi}_{A:\vec{m}_A, \vec{m}_{\bar{A}}}) = 1$; 
			
			If $\forall \vec{m}_{\bar{A}}\in D_{\bar{A}}(q): $$\ell(s^{q,\varphi}_{A:\vec{m}_A, \vec{m}_{\bar{A}}}) = 1$, then by constraint~(37), $\ell(z^{q,\varphi}_{A:\vec{m}_A})\geq 1$, so $\ell(z^{q,\varphi}_{A:\vec{m}_A}) = 1$;
			
			\vspace*{2mm}
			\item [5)] If $\ell(e^{q,\varphi}_{A:\vec{m}_A}) = 1$, then by constraint~(46), $\ell(z^{q,\varphi}_{A:\vec{m}_A})\geq 1$, so we have $\ell(z^{q,\varphi}_{A:\vec{m}_A}) = 1$; Moreover, by constraint ~(47), $\ell(y^q_{A:\vec{m}_A})\leq 0$, so we have  $\ell(y^q_{A:\vec{m}_A}) = 0$; 
			
			If $\ell(y^q_{A:\vec{m}_A})=0$ and $\ell(z^{q,\varphi}_{A:\vec{m}_A})=1$, then by constraint~(45), we have $\ell(e^{q,\varphi}_{A:\vec{m}_A}) \geq 1$, so $\ell(e^{q,\varphi}_{A:\vec{m}_A}) = 1$;
			
			\vspace*{2mm}
			\item [6)] If $\ell(r^q_{\llangle A\rrangle \psi\mathcal{U}\chi}) = 1$, then by constraint~(51), we have $\ell(x^q_{\psi})\geq 1$, therefore $\ell(x^q_{\psi}) = 1$ and by constraint~(52), we have $\sum_{\vec{m}_A\in D_A(q)} \ell(e^{q,\llangle A\rrangle\psi\mathcal{U}\chi}_{A:\vec{m}_A}) \geq 1$. It implies $\exists \vec{m}_A\in D_A(q): \ell(e^{q,\llangle A\rrangle\psi\mathcal{U}\chi}_{A:\vec{m}_A}) = 1$, so by lemma~\ref{lm-x16}.5, we can obtain $\ell(x^q_{\psi})=1$ and $\exists \vec{m}_A\in D_A(q): \ell(e^{q,\llangle A\rrangle\psi\mathcal{U}\chi}_{A:\vec{m}_A}) = 1$, and further by constraint~(53), we can deduce $\ell(r^q_{\llangle A\rrangle \psi\mathcal{U}\chi}) \geq 1$, so $\ell(r^q_{\llangle A\rrangle \psi\mathcal{U}\chi}) = 1$.
		\end{itemize}
	\end{proof}

	\begin{lm}\label{lm-c} For any $\varphi\in cl(\mathcal{F})$, $\ell(x^q_\varphi)=1$ iff $S\dag\eta_{\ell},q\vDash\varphi$.
	\end{lm}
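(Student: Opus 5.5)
The plan is a structural induction on $\varphi\in cl(\mathcal{F})$, following the tree shape of the closure described after (\ref{eq-sub}). Throughout, $q$ ranges over all of $Q$ and $\ell$ is a fixed feasible assignment. Lemma~\ref{lm-x16} does most of the bookkeeping: it turns every auxiliary variable into a statement about $S\dag\eta_\ell$. By items 1)--2), $\ell(y^q_{A:\vec m_A})=0$ iff $\vec m_A\in D'_A(q)$, the joint-action space of $A$ in $S\dag\eta_\ell$. Items 3)--5) then say that $\ell(e^{q,\varphi}_{A:\vec m_A})=1$ iff $\vec m_A$ is still available and every \emph{available} completion $\vec m_{\bar A}$ leads to a state $q'$ with $\ell(x^{q'}_\varphi)=1$.

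One point needs a separate check. A completion with $y^q_{\bar A:\vec m_{\bar A}}=1$ makes $s$ true regardless of the successor, so it is correctly ignored. I will also verify that completions with a forbidden action of a member of $A$ cannot arise, since $\vec m_A$ is already required to be available. Combined with the induction hypothesis for $\varphi$, this gives:
\[
\ell(e^{q,\varphi}_{A:\vec m_A})=1 \iff \vec m_A\in D'_A(q)\ \text{and}\ out_{S\dag\eta_\ell}(q,\vec m_A)\subseteq \{q'\mid S\dag\eta_\ell,q'\vDash\varphi\}.
\]
I will use this ``one-step controllability'' fact as the basic tool.

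The base and Boolean cases are immediate. Propositions follow from (\ref{ilp-c15})--(\ref{ilp-c16}), $\neg$ from (\ref{ilp-c17}), and $\vee$ from (\ref{ilp-c18})--(\ref{ilp-c20}). The $\llangle A\rrangle\bigcirc\varphi$ case follows from (\ref{ilp-c25})--(\ref{ilp-c26}) and the fact above, since $S\dag\eta_\ell,q\vDash\llangle A\rrangle\bigcirc\varphi$ iff some available $\vec m_A$ forces $\varphi$ in all outcomes. Constraint (\ref{ilp-c14}) ensures every agent keeps an action, so the outcome sets are nonempty and the ATL semantics is the standard one.

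The real work is in $\llangle A\rrangle\psi\mathcal U\chi$ and $\llangle A\rrangle\Box\varphi$. Here the $e$-variables refer back to the variable being defined, so (\ref{ilp-c27})--(\ref{ilp-c36}) only impose the one-step unfolding equations
\[
x^q_{\llangle A\rrangle\psi\mathcal U\chi}=x^q_\chi\vee\big(x^q_\psi\wedge\exists\vec m_A\,e^{q,\llangle A\rrangle\psi\mathcal U\chi}_{A:\vec m_A}\big),\qquad x^q_{\llangle A\rrangle\Box\varphi}=x^q_\varphi\wedge\exists\vec m_A\,e^{q,\llangle A\rrangle\Box\varphi}_{A:\vec m_A}.
\]
These are obtained from Lemma~\ref{lm-x16}.6, (\ref{ilp-c31})--(\ref{ilp-c33}) and (\ref{ilp-c34})--(\ref{ilp-c36}). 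The true satisfaction sets are respectively the least and the greatest fixpoints of these monotone operators on $2^Q$ (Alur et al.).

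So I will show two directions for each operator. For the satisfaction set $\Rightarrow$ $\ell=1$ direction of $\mathcal U$, I use an inner induction on the rank of $q$ in the least-fixpoint iteration. Rank $0$ means $\chi$ holds; at rank $k+1$, the witnessing $\vec m_A$ forces rank $\le k$, so the corresponding $e$ is $1$. For $\Box$, the direction $\ell=1\Rightarrow$ satisfied is coinductive. The set $\{q\mid\ell(x^q_{\llangle A\rrangle\Box\varphi})=1\}$ is a post-fixpoint, and the witnessing $e$-actions assemble into a memoryless strategy keeping the play inside it.

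The main obstacle is the two remaining directions, where the unfolding equations alone do not pin down the intended fixpoint. For $\mathcal U$, a cycle of $\psi$-states with no $\chi$ could, for some feasible $\ell$, be labelled $1$ consistently. Dually, for $\Box$ a set of states could be labelled $0$ even though the formula holds there. My first attempt will be to exploit that the lemma is applied to solutions $\ell$ of the ILP (as in Theorem~\ref{thm-17}). Given a solution with a spurious value, I would try to replace all $x$-variables by their true semantic values under the same $y$-assignment. I would then check feasibility and compare the objective.

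If the objective comparison fails, because negations make the sign of a spurious value ambiguous, the fallback is to strengthen the statement to require well-founded ranks. This would mean adding integer rank variables (e.g.\ $d^q\in\{0,\dots,|Q|\}$ with $d^{q'}<d^q$ along the witnessing $e$-transitions whenever $r^q=1$) for $\mathcal U$, and the dual for $\Box$ via $\neg\llangle A\rrangle\Box\varphi$. With such ranks, both remaining directions reduce to the rank-induction above. I expect the $\mathcal U$ soundness direction to be exactly where the proof either succeeds via this ranking argument or forces such a modification of the encoding.
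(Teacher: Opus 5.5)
Your base, Boolean and $\llangle A\rrangle\bigcirc$ cases are exactly the paper's. The two fixpoint directions you actually carry out are sound: the rank induction for ``satisfied $\Rightarrow \ell=1$'' in the $\mathcal{U}$ case, and the post-fixpoint/memoryless-strategy argument for ``$\ell=1\Rightarrow$ satisfied'' in the $\Box$ case.

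The remaining two directions cannot be closed, and your first plan for them fails, because the lemma is false even for optimal $\ell$. Take one agent, one state $q=q_s$ with $\pi(q)=\{p_1\}$, one action $a$ with $\delta(q,a)=q$, and $\mathcal{F}=\{(\theta,1)\}$ with $\theta=\llangle 1\rrangle p_1\,\mathcal{U}\,p_2$. Constraint \eqref{ilp-c14} forces $y^q_{1:a}=0$, and \eqref{ilp-c6} then forces every coalition $y$-variable to $0$. Constraints \eqref{ilp-c8}--\eqref{ilp-c33} then collapse to $s=z=e=r^q_\theta=x^q_\theta$, so both values of $x^q_\theta$ are feasible. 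The objective equals $x^q_\theta$, so every optimal $\ell$ sets $\ell(x^q_\theta)=1$, yet the unique computation $qqq\cdots$ never reaches $p_2$.

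Dually, with $\mathcal{F}=\{(\neg\llangle 1\rrangle\Box p_1,1)\}$, every optimum sets $x^q_{\llangle 1\rrangle\Box p_1}=0$ although $\llangle 1\rrangle\Box p_1$ holds at $q$. The constraints only make the fixpoint variables \emph{some} fixpoint of the one-step operator. Maximisation then selects the most profitable fixpoint, not the least one (for $\mathcal{U}$) or the greatest one (for $\Box$). Replacing spurious values by semantic ones can therefore strictly lower the objective, so no optimality argument rescues the statement, and Theorem~\ref{thm-17} fails along with it.

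Only your fallback gives a provable result, and that is a lemma about a different ILP. It would use well-founded ranking variables for $\mathcal{U}$ and for the complement of $\Box$, or alternatively $|Q|$ layers of approximant variables that unroll the fixpoint iteration.

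The paper's own proof has exactly the hole you identified, without acknowledging it. In cases 5) and 6) it uses Lemma~\ref{lm-x16}.6 to rewrite $\ell(x^q_{\llangle A\rrangle\psi\mathcal{U}\chi})=1$ as ``$\ell(x^q_\chi)=1$, or $\ell(x^q_\psi)=1$ and $\ell$ makes $\llangle A\rrangle\bigcirc\llangle A\rrangle\psi\mathcal{U}\chi$ true in the sense of case 4)''. It then appeals to the inductive hypothesis. But $\llangle A\rrangle\bigcirc\llangle A\rrangle\psi\mathcal{U}\chi$ is not structurally smaller than $\llangle A\rrangle\psi\mathcal{U}\chi$. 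Case 4) applied to it needs the claim for $\llangle A\rrangle\psi\mathcal{U}\chi$ itself at the successor states, so the step is circular. It tacitly assumes the unfolding equation has a unique solution, and the same applies to $\Box$.

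Your diagnosis is therefore more accurate than the paper's argument. Neither, however, proves the lemma as stated, and no proof can without changing the encoding.
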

	\begin{proof}
		For any $\varphi\in cl(\mathcal{F})$, we will try to prove this result by structural induction:
		
		\vspace*{2mm}		
		1) Case $\varphi\in cl(\mathcal{F})$, $p\in \Pi$: by constraints (38)(39), we can obtain $\ell(x^q_p)=1$ iff $p\in (\Pi\cap cl(\mathcal{F}))\cap \pi(q)$ iff $p\in \pi(q)$ iff $S\dag\eta_{\ell},q\vDash\varphi$;
		
		\vspace*{2mm}
		2) Case $\varphi = \neg \psi$: by constraint (40) we can obtain $\ell(x^q_{\neg \psi})=1$ iff $\ell(x^q_{\psi})=0$ (by inductive hypothesis) iff $S\dag\eta_{\ell},q\nvDash\varphi$ (by the ATL semantics) iff $S\dag\eta_{\ell},q\vDash \neg \varphi$;
		
		\vspace*{2mm}
		3) Case $\varphi = \psi\vee \chi$: by constraints (41)~(43), we can obtain $\ell(x^q_{\psi\vee \chi})=1$, iff $\ell(x^q_{\psi})=1$ or $\ell(x^q_{\chi})=1$ (by inductive hypothesis) iff 
		$S\dag\eta_{\ell},q\vDash \psi$ or $S\dag\eta_{\ell},q\vDash \chi$ (by the ATL semantics) iff $S\dag\eta_{\ell},q\vDash \psi\vee \chi$;
		
		\vspace*{2mm}
		4) Case $\varphi = \llangle A\rrangle\bigcirc \psi$: If $\ell(x^q_{\llangle A\rrangle \bigcirc\varphi})=1$, then by constraint (49) we can obtain $\sum_{\vec{m}_A\in D_A(q)} \ell(e^{q,\varphi}_{A:\vec{m}_A})\geq 1$, therefore $\exists \vec{m}_A\in D_A(q): \ell(e^{q,\varphi}_{A:\vec{m}_A})= 1$, by lemma~\ref{lm-x16}.5, we have $\exists \vec{m}_A\in D_A(q): \ell(e^{q,\varphi}_{A:\vec{m}_A})= 1$, and further by constraint (48), we can obtain $\ell(x^q_{\llangle A\rrangle \bigcirc\varphi})\geq 1$, and therefore $\ell(x^q_{\llangle A\rrangle \bigcirc\varphi})= 1$. So, now we have $\ell(x^q_{\llangle A\rrangle \bigcirc\varphi})=1$ 
		
		\vspace*{2mm}		
		\noindent iff $\exists \vec{m}_A\in D_A(q): (\ell(y^q_{A:\vec{m}_A})=0$ and $\ell(z^{q,\varphi}_{A:\vec{m}_A})=1)$;
		
		\vspace*{2mm}	
		\noindent iff $\exists \vec{m}_A\in D_A(q): ($$\forall i\in A: \vec{m}_A[i]\nin \eta_{\ell}(i,q)$ and $\forall \vec{m}_{\bar{A}}\in D_{\bar{A}}(q): $$\ell(s^{q,\varphi}_{A:\vec{m}_A, \vec{m}_{\bar{A}}}) = 1$) (by lemma~\ref{lm-x16}.1, lemma~\ref{lm-x16}.2 and lemma~\ref{lm-x16}.4) 
		
		\vspace*{2mm}	
		\noindent iff $\exists \vec{m}_A\in D_A(q): ($$\forall i\in A: \vec{m}_A[i]\nin \eta_{\ell}(i,q)$ and $\forall \vec{m}_{\bar{A}}\in D_{\bar{A}}(q): $$\ell(y^q_{\bar{A}:\vec{m}_{\bar{A}}})=1$ or $\ell(x^{\delta(q, (\vec{m}_A,\vec{m}_{\bar{A}}))}_\varphi) = 1$) (by lemma~\ref{lm-x16}.3);
		
		\vspace*{2mm}	
		\noindent iff $\exists \vec{m}_A\in D_A(q): ($$\forall i\in A: \vec{m}_A[i]\nin \eta_{\ell}(i,q)$ and $\forall \vec{m}_{\bar{A}}\in D_{\bar{A}}(q): $$\exists i\in \bar{A}: \ell(y^q_{i:\vec{m}_{\bar{A}}[i]}) = 1$ or $\ell(x^{\delta(q, (\vec{m}_A,\vec{m}_{\bar{A}}))}_\varphi) = 1$) (by lemma~\ref{lm-x16}.1, lemma~\ref{lm-x16}.2) ;
		
		\vspace*{2mm}	
		\noindent iff $\exists \vec{m}_A\in D_A(q): ($ $\forall i\in A: \vec{m}_A[i]\nin \eta_{\ell}(i,q)$ and $\forall \vec{m}_{\bar{A}}\in D_{\bar{A}}(q): $$\exists i\in \bar{A}: \ell(y^q_{i:\vec{m}_{\bar{A}}[i]}) = 1$ or $S\dag\eta_{\ell}, {\delta(q, (\vec{m}_A,\vec{m}_{\bar{A}}))} \vDash\varphi$) (by inductive hypothesis) 
		
		\vspace*{2mm}	
		\noindent iff $S\dag\eta_{\ell},q\vDash \llangle A\rrangle\bigcirc \psi$ (by the ATL semantics) ;
		
		\vspace*{2mm}
		5) Case $\varphi = \llangle A\rrangle\psi\mathcal{U}\chi$: If $\ell(x^q_{\llangle A\rrangle\psi\mathcal{U}\chi})=1$, then by constraint (56) we can obtain $\ell(x^q_{\chi})+ \ell(r^q_{\llangle A\rrangle \psi\mathcal{U}\chi}) \geq 1$, therefore $\ell(x^q_{\chi})=1$ or $\ell(r^q_{\llangle A\rrangle \psi\mathcal{U}\chi}) = 1$; 
		
		If $\ell(x^q_{\chi})=1$ or $\ell(r^q_{\llangle A\rrangle \psi\mathcal{U}\chi}) = 1$, then by constraints (54)(55), we can obtain $\ell(x^q_{\llangle A\rrangle\psi\mathcal{U}\chi})=1$;
		
		The above facts means $\ell(x^q_{\llangle A\rrangle\psi\mathcal{U}\chi})=1$ 
		
		\vspace*{2mm}
		\noindent iff $\ell(x^q_{\chi})=1$ or $\ell(r^q_{\llangle A\rrangle \psi\mathcal{U}\chi}) = 1$, 
		
		\vspace*{2mm}
		\noindent iff $\ell(x^q_{\chi})=1$ or ($\ell(x^q_{\chi})=1$ and $\exists \vec{m}_A\in D_A(q): (\ell(y^q_{A:\vec{m}_A})=0$ and $\ell(z^{q, \llangle A\rrangle\psi\mathcal{U}\chi}_{A:\vec{m}_A})=1)$)(by lemma~\ref{lm-x16}.6) , 
		
		\vspace*{2mm}
		\noindent iff $\ell(x^q_{\chi})=1$ or ($\ell(x^q_{\psi})=1$ and $\ell(x^q_{ \llangle A\rrangle \bigcirc \llangle A\rrangle\psi\mathcal{U}\chi})=1$) (by the proof of the above item 4) , 
		
		\vspace*{2mm}
		\noindent iff $S\dag\eta_{\ell},q\vDash \chi$ or ($S\dag\eta_{\ell},q\vDash \psi$ and $S\dag\eta_{\ell},q\vDash \llangle A\rrangle \bigcirc \llangle A\rrangle\psi\mathcal{U}\chi$) (by inductive hypothesis) , 
		
		\vspace*{2mm}
		\noindent iff $S\dag\eta_{\ell},q\vDash \chi\vee (\psi\wedge\llangle A\rrangle\bigcirc\llangle A\rrangle\psi\mathcal{U}\chi)$ (by the ATL semantics) ,  
		
		\vspace*{2mm}
		\noindent iff $S\dag\eta_{\ell},q\vDash \llangle A\rrangle\psi\mathcal{U}\chi$ (since $\llangle A\rrangle \psi\mathcal{U}\chi\leftrightarrow (\chi\vee (\psi\wedge\llangle A\rrangle\bigcirc\llangle A\rrangle\psi\mathcal{U}\chi))$ is valid);
		
		\vspace*{2mm}
		6) Case $\varphi = \llangle A\rrangle\Box\psi$: If $\ell(x^q_{\llangle A\rrangle\Box\psi})=1$, then by constraint (57) we can obtain $\ell(x^q_{\psi})\geq 1$, so we have $\ell(x^q_{\psi})= 1$, moreover by constraint (58) we have $\sum_{\vec{m}_A\in D_A(q)} \ell(e^{q,\llangle A\rrangle\Box\psi}_{A:\vec{m}_A}) \geq 1$, therefore $\exists \vec{m}_A\in D_A(q):$$\ell(e^{q,\llangle A\rrangle\Box\psi}_{A:\vec{m}_A}) = 1$; If $\ell(x^q_{\psi})=1$ and $\exists \vec{m}_A\in D_A(q):$$\ell(e^{q,\llangle A\rrangle\Box\psi}_{A:\vec{m}_A}) = 1$, then by constraint (59) we have $\ell(x^q_{\llangle A\rrangle \Box\psi}) \geq 1$, so $\ell(x^q_{\llangle A\rrangle \Box\psi}) = 1$;
		
		The above facts means $\ell(x^q_{\llangle A\rrangle\Box\psi})=1$ 
		
		\vspace*{2mm}
		\noindent iff $\ell(x^q_{\psi})=1$ and $\exists \vec{m}_A\in D_A(q):$ $\ell(e^{q,\llangle A\rrangle\Box\psi}_{A:\vec{m}_A}) = 1$, 
		
		\vspace*{2mm}
		\noindent iff $\ell(x^q_{\psi})=1$ and $\exists \vec{m}_A\in D_A(q):$($\ell(y^q_{A:\vec{m}_A})=0$ and $\ell(z^{q,\llangle A\rrangle\Box\psi}_{A:\vec{m}_A})=1$) (by lemma~\ref{lm-x16}.5) , 
		
		\vspace*{2mm}
		\noindent iff 
		$\ell(x^q_{\psi})=1$ and $\ell(x^q_{\llangle A\rrangle \bigcirc \llangle A\rrangle\Box\psi})=1$ (by the proof of the above item 4) , 
		
		\vspace*{2mm}
		\noindent iff $S\dag\eta_{\ell},q\vDash \psi$ and $S\dag\eta_{\ell},q\vDash \llangle A\rrangle \bigcirc \llangle A\rrangle\Box\psi$ (by inductive hypothesis) , 
		
		\vspace*{2mm}
		\noindent iff $S\dag\eta_{\ell},q\vDash \psi \wedge \llangle A\rrangle \bigcirc \llangle A\rrangle\Box\psi$ (by the ATL semantics) , 
		
		\vspace*{2mm}
		\noindent iff $S\dag\eta_{\ell},q\vDash\llangle A\rrangle\Box\psi$ (by lemma~\ref{lm-x16}.2) .
	\end{proof} 
	Finally, the following result follows from the above two lemmas. It shows that the proposed ILP correctly computes the allocation function.
	
	\begin{thm}\label{thm-17}~~
		\begin{itemize}
			\item[1)] If $\ell$ is a solution to {\sc ILP-Dom-SL}, then $\eta_\ell$ is a dominant social law under the bid profile $x$;
			\item[2)] If $\ell$ is a solution to {\sc ILP-Dom-IN-SL}, then $\eta_\ell$ is a dominant $(i,n)$-social law under the bid profile $x$.
		\end{itemize}	
	\end{thm}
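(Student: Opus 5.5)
The plan is a two-way correspondence between feasible ILP assignments and social laws. The objective value of each assignment will equal $g(\eta_\ell,x)$ for the social law it encodes. An optimal $\ell$ will then give an optimal $\eta_\ell$.

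\emph{Soundness.} Let $\ell$ be any feasible assignment of {\sc ILP-Dom-SL}. Constraint (\ref{ilp-c14}) ensures that for every $q,i$ at least one action of $\varepsilon_i(q)$ stays unrestricted. Hence $\eta_\ell(i,q)\subset\varepsilon_i(q)$, and $\eta_\ell\in\mathcal{SL}_S$. By Lemma~\ref{lm-c}, applied at $q=q_s$ to each $\varphi_j$ with $(\varphi_j,c_j)\in\mathcal{F}$, the first sum in the objective (\ref{ilp-object}) equals $\sum_{S\dag\eta_\ell,q_s\vDash\varphi_j}c_j=v_{\mathcal{F}}(S\dag\eta_\ell)$. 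This uses the Representation Lemma~\ref{lm-rl} to identify the valuation with $\sigma(\mathcal{F},\cdot)$. By Lemma~\ref{lm-x16}.1, the second sum equals $\sum_i\lambda_i(x_i)\sum_q|\eta_\ell(i,q)|$. So the objective value of $\ell$ is exactly $g(\eta_\ell,x)$.

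\emph{Completeness.} Take an arbitrary $\eta\in\mathcal{SL}_S$. Build $\ell_\eta$ as follows:
\begin{itemize}
\item set $y^q_{i:a}=1$ iff $a\in\eta_i(q)$;
\item set $y^q_{A:\vec m_A}$ to the disjunction of its components;
\item set $x^q_\varphi=1$ iff $S\dag\eta,q\vDash\varphi$;
\item set $s,z,e,r$ to the truth values prescribed by Lemma~\ref{lm-x16}.3--6.
\end{itemize}
Checking feasibility is routine, constraint by constraint: each group of linear inequalities is the standard linearization of an AND, OR or equivalence. The only substantive check is for the temporal constraints. There, feasibility follows from the ATL fixpoint characterizations $\llangle A\rrangle\psi\mathcal{U}\chi\leftrightarrow\chi\vee(\psi\wedge\llangle A\rrangle\bigcirc\llangle A\rrangle\psi\mathcal{U}\chi)$ and $\llangle A\rrangle\Box\psi\leftrightarrow\psi\wedge\llangle A\rrangle\bigcirc\llangle A\rrangle\Box\psi$, which the true semantics satisfies. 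Since $\eta_{\ell_\eta}=\eta$, soundness shows that $\ell_\eta$ has objective value $g(\eta,x)$.

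Combining both directions: if $\ell$ is optimal, then for every $\eta$ we have $g(\eta_\ell,x)=\mathrm{obj}(\ell)\ge\mathrm{obj}(\ell_\eta)=g(\eta,x)$. So $\eta_\ell$ is a dominant social law, which proves item 1).

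For item 2), add constraint (\ref{ilp-c4'}). By Lemma~\ref{lm-x16}.1 it says exactly $\sum_q|\eta_\ell(i,q)|=n$, i.e.\ $\eta_\ell\in\mathcal{SL}^{(i,n)}_S$. Conversely, $\ell_\eta$ satisfies (\ref{ilp-c4'}) whenever $\eta\in\mathcal{SL}^{(i,n)}_S$. The same argument, restricted to this class, shows $\eta_\ell$ maximizes $g$ over $\mathcal{SL}^{(i,n)}_S$, i.e.\ it is a dominant $(i,n)$-social law.

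The main obstacle I expect is the soundness step, which rests on Lemma~\ref{lm-c} for $\mathcal{U}$ and $\Box$. The local fixpoint constraints alone admit non-least (for $\mathcal{U}$) or non-greatest (for $\Box$) solutions. For example, on a cycle of $\psi$-states an assignment could set $x^q_{\llangle A\rrangle\psi\mathcal{U}\chi}=1$ spuriously. If Lemma~\ref{lm-c} does not already exclude this, I would first try to show that such spurious assignments never help the objective. Failing that, I would strengthen the encoding with ranking variables for $\mathcal{U}$ that enforce well-foundedness of the witnessing strategy. Taking Lemma~\ref{lm-c} as given, as the paper states it, the theorem follows directly by the argument above.
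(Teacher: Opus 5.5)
Your argument follows the paper's route. Constraint (\ref{ilp-c14}) turns feasible assignments into social laws, Lemma~\ref{lm-c} makes the objective of $\ell$ equal to $g(\eta_\ell,x)$, and optimality transfers. Your explicit construction of $\ell_\eta$ is more careful than the paper, which only asserts a ``one-to-one correspondence''. That direction is sound, because the true semantics satisfies the one-step unfoldings.

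The problem is the soundness direction, and the obstacle you flagged is real: Lemma~\ref{lm-c} is false for the ILP as written, so your proof and the paper's both break at that step. Constraints (\ref{ilp-c27})--(\ref{ilp-c33}) and (\ref{ilp-c34})--(\ref{ilp-c36}) impose only the fixpoint equations, since the $e$-variables point back to the $\mathcal{U}$- or $\Box$-variable itself at successor states. Nothing selects the least fixpoint for $\mathcal{U}$ or the greatest for $\Box$. The paper's proof of Lemma~\ref{lm-c} hides this by invoking the ``inductive hypothesis'' for $\llangle A\rrangle\bigcirc\llangle A\rrangle\psi\mathcal{U}\chi$. That is the very formula being proved, at successor states, so the induction is not well-founded.

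Your first fallback (spurious assignments never help the objective) cannot work: the weights are positive, so the maximizer picks whichever fixpoint pays most. Here is a counterexample. Take one agent and states $q_s,q_g$ with $\pi(q_s)=\{p\}$ and $\pi(q_g)=\{r\}$. Let $\varepsilon_1(q_s)=\{a,b\}$ with $\delta(q_s,a)=q_s$ and $\delta(q_s,b)=q_g$, and give $q_g$ a single self-loop action. Set $\mathcal{F}=\{(\llangle\rrangle p\,\mathcal{U}\,r,10)\}$ and $\lambda_1(x_1)=1$.

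The true values are $0$ for the empty law (the path $q_sq_sq_s\ldots$ never reaches $r$), $9$ for restricting $a$ at $q_s$, and $-1$ for restricting $b$. Yet with no restriction, setting the $\mathcal{U}$-variable to $1$ at both states satisfies every constraint. At $q_s$ we have $x_p=1$, and both responses lead to states whose $\mathcal{U}$-variable is $1$, so $s=z=e=r=1$. This gives objective $10$, so every optimal $\ell$ has $\eta_\ell=\emptyset$, which is not dominant; item 1) is false for this ILP. With $n=1$, the ILP likewise ties the $b$-restriction (true value $-1$) with the $a$-restriction, so item 2) fails too.

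Dually, setting every $x^q_{\llangle A\rrangle\Box\psi}$ to $0$ is always feasible: some $\bar{A}$-response stays unrestricted, which forces $e=0$. So a feature such as $\neg\llangle A\rrangle\Box\psi$ is collected for free.

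Only your second fallback can rescue the statement, and it must be extended to $\Box$. One way is to unroll each fixpoint into layers $x^{q,(0)},\dots,x^{q,(|Q|)}$:
\begin{itemize}
\item for $\mathcal{U}$, set $x^{q,(0)}=x^q_\chi$ and $x^{q,(k+1)}=x^q_\chi\vee(x^q_\psi\wedge\mathrm{Pre}^{(k)})$;
\item for $\Box$, set $x^{q,(0)}=x^q_\psi$ and $x^{q,(k+1)}=x^q_\psi\wedge\mathrm{Pre}^{(k)}$.
\end{itemize}
Here $\mathrm{Pre}^{(k)}$ is the existing $s/z/e$ gadget applied to layer $k$, and the formula's variable is identified with layer $|Q|$. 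Lemma~\ref{lm-c} then holds by induction on formula and layer, and your two-directional argument goes through unchanged.
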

	\begin{proof}
		1) Since $\ell$ assigns binary values to each $y^q_{i:a}$, and if $\ell$ is a solution to {\sc ILP-Dom-SL}($S, \mathcal{F}, f_1,...,f_k, x$), then it must satisfy constraint (27) which guarantees there is at least one available actions for each agent in each state, therefore there is a one-to-one correspondence between $\{\eta_\ell\}$ and social laws in $\mathcal{SL}_S$. Moreover, lemma \ref{lm-c} guarantees $\sum_{(\varphi_i,c_i)\in\mathcal{F}}c_j\cdot \ell(x^{q_s}_{\varphi_j})-\sum_{i\in Ag}\sum_{q\in Q}\sum_{a\in \epsilon_i(q)}(x_i + \frac{F_i(x_i)}{f_i(x_i)}) \ell(y^q_{i:a}) = g_(\eta_\ell,x)$.  So, this ILP basically tries to search in $\mathcal{SL}_S$ for the social law that maximizes $g(\eta,x)$. Thus, $\eta_\ell$ is a dominant social law under the bid profile $x$.
		
		\vspace*{2mm}
		2) If $\ell$ is a solution to {\sc ILP-Dominant-IN-SL($S, \mathcal{F}, f_1,...,f_k, x, i, n$)}, then the additional constraint~(60) guarantees that there is a one-to-one correspondence between $\{\eta_\ell\}$ and possible $(i,n)$-social laws, i.e., $\mathcal{SL}^{(i,n)}_S$. So, now this ILP basically tries to search in $\mathcal{SL}^{(i,n)}_S$ for the $(i,n)$-social law that maximizes $g(\eta,x)$. Therefore, $\eta_\ell$ is a dominant $(i,n)$-social law under the bid profile $x$.
	\end{proof}

	\begin{thm}\label{thm-cipl}
		Both {\sc ILP-Dom-SL} and {\sc ILP-Dom-IN-SL} can be generated in $\mathcal{O}(|Q|\cdot t\cdot l^2)$ time, where $|Q|$, $t$ are respectively the state number and state transition number of the given structure, and $l$ is the total length of the formulas in the given feature set.
	\end{thm}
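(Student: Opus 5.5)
The bound will come from counting variables and constraints of {\sc ILP-Dom-SL} family by family and charging each family to a product of $|Q|$, $t$ and $l$. Each variable and constraint has constant description size, except the summation constraints, whose size is charged separately. {\sc ILP-Dom-IN-SL} adds only constraint~(\ref{ilp-c4'}), which has $\sum_{q}|\varepsilon_i(q)|\le t$ terms, so it suffices to treat {\sc ILP-Dom-SL}. Two size facts are used throughout. First, $|cl(\mathcal{F})|\le l$, and the number of distinct operators $\llangle A\rrangle$ occurring in $cl(\mathcal{F})$ is at most $l$. Second, for a fixed state $q$ and coalition $A$, $|D_A(q)|\le|D(q)|$ and $|D_A(q)|\cdot|D_{\bar A}(q)|=|D(q)|$, while $\sum_q|D(q)|=t$ (the transitions are exactly the pairs $(q,\vec{m})$ with $\vec{m}\in D(q)$).

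\textbf{Step 1 (cheap families).} The variables $x^q_\varphi$ and the constraints (\ref{ilp-c1}), (\ref{ilp-c15})--(\ref{ilp-c20}) number $O(|Q|\cdot l)$. The variables $y^q_{i:a}$ and the constraints (\ref{ilp-c2}), (\ref{ilp-c14}) have total size $O(\sum_q\sum_i|\varepsilon_i(q)|)=O(t)$, since each action of each agent occurs in some joint action (here $k$ is treated as fixed or absorbed).

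\textbf{Step 2 (coalition-action families).} For each operator $\llangle A\rrangle$ (or its complement) and each $q$ there are $|D_A(q)|\le|D(q)|$ variables $y^q_{A:\vec m_A}$. Constraints (\ref{ilp-c5}) and (\ref{ilp-c6}) add a factor $|A|\le k$. Summing over $q$ and the at most $l$ operators gives $O(t\cdot l)$. The families $z,e$ and constraints (\ref{ilp-c4}), (\ref{ilp-c21})--(\ref{ilp-c25}) carry an extra index $\varphi\in cl(\mathcal{F})$, giving $O(t\cdot l^2)$. The sums in (\ref{ilp-c26}), (\ref{ilp-c29}), (\ref{ilp-c35}) have $|D_A(q)|$ terms each and are issued once per $(q,\text{formula})$, giving $O(t\cdot l)$. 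Constraints (\ref{ilp-c27})--(\ref{ilp-c33}), (\ref{ilp-c34}), (\ref{ilp-c36}) are similarly $O(t\cdot l)$.

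\textbf{Step 3 (the $s$-variables, the main cost).} The variables $s^{q,\varphi}_{A:\vec m_A,\vec m_{\bar A}}$ and constraints (\ref{ilp-c7})--(\ref{ilp-c11}) range over $q$, $\varphi$, $\llangle A\rrangle$ and pairs $(\vec m_A,\vec m_{\bar A})$. The number of such pairs at $q$ is exactly $|D(q)|$, so this family contributes $O(t\cdot l^2)$. Generating each such constraint also requires the successor $\delta(q,(\vec m_A,\vec m_{\bar A}))$. This is the main obstacle for the stated bound: if the transition table is stored as a list, each lookup is a scan costing up to $O(t)$, or $O(|D(q)|)$ per state. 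I would therefore not preindex $\delta$, and bound each lookup naively by $O(|Q|)$-ish work, or by a scan of the $q$-local transitions (at most $t$). That produces the extra factor and yields $O(|Q|\cdot t\cdot l^2)$ overall.

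Constraint (\ref{ilp-c12}) has $|D_{\bar A}(q)|$ terms per $(q,\varphi,A,\vec m_A)$, hence total $O(t\cdot l^2)$. Summing all families, the generation time is dominated by Step 3 and is $O(|Q|\cdot t\cdot l^2)$, as claimed.
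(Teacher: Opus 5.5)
Your family-by-family count is sound, and it is sharper than the paper's. The paper uses the same strategy of counting constraints per family, but it bounds $|D_A(q)|$ by the global quantity $t$ at \emph{every} state and then multiplies by $|Q|$. For example, it bounds the $z$-constraints by $3t\cdot|Q|\cdot|cl(\mathcal{F})|^2$. It also charges unit time to each constraint regardless of how many terms it contains. That per-state overestimate is where the product $|Q|\cdot t$ in the theorem comes from. You instead sum over states using $\sum_q|D(q)|=t$ and $|D_A(q)|\cdot|D_{\bar A}(q)|=|D(q)|$. You also account for the length of the summation constraints such as (\ref{ilp-c12}) and (\ref{ilp-c26}). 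This shows the whole ILP has size $\mathcal{O}(t\cdot l^2)$ for fixed $k$, which is more informative than the stated bound.

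The weak point is Step~3. There is no obstacle there and no extra factor to manufacture: an $\mathcal{O}(t\cdot l^2)$ bound is already $\mathcal{O}(|Q|\cdot t\cdot l^2)$ because $|Q|\geq 1$. Worse, the justification you offer would break the bound rather than match it. Scanning the $q$-local transitions costs $\Theta(|D(q)|)$ per lookup, so the $s$-family would cost about $\sum_q|D(q)|^2\cdot l^2$. For a single-state structure that is $\Theta(t^2 l^2)$, whereas the claimed bound is $\mathcal{O}(|Q|\cdot t\cdot l^2)=\mathcal{O}(t\cdot l^2)$. Nothing supports an ``$\mathcal{O}(|Q|)$-ish'' lookup either. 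Replace that paragraph with the standard assumption, implicit in the paper's unit-time-per-constraint accounting, that $\delta$ is given as a table indexed by $(q,\vec{m})$. Then $\delta(q,(\vec{m}_A,\vec{m}_{\bar A}))$ is obtained in $\mathcal{O}(k)=\mathcal{O}(1)$ time, your count yields $\mathcal{O}(t\cdot l^2)\subseteq\mathcal{O}(|Q|\cdot t\cdot l^2)$, and the proof is complete.
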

	\begin{proof}
		Since for any ATL formula $\psi$, the number of formulas in $cl(\psi)$ is $\mathcal{O}(l)$(Alur, Henzinger, and Kupferman 2002), based on equation~(23), we can obtain
		\begin{equation}
			|cl(\mathcal{F})| = |cl(\varphi_1)\cup...\cup cl(\varphi_n)|\leq |cl(\bigvee_{1\leq i\leq n}\varphi_i)| = \mathcal{O}(l)
		\end{equation}
		So, $|cl(\mathcal{F})|$, formulas of the shape $p$, $\neg\varphi$, $\varphi_1\vee\varphi_2$, $\llangle A\rrangle\bigcirc\varphi$, $\llangle A\rrangle\Box\varphi$ and $\llangle A\rrangle\varphi_1\mathcal{U}\varphi_2$, and the number of mentioned coalition in $cl(\mathcal{F})$ are all $\mathcal{O}(l)$.
		Moreover, since in each state joint actions and state transitions correspond one by one, so
		\begin{equation}
			\sum_{q\in Q} |D(q)| = t
		\end{equation}
		Therefore, $|D_A(q)| = \mathcal{O}(t)$ holds for any $A\subseteq \{1,...,k\}$.
		
		According to the definition of {\sc ILP-Dom-SL}, the constraint set consists of constraints of the form (25)$\sim$ (59):    
		\begin{itemize}
			\item Constraints of the form (25) restrict the value range of all variables $x^q_{\varphi}$, and consists of $|Q|\cdot |cl(\mathcal{F})|$ constraints;
			\item Constraints of the form (26)$\sim$ (27) restrict the value range of all variables $y^q_{i:q}$, and their total number is at most 
			$$2\sum_{q\in Q} \sum_{1\leq i \leq k} |\varepsilon_i(q)|+ k|Q|\leq 2\sum_{q\in Q} \Pi_{1\leq i \leq k} |\varepsilon_i(q)|+ k|Q|$$
			\vspace*{-2mm}
			\begin{equation}
				\begin{split}
					= 2\sum_{q\in Q}|D(q)|+ k|Q| = 2t+ k|Q|
				\end{split}
			\end{equation}
			\item Constraints of the form (28)$\sim$ (30) restrict the value range of all variables $y^q_{A:\vec{m}_A}$, and their total number is at most 
			$$2\cdot|Q|\cdot 2|cl(\mathcal{F})|\cdot|D_A(q)|+k\cdot|Q|\cdot 2|cl(\mathcal{F})|\cdot|D_A(q)|$$
			\begin{equation}				
				\leq 6kt\cdot|Q|\cdot|cl(\mathcal{F})|			
			\end{equation}
			\item Constraints of the form (31)$\sim$ (34) restrict the value range of all variables $s^{q,\varphi}_{A:\vec{m}_A, \vec{m}_{\bar{A}}}$, and their total number is at most $4t\cdot |Q|\cdot |cl(\mathcal{F})|$;
			\item Constraints of the form (35)$\sim$ (37) restrict the value range of all variables $z^{q,\varphi}_{A:\vec{m}_A}$, and their total number is at most         
			\begin{equation}
				2\cdot|Q|\cdot |cl(\mathcal{F})|^2\cdot t+ |Q|\cdot |cl(\mathcal{F})|^2\cdot t = 3t\cdot|Q|\cdot|cl(\mathcal{F})|^2
			\end{equation}
			\item Constraints of the form (38)$\sim$ (39) restrict the value range of all variables $x^q_p$, and their total number is at most $|Q|\cdot |cl(\mathcal{F})|$;
			\item Constraints of the form (40) restrict the value range of all variables $x^q_{\neg \varphi}$, and their total number is at most  $|Q|\cdot |cl(\mathcal{F})|$;
			\item Constraints of the form (41)$\sim$ (43) restrict the value range of all variables $x^q_{\varphi\vee \chi}$, and their total number is at most $3\cdot |Q|\cdot |cl(\mathcal{F})|$;
			\item Constraints of the form (44)$\sim$ (47) restrict the value range of all variables $e^{q,\varphi}_{A:\vec{m}_A}$, and their total number is at most $4t\cdot |Q|\cdot |cl(\mathcal{F})|^2$;
			\item Constraints of the form (48)$\sim$ (49) restrict the value range of all variables $x^q_{\llangle A\rrangle \bigcirc\varphi}$, and their total number is at most $t\cdot |Q|\cdot |cl(\mathcal{F})| + |Q|\cdot |cl(\mathcal{F})|$;
			\item Constraints of the form (50)$\sim$ (53) restrict the value range of all variables $r^q_{\llangle A\rrangle \psi\mathcal{U}\chi}$, and their total number is at most $3\cdot |Q|\cdot |cl(\mathcal{F})| + t\cdot |Q|\cdot |cl(\mathcal{F})|$
			\item Constraints of the form (54)$\sim$ (56) restrict the value range of all variables $x^q_{\llangle A\rrangle \psi\mathcal{U}\chi}$, and their total number is at most $3\cdot |Q|\cdot |cl(\mathcal{F})|$
			\item Constraints of the form (57)$\sim$ (59) restrict the value range of all variables $x^{q}_{\llangle A\rrangle\Box\varphi}$, and their total number is at most $2\cdot |Q|\cdot |cl(\mathcal{F})| + t\cdot |Q|\cdot |cl(\mathcal{F})|$;
		\end{itemize}
		To sum up, the number of constraints is at most
		\begin{equation} 
			\begin{split}
				&11t\cdot |Q| \cdot |cl(\mathcal{F})|^2 + (6kt + 3t + 15)|Q|\cdot|cl(\mathcal{F})|\\
				&+ 2t + k|Q| + 1 \leq 39kt\cdot |Q|\cdot |cl(\mathcal{F})|^2		
			\end{split}
		\end{equation} 
		Since it takes unit time for generating each constraint and $k$ is a constant, the time required for generating {\sc ILP-Dom-SL} is $\mathcal{O}(|Q| \cdot t\cdot l^2)$. Moreover, {\sc ILP-Dominant-IN-SL} contains only 1 additional constraint of the form~(60) compared with {\sc ILP-Dom-SL}, it also takes $\mathcal{O}(|Q| \cdot t\cdot l^2)$ time to generate.
	\end{proof}

	\section{Basic Semantics of ATL}
	
	Given a {\sc ccgs} $S=\langle k, Q,  q_s, \Pi, \pi, \varepsilon, c, \delta \rangle$ Some related concepts are specified as follows:
	\begin{itemize}
		\item Basically, every agent set $A\subseteq\{1,...,k\}$ can be seen as a \emph{coalition} (with the agent set $\{1,...,k\}\setminus A$ represents the environment). The \emph{grand coalition} $\{1,...,k\}$ is denoted as $Ag$. We will sometimes use $\vec{m}$ to refer to a joint action $\langle j_1,...,j_k \rangle$ (of all the agents), use $\vec{m}_A$ to refer to a joint action of the coalition $A\subset Ag$ (called an \emph{$A$-action}) and use $D_A(q)$ to refer to all the possible $A$-actions at the state $q$.	%\footnote{For the joint actions of an arbitrary agent set $A\subseteq Ag$, we always consider them as action vectors arranged in order of increasing IDs of the corresponding agents in $A$, instead of action sets.}
		
		\item For two states $q$ and $q'$, $q'$ is called a \emph{successor} of $q$ if there is a joint action $\vec{m}\in D(q)$ such that $q'=\delta(q,\vec{m})$. 
		
		\item A \emph{computation} of $S$ is an infinite sequence $\lambda=q_0,q_1,q_2,...$ of states such that for all positions $i\geq 0$, the state $q_{i+1}$ is a successor of the state $q_i$. We refer to a computation starting from state $q$ as a $q$-computation.
		
		\item For a computation $\lambda$ and a position $i\geq 0$, we use
		$\lambda[i], \lambda[0,i], $and $\lambda[i,\infty]$ to denote,
		respectively, the $i$th state of $\lambda$, the finite prefix
		$q_0,q_1,...,q_i$ of $\lambda$, and the infinite suffix
		$q_i,q_{i+1},...$ of $\lambda$. 
		
		\item A \emph{strategy} for agent
		$a\in\Sigma$ is a function $f_a$ that maps every nonempty finite
		state sequence $\lambda\in Q^+$ to an action such that if the last
		state of $\lambda$ is $q$, then $f_a(\lambda)\in \{1,...,d_a(q)\}$.
		
		\item Finally, the \emph{outcomes} of a set of strategies $F_A$, called an
		\emph{$A$-strategy}, one for each agent in $A\subseteq Ag$, from a
		state $q\in Q$ is the set $out(q, F_A)$ of computations, such that a
		computation $\lambda=q_0,q_1,q_2,...$ is in $out(q, F_A)$ if $q_0=q$
		and there is a joint action $\langle j_1,...,j_k\rangle\in D(q_i)$
		such that (1) $j_a=f_a(\lambda[0,i])$ for all agents $a\in A$, and
		(2) $\delta(q_i,j_1,...,j_k)=q_{i+1}$.
	\end{itemize}
	
	If all the agents form a coalition and each of them specifies a strategy that gives the action choice for every possible execution history (for every state in the non-memory case), the state transition of the system will follow a single path. But if the coalition is formed by part of the agents, then they can only partially determine the action profile in the states along their execution paths. Due to the uncertainty of strategy selection of the agents outside the coalition, the possible system transition can be multiple different paths. As a matter of fact, a coalition can reliably bring about a property if and only if it can enforce a set of paths that all of which satisfy this property. The logic ATL is designed to specify and verify this kind of strategic reliability of coalitions.

	Given an ATL formula $\varphi$ and a {\sc ccgs} $S$, we use the notation $S, q\vDash \varphi$ to mean ``$\varphi$ is satisfied in the state $q$ of $S$". When $S$ is clear from the
	context, we write $q\vDash\varphi$. The relation $\vDash$ is
	defined, for all states $q$ of $S$, inductively as follows:
	\begin{itemize}
		\item For all $p\in\Pi$ we have $q\vDash p$ iff $p\in\pi(q)$.
		\item $q\vDash\neg\varphi$ iff $q\nvDash\varphi$.
		\item $q\vDash\varphi_1\vee\varphi_2$ iff $q\vDash\varphi_1$ or
		$q\vDash\varphi_2$.
		\item $q\vDash\llangle A\rrangle \bigcirc\varphi$ iff there exists
		a $A$-strategy, $F_A$, such that for all computations $\lambda\in
		out(q,F_A)$ we have $\lambda[1]\vDash\varphi$.
		\item $q\vDash\llangle A\rrangle \Box\varphi$ iff there exists
		a $A$-strategy, $F_A$, such that for all computations $\lambda\in
		out(q,F_A)$ and all positions $i\geq 0$, we have
		$\lambda[i]\vDash\varphi$.
		\item $q\vDash\llangle A\rrangle\varphi_1\mathcal{U}\varphi_2$ iff there exists
		a $A$-strategy, $F_A$, such that for all computations $\lambda\in
		out(q,F_A)$ there exists a position $i\geq 0$ such that
		$\lambda[i]\vDash\varphi_2$ and for all positions $0\leq j < i$ we
		have $\lambda[j]\vDash\varphi_1$.
	\end{itemize}
	
	Moreover, as an abbreviation, we write $\llangle A\rrangle\Diamond\varphi$ for
	$\llangle A\rrangle \top\mathcal{U}\varphi$. Based on the definition, basic syntax objects of ATL as follows intuitively capture the following meanings.
	$\llangle A\rrangle \bigcirc\varphi$ means ``\textit{coalition $A$ can reliably make $\varphi$ satisfied in the next state"};
	$\llangle A\rrangle \Box\varphi$ means ``\textit{coalition $A$ can reliably make $\varphi$ always satisfied in the subsequent states"};
	$\llangle A\rrangle\varphi_1\mathcal{U}\varphi_2$ means ``\textit{coalition $A$ can reliably make $\varphi_1$ satisfied in the subsequent states until arriving at a state satisfying $\varphi_2$"};
	$\llangle A\rrangle \Diamond\varphi$ means``\textit{coalition $A$ can reliably make $\varphi_1$ eventually satisfied in one of the subsequent states"}.

	%Algorithm 2 and algorithm 3 generate the corresponding ILPs in polynomial-time (according to theorem~\ref{thm-cipl}) and can be efficiently handled based on current ILP solvers, and according to theorem~\ref{thm-17} they correctly find out the dominant social law and dominant $(i,n)$-social law. Moreover, we have already shown that algorithm 1 correctly determines the payment to each agent via calling the above 2 algorithms. Therefore, via designing algorithm 1$\sim$3, we have proposed an effective way to compute the proposed mechanism. 

\end{document}